\newcommand{\typeof}{0}
\documentclass[10pt,a4paper]{article}

\usepackage{ifthen}
\newcommand{\condinc}[2]{\ifthenelse{\equal{\typeof}{0}}{#1}{#2}}


\usepackage{proof}
\usepackage{amsmath}
\usepackage{subfigure}
\usepackage{epsf}
\usepackage{graphics}
\usepackage{wrapfig}
\usepackage{mathrsfs}
\usepackage{url}
\condinc{\usepackage{a4wide}}{}
\usepackage{calc}
\usepackage{amsmath,amssymb,amsfonts,mathrsfs,latexsym,stmaryrd}
\usepackage[all]{xy}


\newcommand{\funone}{\mathbf{f}}

\newcommand{\conone}{\mathbf{c}}

\newcommand{\patone}{\mathbf{p}}
\newcommand{\pattwo}{\mathbf{r}}
\newcommand{\seqpone}{\alpha}
\newcommand{\seqptwo}{\beta}
\newcommand{\varone}{x}
\newcommand{\vartwo}{y}
\newcommand{\varthree}{z}
\newcommand{\varfour}{w}
\newcommand{\varfive}{f}
\newcommand{\varsix}{g}
\newcommand{\varseven}{h}
\newcommand{\lambdaone}{M}
\newcommand{\lambdatwo}{N}
\newcommand{\lambdathree}{L}
\newcommand{\lambdafour}{P}
\newcommand{\lambdafive}{Q}
\newcommand{\valueone}{V}
\newcommand{\valuetwo}{W}

\newcommand{\termone}{t}
\newcommand{\termtwo}{u}
\newcommand{\termthree}{v}
\newcommand{\termfour}{w}
\newcommand{\termfive}{d}
\newcommand{\termsix}{e}
\newcommand{\termseven}{f}
\newcommand{\cltermone}{X}
\newcommand{\cltermtwo}{Y}
\newcommand{\sigone}{\Sigma}

\newcommand{\vsone}{V}

\newcommand{\ordone}{\alpha}

\newcommand{\labelone}{\delta}

\newcommand{\rootone}{r}
\newcommand{\roottwo}{s}
\newcommand{\verone}{v}

\newcommand{\tgone}{G}
\newcommand{\tgtwo}{H}
\newcommand{\tgthree}{J}
\newcommand{\tgfour}{K}
\newcommand{\tgfive}{I}
\newcommand{\rrone}{\rho}

\newcommand{\homone}{\varphi}

\newcommand{\nat}[1]{\lceil #1\rceil}
\newcommand{\rewrite}[1]{\stackrel{#1}{\longrightarrow}}

\newcommand{\appTRS}{\mathbf{app}}
\newcommand{\cappTRSW}{\mathbf{capp}}
\newcommand{\constr}[2]{\mathbf{c}_{#1,#2}}
\newcommand{\LambdatoTRS}[1]{[#1]_{\Phi}}
\newcommand{\TRStolambda}[1]{\langle #1\rangle_{\Lambdaterms}}
\newcommand{\TRSonetolambda}[1]{\langle\!\langle #1\rangle\!\rangle_{\Lambdaterms}}
\newcommand{\TRSonetolambdaII}[1]{[#1]_{\Lambdaterms}}
\newcommand{\TRSonetolambdaIII}[1]{\langle\!|#1|\!\rangle_{\Lambdaterms}}
\newcommand{\TRSWtolambda}[1]{\langle #1\rangle_{\Lambdaterms}}
\newcommand{\TRStoGRS}[1]{[#1]_{\Theta}}
\newcommand{\TRStoGRSW}[1]{[#1]_{\Xi}}
\newcommand{\GRStoTRS}[1]{\langle #1\rangle_{\Phi}}
\newcommand{\LambdatoTRSW}[1]{[#1]_{\Psi}}
\newcommand{\LambdatoTRSWaux}[1]{\{#1\}_{\Psi}}
\newcommand{\errorterm}{\bot}
\newcommand{\arity}[1]{\mathit{ar}(#1)}
\newcommand{\dsize}[1]{||#1||}

\newcommand{\Variables}{\Upsilon}
\newcommand{\Lambdaterms}{\Lambda}

\newcommand{\Functions}[1]{\Sigma_{#1}}
\newcommand{\Rules}[1]{\mathcal{R}_{#1}}
\newcommand{\Graphs}[1]{\mathcal{G}_{#1}}
\newcommand{\TRS}{\Phi}
\newcommand{\TRSterms}{\mathcal{T}(\Phi)}
\newcommand{\TRSvarterms}{\mathcal{V}(\Phi,\Variables)}
\newcommand{\TRSconterms}{\mathcal{C}(\Phi)}
\newcommand{\TRSone}{\Xi}

\newcommand{\TRStermsp}[1]{\mathcal{T}(#1)}
\newcommand{\TRSpatsp}[1]{\mathcal{P}(#1,\Variables)}
\newcommand{\TRSvartermsp}[1]{\mathcal{V}(#1,\Variables)}
\newcommand{\TRScontermsp}[1]{\mathcal{C}(#1)}
\newcommand{\GRS}{\Theta}
\newcommand{\GRSW}{\Xi}
\newcommand{\CtoCG}[1]{[#1]_{\Graphs{}}}
\newcommand{\CGtoC}[1]{\langle #1\rangle_{\Rules{}}}
\newcommand{\TRSW}{\Psi}
\newcommand{\TRSWterms}{\mathcal{T}(\Psi)}
\newcommand{\TRSWvarterms}{\mathcal{V}(\Psi,\Variables)}
\newcommand{\TRSWconterms}{\mathcal{C}(\Psi)}
\newcommand{\domain}[1]{\mathit{dom}(#1)}
\newcommand{\srrone}{\Rules{}}

\newcommand{\sgrone}{\Graphs{}}
\newcommand{\Time}[1]{\mathit{Time}_v(#1)}
\newcommand{\Timew}[1]{\mathit{Time}_h(#1)}

\newcommand{\TMone}{\mathcal{M}}

\newcommand{\rewrTRS}{\rightarrow}
\newcommand{\rewrTRSW}{\rightarrow}
\newcommand{\rewr}[1]{\rightarrow_{#1}}
\newcommand{\rewrlambdav}{\rightarrow_v}
\newcommand{\rewrlambdah}{\rightarrow_h}
\newcommand{\rewrgraph}{\rightarrow}

\newcommand{\subgr}[2]{#1\downarrow #2}

\newcommand{\cod}[1]{\ulcorner #1\urcorner}

\condinc{\newcommand{\N}{\mathbb{N}}}{\newcommand{\N}{\mathbb{N}}}

\newcommand{\FV}[1]{\mathtt{FV}(#1)}
\newcommand{\length}[1]{|#1|}
\newcommand{\plength}[2]{|#1|_{#2}}


\condinc{
\newtheorem{lemma}{Lemma}
\newtheorem{proposition}{Proposition}
\newtheorem{theorem}{Theorem}
\newtheorem{corollary}{Corollary}
\newenvironment{proof}{\begin{trivlist}
       \item[\hskip \labelsep {\bfseries Proof.}]}{\hfill $\Box$ \end{trivlist}}
\newtheorem{definition}{Definition}
\newtheorem{example}{Example}
}{}


\condinc{
\newenvironment{varitemize}
{
\begin{list}{\labelitemi}
{\setlength{\itemsep}{0.0mm}
 \setlength{\topsep}{0.0mm}
 \setlength{\parindent}{0.0mm}
 \setlength{\parskip}{0.0mm}
 \setlength{\parsep}{0.0mm}
 \setlength{\partopsep}{0.0mm}
 \setlength{\leftmargin}{15pt}
 \setlength{\labelsep}{5pt}
 \setlength{\labelwidth}{10pt}}}
{
 \end{list} 
}}{
\newenvironment{varitemize}
{
\begin{list}{\labelitemii}
{\setlength{\itemsep}{0.0mm}
 \setlength{\topsep}{0.0mm}
 \setlength{\parindent}{0.0mm}
 \setlength{\parskip}{0.0mm}
 \setlength{\parsep}{0.0mm}
 \setlength{\partopsep}{0.0mm}
 \setlength{\leftmargin}{15pt}
 \setlength{\labelsep}{5pt}
 \setlength{\labelwidth}{10pt}}}
{
 \end{list} 
}}

\newcounter{number}

\newenvironment{varenumerate}
{\begin{list}{\arabic{number}.}
  {
   \usecounter{number}
   \setlength{\labelwidth}{4.0mm}
   \setlength{\labelsep}{2.0mm}
   \setlength{\itemindent}{0.0mm}
   \setlength{\itemsep}{0.0mm}
   \setlength{\topsep}{0.0mm}
   \setlength{\parskip}{0.0mm}
   \setlength{\parsep}{0.0mm}
   \setlength{\partopsep}{0.0mm}
  }
}
{\end{list}}

\condinc{
\title{On Constructor Rewrite Systems and the Lambda-Calculus}
\author{{Ugo Dal Lago\footnote{
Dipartimento di Scienze dell'Informazione, Universit\`a di Bologna, 
Mura Anteo Zamboni 7, 40127 Bologna, Italy.
\texttt{dallago@cs.unibo.it}
}
}
\and 
{Simone Martini\footnote{
Dipartimento di Scienze dell'Informazione, Universit\`a di Bologna,
Mura Anteo Zamboni 7, 40127 Bologna, Italy.
\texttt{martini@cs.unibo.it}
}
}}}
{
\title{On Constructor Rewrite Systems\\ and the Lambda-Calculus\thanks{The 
authors are partially supported by PRIN project ``CONCERTO'' and FIRB grant RBIN04M8S8, ``Intern. Inst. for Applicable Math.''}}
\author{Ugo Dal Lago\and Simone Martini}
\institute{{Dipartimento di Scienze dell'Informazione, Universit\`a di Bologna\\
            Mura Anteo Zamboni 7, 40127 Bologna, Italy\\
            \email{\{dallago,martini\}@cs.unibo.it
            }}}
}

\begin{document}
\maketitle
\begin{abstract}\noindent
We prove that orthogonal constructor term rewrite systems and lambda-calculus with weak 
(i.e., no reduction is allowed under  the scope of a lambda-abstraction)
call-by-value reduction can simulate each other with a linear overhead.
In particular, weak call-by-value beta-reduction can be simulated by an orthogonal 
constructor term rewrite system in the same number of reduction steps. 
Conversely, each reduction in an term rewrite system can be simulated by a constant number 
of beta-reduction steps. 
This is relevant to implicit computational complexity, because the number of beta steps 
to normal form is polynomially related to the actual cost (that is, as performed on a Turing 
machine) of normalization, under weak call-by-value reduction. Orthogonal constructor term rewrite systems 
and lambda-calculus are thus both polynomially related to Turing machines, taking as notion 
of cost their natural parameters.
\end{abstract}
\section{Motivations}
\par
Implicit computational complexity is a young research area, whose main aim is the description of complexity phenomena 
based on language restrictions, and not on external 
measure conditions or on explicit machine models. 
It borrows techniques and results from mathematical logic (model theory, recursion theory, and proof theory)
and in doing so it has allowed the
incorporation of aspects of computational complexity into areas such as formal methods in software development
and programming language design. The most developed area of implicit computational complexity is probably the
model theoretic one -- finite model theory being a very successful way to describe
complexity classes. In the design of  programming language tools (e.g., type systems), however, syntactical
techniques prove more useful. In the last years we have seen much work restricting
recursive schemata and developing general proof theoretical techniques to enforce
resource bounds on programs. 
Important achievements have been the characterizations of several complexity classes
by means of limitations of recursive definitions (e.g.,~\cite{Bellantoni92CC,Leivant95RRI}) and, more recently, by 
using the ``\emph{light}'' fragments of 
linear logic~\cite{Girard98ic}. 
Moreover, rewriting techniques such as recursive path orderings and
the interpretation method have recently been proved useful in the field~\cite{Marion00}.
By borrowing the terminology from software design technology, we may dub this 
area as implicit computational complexity \emph{in the large}, aiming at a broad, global view on complexity classes.
We may have also an implicit computational complexity \emph{in the small} ---
using logic to study single machine-free models of computation. Indeed, many models of computations do not come
with a natural cost model --- a definition of cost which is both intrinsically rooted in the model of 
computation, and, at the same time, it is polynomially related to the cost of implementing that
model of computation on a standard Turing machine. The main example is the $\lambda$-calculus: The most natural intrinsic
parameter of a computation is its number of beta-reductions, but this very parameter bears no
relation, in general, with the actual cost of performing that computation, since a beta-reduction may involve the duplication of arbitrarily big subterms\footnote{
In full beta-reduction, the size of the duplicated term is indeed arbitrary and does not depend on
the size of the original term the reduction started from. The situation is much different with weak 
reduction, as we will see.}.
What we call implicit computational complexity in the small, therefore, gives complexity significance to
notions and results for computation models where such natural cost measures do not exist, or are
not obvious. In particular, it looks for cost-explicit simulations between such computational models.

The present paper applies this viewpoint to the relation between $\lambda$-calculus and orthogonal (constructor) term rewrite systems.
We will prove that these two machine models simulate each other with a linear overhead. 
That each constructor term rewrite system could be simulated by $\lambda$-terms and beta-reduction is
well known, in view of the availability, in $\lambda$-calculus, of
fixed-point operators, which may be used to solve the mutual recursion expressed by 
first-order rewrite rules. 
Here (Section~\ref{Sect:CTR2L})
we make explicit the complexity content of this simulation, by showing that
any first-order rewriting of $n$ steps can be simulated by $kn$ beta steps, where
$k$ depends on the specific rewrite system but \emph{not} on the size of the involved terms.
Crucial to this result is the encoding of constructor terms using Scott's schema for numerals~\cite{Wadsworth80}.
Indeed, Parigot~\cite{Parigot89CSL} (see also~\cite{ParigotRoziere93}) shows that in the pure $\lambda$-calculus
Church numerals do not admit a predecessor working in a constant number of beta steps.
Moreover, 
Splawski and Urzyczyn~\cite{SplawskiU99} show that it is unlikely that our encoding could work in the typed context of System F.

Section~\ref{sect:CRS} studies the converse -- the simulation of (weak) $\lambda$-calculus reduction by means of
orthogonal constructor term rewrite systems. We give an encoding of $\lambda$-terms into a (first-order) constructor term rewrite 
system. We write   $\LambdatoTRS{\cdot}$ for the map returning a first-order term, given a $\lambda$-term; 
$\LambdatoTRS{M}$ is, in a sense, a complete defunctionalization of the $\lambda$-term $M$, 
where any $\lambda$-abstraction is represented by an atomic constructor. This is similar, 
although not technically the same, to the use of supercombinators (e.g.,~\cite{PJ87}).
We show that $\lambda$-reduction is simulated step by step by first-order rewriting 
(Theorem~\ref{theo:termreducible}).

As a consequence, taking the number of beta steps as a cost model for weak $\lambda$-calculus
is equivalent (up to a linear function) to taking the number of rewritings in orthogonal constructor term rewrite
systems. This is relevant to implicit computational complexity ``in the small'', because
the number of beta steps to normal form is polynomially related to the
actual cost (that is, as performed on a Turing machine) of normalization,
under weak call-by-value reduction. This has been established by 
Sands, Gustavsson, and Moran~\cite{Sands:Lambda02}, by a fine analysis of a $\lambda$-calculus implementation based on
a stack machine. Constructor term rewrite systems and $\lambda$-calculus
are thus both \emph{reasonable} machines (see the ``invariance thesis'' in~\cite{vanEmdeBoas90}),
taking as notion of cost their natural, intrinsic parameters.

As a byproduct, in \condinc{Section~\ref{Sect:GraphRep}}{Section~\ref{Sect:Byproduct}} we sketch a different proof
of the cited result in~\cite{Sands:Lambda02}. Instead of using a stack machine, 
we show how we could encode
constructor term rewriting in term graph rewriting. In term graph rewriting we
avoid the explicit duplication and substitution inherent to
rewriting (and thus also to beta-reduction) and, moreover, we exploit the possible sharing of subterms. 
A more in-depth study of the complexity of 
(constructor) graph rewriting and its relations with (constructor) term rewriting can be found in our~\cite{DLM09}.

\condinc{In Section~\ref{sect:HeadReduction}, we show how to obtain the same results of the previous sections when call-by-name replaces call-by-value as the underlying strategy in the
lambda-calculus.

This paper is an extended version of the one with the same title appeared in the proceedings of ICALP 2009~\cite{DLMicalp}. Besides including full proofs, it has an extended Section~\ref{Sect:GraphRep} and the new material of
Section~\ref{sect:HeadReduction}.}
{}
 
\condinc{}{An extended version of this paper with full proofs is available~\cite{longversion}, where we
describe also an extension of our results 
to $\lambda$-calculus with call-by-name.}
\section{Preliminaries}
The language we study is the pure untyped
$\lambda$-calculus endowed with weak (that is, we never reduce
under an abstraction)
call-by-value reduction. 
\begin{definition}
The following definitions are standard:
\begin{varitemize}
  \item
    \emph{Terms} are defined as follows:
    $$
    \lambdaone::=\varone\;|\;\lambda\varone.\lambdaone\;|\;\lambdaone\lambdaone ,
    $$
    where $\varone$ ranges a denumerable set $\Variables$.
    $\Lambdaterms$ denotes the set of all $\lambda$-terms.
    We assume the existence of a fixed, total, order on $\Variables$; this
    way $\FV{\lambdaone}$ will be a sequence (without repetitions) of variables, not a set. A term
    $\lambdaone$ is said to be \emph{closed} if $\FV{\lambdaone}=\varepsilon$,
    where $\varepsilon$ is the empty sequence.
  \item
    Values are defined as follows:
    $$
    \valueone::=\varone\;|\;\lambda\varone.\lambdaone .
    $$
  \item
    Weak call-by-value reduction is denoted by $\rewrlambdav$
    and is obtained by closing call-by-value reduction under
    any applicative context:
    $$
    \begin{array}{ccccccccc}
      \infer{(\lambda\varone.\lambdaone)\valueone\rewrlambdav\lambdaone\{\valueone/\varone\}}{}
      &&&&
      \infer{\lambdaone\lambdathree\rewrlambdav\lambdatwo\lambdathree}{\lambdaone\rewrlambdav\lambdatwo}
      &&&&
      \infer{\lambdathree\lambdaone\rewrlambdav\lambdathree\lambdatwo}{\lambdaone\rewrlambdav\lambdatwo}
    \end{array}
    $$
    Here $\lambdaone$ ranges over terms, while $\valueone$ ranges over values.
  \item
    The length $\length{M}$ of $M$ is defined as follows, by induction
    on $M$: $\length{\varone}=1$, $\length{\lambda\varone.\lambdaone}=\length{\lambdaone}+1$ 
    and $\length{\lambdaone\lambdatwo}=\length{\lambdaone}+\length{\lambdatwo}+1$.
\end{varitemize}
\end{definition}
Weak call-by-value reduction enjoys many nice properties. In particular, the
one-step diamond property holds and, as a consequence, the number of beta steps
to normal form (if any) is invariant on the reduction order~\cite{CIE2006} (this
justifies the way we defined reduction, which is slightly more general than
Plotkin's one~\cite{Plotkin75tcs}). It is
then meaningful to define $\Time{\lambdaone}$ as \emph{the} number of beta steps to
normal form (or $\omega$ if such a normal form does not exist). This cost model will
be referred to as the \emph{unitary} cost model, since each beta (weak call-by-value) reduction step
counts for $1$ in the global cost of normalization. Moreover, notice that 
$\alpha$-conversion is not needed during reduction of closed terms: if $\lambdaone\rewrlambdav\lambdatwo$
and $\lambdaone$ is closed, then the reduced redex will be in the form $(\lambda\varone.\lambdathree)\valueone$,
where $\valueone$ is a \emph{closed} value. As a consequence, arguments are always closed and open variables
cannot be captured.

The following lemma gives us a generalization of the fixed-point (call-by-value)
combinator (but observe the explicit limit $k$ on the reduction length, 
in the spirit of implicit computational complexity in the small):
\begin{lemma}\label{lemma:mfpc}
For every natural number $n$, there are terms $H_1,\ldots,H_n$ and a natural number $m$ such
that for any sequence of values $V_1,\ldots,V_n$ and for any $1\leq i\leq n$:
$$
H_iV_1\ldots V_n\rewrlambdav^k V_i(\lambda x.H_1V_1\ldots V_nx)\ldots(\lambda x.H_nV_1\ldots V_nx),
$$
where $k\leq m$.
\end{lemma}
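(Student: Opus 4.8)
The plan is to realise the $H_i$ as a family of mutually recursive call-by-value fixed-point combinators, built by the usual Turing-style self-application, with each recursive call guarded by an $\eta$-expansion (a ``thunk'') so that it is a value rather than a redex.

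First I would fix fresh variables $f_1,\ldots,f_n,v_1,\ldots,v_n,z$ and, for each $1\le i\le n$, define the closed value
$$
G_i \;=\; \lambda f_1\cdots f_n v_1\cdots v_n.\;
v_i\,(\lambda z.\,f_1 f_1\cdots f_n v_1\cdots v_n z)\cdots
(\lambda z.\,f_n f_1\cdots f_n v_1\cdots v_n z),
$$
that is, $G_i$ abstracts $2n$ variables and in its body applies the $i$-th of them to the $n$ thunks whose $j$-th member is $\lambda z.\,f_j f_1\cdots f_n v_1\cdots v_n z$. I then set $H_i \;:=\; G_i\,G_1\cdots G_n$.

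Next I would simply compute the reduction of $H_i V_1\cdots V_n = G_i\,G_1\cdots G_n\,V_1\cdots V_n$. Its application spine is $G_i$ followed by the $2n$ values $G_1,\ldots,G_n,V_1,\ldots,V_n$ (the $V_j$ are values by hypothesis, the $G_j$ by construction), and $G_i$ starts with $2n$ abstractions, so weak call-by-value reduction lets us contract, one after the other, the $2n$ leftmost redexes of the spine: each intermediate reduct is a partial application of $G_i$, hence again an abstraction, so the next redex is again of the permitted form $(\lambda x.M)W$ with $W$ a value, and no contraction ever takes place under an abstraction. After these $2n$ steps the substitution $[G_1/f_1,\ldots,G_n/f_n,V_1/v_1,\ldots,V_n/v_n]$ has been applied to the body, turning the $j$-th thunk into $\lambda z.\,G_j G_1\cdots G_n V_1\cdots V_n z = \lambda z.\,H_j V_1\cdots V_n z$ and the head $v_i$ into $V_i$, so that
$$
H_i V_1\cdots V_n \;\rewrlambdav^{2n}\;
V_i\,(\lambda z.\,H_1 V_1\cdots V_n z)\cdots(\lambda z.\,H_n V_1\cdots V_n z),
$$
which is the claimed right-hand side after renaming $z$ to $x$. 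Since this step count, $2n$, is independent of $i$ and of the chosen values, I would take $m := 2n$, which finishes the proof.

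I do not expect a genuine obstacle. The two points that need care are (i) checking that every contraction respects the weak call-by-value discipline, which is exactly the inductive observation above that in each spine redex the function part is an abstraction and the argument part is a value; and (ii) possible variable capture when substituting $V_j$ for $v_j$ inside $\lambda z.(\cdots v_j\cdots z)$, which is absorbed by the standard convention that substitution renames bound variables (and is vacuous anyway in all our applications, where the $V_j$ are closed). Note that $k=2n$ exactly, so $m=2n$ is in fact tight.
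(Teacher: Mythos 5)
Your proof is correct and is essentially identical to the paper's: the authors define $H_i\equiv M_iM_1\ldots M_n$ with $M_j\equiv\lambda x_1.\ldots.\lambda x_n.\lambda y_1.\ldots.\lambda y_n.\,y_j(\lambda z.x_1x_1\ldots x_ny_1\ldots y_nz)\ldots(\lambda z.x_nx_1\ldots x_ny_1\ldots y_nz)$ and take $m=2n$, which is exactly your $G_i$ up to renaming. Nothing to add.
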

\condinc{
\begin{proof}
The terms we are looking for are simply the following:
$$
H_i\equiv\lambdaone_i\lambdaone_1\ldots\lambdaone_n
$$
where, for every $1\leq j\leq n$,
$$
\lambdaone_j\equiv\lambda\varone_1.\ldots.\lambda\varone_n.\lambda\vartwo_1.\ldots.\vartwo_n.\vartwo_j
(\lambda\varthree.\varone_1\varone_1\ldots\varone_n\vartwo_1\ldots\vartwo_n\varthree)
\ldots
(\lambda\varthree.\varone_n\varone_1\ldots\varone_n\vartwo_1\ldots\vartwo_n\varthree).
$$
The natural number $m$ is simply $2n$.
\end{proof}
}
{
The proof of Lemma~\ref{lemma:mfpc} can be found in \cite{longversion}.
}

We will consider in this paper orthogonal constructor (term)
rewrite systems (CRS, see~\cite{TGRterese}). A constructor (term) rewrite system
is a pair $\TRSone=(\Functions{\TRSone},\Rules{\TRSone})$ where:
\begin{varitemize}
\item
  Symbols in the signature $\Functions{\TRSone}$ can be either
  \emph{constructors} or \emph{function symbols}, each with its arity.
  \begin{varitemize} 
    \item
      Terms in $\TRScontermsp{\TRSone}$ are those built
      from constructors and are called \emph{constructor terms}.
    \item
      Terms in $\TRSpatsp{\TRSone}$ are those built
      from constructors and variables and are called \emph{patterns}.
    \item
      Terms in $\TRStermsp{\TRSone}$ are those built
      from constructor and function symbols and are called \emph{closed terms}.
    \item
      Terms in $\TRSvartermsp{\TRSone}$ are those built
      from constructors, functions symbols and variables in $\Variables$ and are dubbed
      \emph{terms}.
    \end{varitemize}
\item
  Rules in $\Rules{\TRSone}$ are in the form $\funone(\patone_1,\ldots,\patone_n)\rewr{\TRSone}\termone$
  where $\funone$ is a function symbol, $\patone_1,\ldots,\patone_n\in\TRSpatsp{\TRSone}$
  and $t\in\TRSvartermsp{\TRSone}$.
  We here consider orthogonal rewrite systems only, i.e. we assume that no distinct two
  rules in $\Rules{\TRSone}$ are overlapping and that every variable appears at most
  once in the lhs of any rule in $\Rules{\TRSone}$. Moreover, we assume that reduction is
  call-by-value, i.e. the substitution triggering any reduction must assign
  constructor terms to variables. This restriction is anyway natural in
  constructor rewriting.
\end{varitemize}
For any term $\termone$ in a CRS, $\length{\termone}$ denotes 
the number of symbol occurrences\condinc{, while $\plength{\termone}{\funone}$ denotes
the number of occurrences of the symbol $\funone$ in $\termone$.}{.}
\section{From Lambda-Calculus to Constructor Term Rewriting}\label{sect:CRS}
\begin{definition}[The CRS $\TRS$]
The constructor rewrite system $\TRS$ is defined as a set of rules $\Rules{\TRS}$ over
an infinite signature $\Functions{\TRS}$. In particular:
\begin{varitemize} 
  \item
    The signature $\Functions{\TRS}$ includes the binary function symbol $\appTRS$ and
    constructor symbols $\constr{\varone}{M}$ for every $\lambdaone\in\Lambdaterms$ and
    every $\varone\in\Variables$. The arity of
    $\constr{\varone}{\lambdaone}$ is the length of $\FV{\lambda\varone.\lambdaone}$. 
    To every term $\lambdaone\in\Lambdaterms$ we can associate
    a term $\LambdatoTRS{\lambdaone}\in\TRSvarterms$ as follows:
    \begin{eqnarray*}
      \LambdatoTRS{\varone}&=&\varone;\\
      \LambdatoTRS{\lambda\varone.\lambdaone}&=&\constr{\varone}{\lambdaone}(\varone_1,\ldots,\varone_{n}),
        \mbox{ where $\FV{\lambda\varone.\lambdaone}=\varone_1,\ldots,\varone_n$};\\
      \LambdatoTRS{\lambdaone\lambdatwo}&=&\appTRS(\LambdatoTRS{\lambdaone},\LambdatoTRS{\lambdatwo}).
    \end{eqnarray*}
    Observe that if $\lambdaone$ is closed, then $\LambdatoTRS{\lambdaone}\in\TRSterms$. 
  \item
    The rewrite rules in $\Rules{\TRS}$ are all the rules in the following form:
    $$
    \appTRS(\constr{\varone}{\lambdaone}(\varone_1,\ldots,\varone_{n}),x)\rewrTRS\LambdatoTRS{M},
    $$
    where $\FV{\lambda\varone.\lambdaone}=\varone_1,\ldots,\varone_n$.
  \item
    A term $\termone\in\TRSterms$ is \emph{canonical} if either $\termone\in\TRSconterms$ or
    $\termone=\appTRS(\termtwo,\termthree)$ where $\termtwo$ and $\termthree$ are
    themselves canonical.
  \end{varitemize}
\end{definition}
Notice that the signature $\Functions{\TRS}$ contains an infinite amount of constructors.

\begin{example}
\condinc{}{\begin{sloppypar}}
Consider the $\lambda$-term $\lambdaone=(\lambda \varone.\varone\varone)(\lambda \vartwo.\vartwo\vartwo)$.
$\LambdatoTRS{\lambdaone}$ is
$\termone\equiv\appTRS(\constr{\varone}{\varone\varone},\constr{\vartwo}{\vartwo\vartwo})$.
Moreover, $\termone\rewrTRS\appTRS(\constr{\vartwo}{\vartwo\vartwo},\constr{\vartwo}{\vartwo\vartwo})\equiv\termtwo$, 
as expected. Finally, we have $\termtwo\rewrTRS\termtwo$.
\condinc{}{\end{sloppypar}}
\end{example}
To any term in $\TRSvarterms$ corresponds a $\lambda$-term in $\Lambdaterms$:
\begin{definition}
  To every term $\termone\in\TRSvarterms$ we can associate
  a term $\TRStolambda{\termone}\in\Lambdaterms$ as follows:
  \condinc{
  \begin{eqnarray*}
    \TRStolambda{\varone}&=&\varone\\
    \TRStolambda{\appTRS(\termtwo,\termthree)}&=&\TRStolambda{\termtwo}\TRStolambda{\termthree}\\
    \TRStolambda{\constr{\varone}{\lambdaone}(\termone_1,\ldots\termone_n)}&=&
    (\lambda\varone.\lambdaone)\{\TRStolambda{\termone_1}/\varone_1,\ldots,\TRStolambda{\termone_n}/\varone_n\}
  \end{eqnarray*}}{
    $\TRStolambda{\varone}=\varone$, 
    $\TRStolambda{\appTRS(\termtwo,\termthree)}=\TRStolambda{\termtwo}\TRStolambda{\termthree}$ and
    $\TRStolambda{\constr{\varone}{\lambdaone}(\termone_1,\ldots\termone_n)}=
    (\lambda\varone.\lambdaone)\{\TRStolambda{\termone_1}/\varone_1,\ldots,\TRStolambda{\termone_n}/\varone_n\}$}
  where $\FV{\lambda\varone.\lambdaone}=\varone_1,\ldots,\varone_n$.
\end{definition}
Canonicity holds for terms in $\TRS$ obtained as images of (closed) $\lambda$-terms via $\LambdatoTRS{\cdot}$. Moreover, canonicity
is preserved by reduction in $\TRS$:
\begin{lemma}\label{lemma:canonicity}
For every closed $\lambdaone\in\Lambdaterms$, $\LambdatoTRS{\lambdaone}$ is canonical.
Moreover, if $\termone$ is canonical and $\termone\rewrTRS\termtwo$, 
then $\termtwo$ is canonical.
\end{lemma}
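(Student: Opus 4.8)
The plan is to prove the two statements of Lemma~\ref{lemma:canonicity} separately, each by a straightforward structural induction.

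For the first statement, I would proceed by induction on the structure of the closed term $\lambdaone$. If $\lambdaone$ is a variable, it cannot be closed, so this case is vacuous. If $\lambdaone=\lambda\varone.\lambdatwo$, then $\LambdatoTRS{\lambdaone}=\constr{\varone}{\lambdatwo}(\varone_1,\ldots,\varone_n)$ where $\FV{\lambda\varone.\lambdatwo}=\varone_1,\ldots,\varone_n$; since $\lambdaone$ is closed, $n=0$, so $\LambdatoTRS{\lambdaone}=\constr{\varone}{\lambdatwo}$, which is a constructor term and hence canonical. If $\lambdaone=\lambdatwo\lambdathree$, then both $\lambdatwo$ and $\lambdathree$ are closed, so by the induction hypothesis $\LambdatoTRS{\lambdatwo}$ and $\LambdatoTRS{\lambdathree}$ are canonical, and therefore $\LambdatoTRS{\lambdaone}=\appTRS(\LambdatoTRS{\lambdatwo},\LambdatoTRS{\lambdathree})$ is canonical by the last clause in the definition of canonical.

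For the second statement, I would argue by induction on the structure of the canonical term $\termone$, using the fact that $\termone\rewrTRS\termtwo$ means some subterm of $\termone$ matches a left-hand side $\appTRS(\constr{\varone}{\lambdaone}(\varone_1,\ldots,\varone_n),x)$ and is replaced by the corresponding instance of $\LambdatoTRS{M}$. If $\termone\in\TRSconterms$ then no function symbol occurs in $\termone$, so no reduction is possible and the case is vacuous. Otherwise $\termone=\appTRS(\termthree,\termfour)$ with $\termthree,\termfour$ canonical. The redex contracted in $\termone\rewrTRS\termtwo$ is either inside $\termthree$, inside $\termfour$, or at the root. In the first two cases the induction hypothesis applies to the reduced subterm and canonicity of $\termtwo$ follows immediately since the other component is unchanged. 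In the root case, $\termthree$ must be of the form $\constr{\varone}{\lambdaone}(\ldots)$ — note that by call-by-value the arguments are constructor terms — and $\termtwo$ is the instance of $\LambdatoTRS{M}$ obtained by substituting those constructor-term arguments for $\varone_1,\ldots,\varone_n$. The key sublemma here is that substituting constructor terms for the free variables of $\LambdatoTRS{M}$ yields a canonical term; this follows by a side induction on $M$, inspecting the three clauses defining $\LambdatoTRS{\cdot}$, where the variable clause is discharged precisely because the substituted terms are constructor terms (hence canonical) and the application clause uses the definition of canonical directly.

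The only mildly delicate point is that root-case sublemma about substitution preserving canonicity, and in particular keeping track that call-by-value reduction in $\TRS$ forces the matched arguments to be constructor terms, so that substituting them into the body does not destroy canonicity at the leaves. Everything else is routine case analysis on where the contracted redex sits.
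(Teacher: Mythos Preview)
Your proposal is correct and follows essentially the same route as the paper: both hinge on the sublemma that substituting constructor terms for all free variables in $\LambdatoTRS{M}$ yields a canonical term, proved by induction on $M$. The only organisational difference is that the paper proves this sublemma upfront and then concludes preservation of canonicity in one line, whereas you wrap an explicit outer induction on the canonical term $\termone$ with a case split on the redex position before invoking the sublemma at the root; this extra layer is harmless and makes the context handling more explicit, but the mathematical content is the same.
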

\condinc{
\begin{proof}
$\LambdatoTRS{\lambdaone}$ is canonical for any $\lambdaone\in\Lambdaterms$ by
induction on the structure of $\lambdaone$ (which, by hypothesis, is either an
abstraction or an application $\lambdatwo\lambdathree$ where both
$\lambdatwo$ and $\lambdathree$ are closed). We can further prove that
$\termthree=\LambdatoTRS{\lambdaone}\{\termone_1/\varone_1,\ldots\termone_n/\varone_n\}$
is canonical whenever $\termone_1,\ldots,\termone_n\in\TRSconterms$ and
$\varone_1,\ldots,\varone_n$ includes all the variables in $\FV{\lambdaone}$:
\begin{varitemize}
\item
  If $\lambdaone=\varone_i$, then $\termthree=\termone_i$, which is
  clearly canonical.
\item
  If $\lambdaone=\lambdatwo\lambdathree$, then
  \begin{eqnarray*}
  \termthree&=&\LambdatoTRS{\lambdatwo\lambdathree}\{\termone_1/\varone_1,\ldots\termone_n/\varone_n\}\\
     &=&\appTRS\left(\LambdatoTRS{\lambdatwo}\{\termone_1/\varone_1,\ldots\termone_n/\varone_n\},
        \LambdatoTRS{\lambdathree}\{\termone_1/\varone_1,\ldots\termone_n/\varone_n\}\right)
  \end{eqnarray*}
  which is canonical, by IH.
\item
  If $\lambdaone=\lambda\vartwo.\lambdatwo$, then
  \begin{eqnarray*}
  \termthree&=&\LambdatoTRS{\lambda\vartwo.\lambdatwo}\{\termone_1/\varone_1,\ldots\termone_n/\varone_n\}\\
     &=&\constr{\vartwo}{\lambdatwo}(\varone_{i_1},\ldots,\varone_{i_m})\{\termone_1/\varone_1,\ldots\termone_n/\varone_n\}\\
     &=&\constr{\vartwo}{\lambdatwo}(\termone_{i_1},\ldots,\termone_{i_m})
  \end{eqnarray*}
  which is canonical, because each $\termone_i$ is in $\TRSconterms$.
\end{varitemize}
This implies the rhs of any instance of a rule in $\Rules{\TRS}$ is canonical. As a consequence,
$\termtwo$ is canonical whenever $\termone\rewrTRS\termtwo$ and $\termone$ is canonical.
This concludes the proof. 
\end{proof}}{}
For canonical terms, being a normal form is equivalent of being mapped to a normal
form via $\TRStolambda{\cdot}$. This is not true, in general: take as a counterexample
$\constr{\varone}{\vartwo}(\appTRS(\constr{\varthree}{\varthree},\constr{\varthree}{\varthree}))$,
which corresponds to $\lambda\varone.(\lambda\varthree.\varthree)(\lambda\varthree.\varthree)$ via $\TRStolambda{\cdot}$.
\begin{lemma}\label{lemma:NFcanonical}
A canonical term $\termone$ is a normal form iff 
$\TRStolambda{\termone}$ is a normal form.
\end{lemma}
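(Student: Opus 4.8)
The plan is to establish both implications at once by structural induction on the canonical term $\termone$. By the definition of canonicity there are exactly two cases, $\termone\in\TRSconterms$ and $\termone=\appTRS(\termtwo,\termthree)$ with $\termtwo$ and $\termthree$ canonical, and these are disjoint since $\appTRS$ is a function symbol, not a constructor.

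If $\termone\in\TRSconterms$, then $\termone$ is built only from constructors, and since every rewrite rule of $\TRS$ has $\appTRS$ at its head, $\termone$ contains no redex and is a $\TRS$-normal form. On the other side, the head of $\termone$ is some $\constr{\varone}{\lambdaone}$, so by the clause defining $\TRStolambda{\cdot}$ on constructor applications, $\TRStolambda{\termone}=(\lambda\varone.\lambdaone)\{\cdots\}$ is a $\lambda$-abstraction, hence a normal form for weak call-by-value reduction, since no reduction is ever performed under an abstraction. Thus both sides of the biconditional hold and the base case is done.

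For the inductive step, write $\termone=\appTRS(\termtwo,\termthree)$. The crux is that the two notions of ``being a redex at the root'' correspond under $\TRStolambda{\cdot}$. First I would note that for a canonical term $\termfour$ the following are equivalent: $\termfour\in\TRSconterms$; the head of $\termfour$ is a constructor; and $\TRStolambda{\termfour}$ is a value (indeed if $\termfour\in\TRSconterms$ then $\TRStolambda{\termfour}$ is an abstraction, as above, while if $\termfour=\appTRS(\cdot,\cdot)$ then $\TRStolambda{\termfour}$ is an application, which is never a value). Using this, and inspecting the single rule schema of $\TRS$, one gets that $\appTRS(\termtwo,\termthree)$ is a root $\TRS$-redex iff $\termtwo\in\TRSconterms$ and $\termthree\in\TRSconterms$ — here the call-by-value restriction is what forces the matched argument $\termthree$ to be a constructor term — which by the previous equivalence happens iff $\TRStolambda{\termtwo}$ is an abstraction and $\TRStolambda{\termthree}$ is a value, i.e. iff $\TRStolambda{\termtwo}\TRStolambda{\termthree}$ is a root redex of $\rewrlambdav$. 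It then remains to combine this with the evident decompositions: by context closure, $\appTRS(\termtwo,\termthree)$ is a $\TRS$-normal form iff it is not a root redex and both $\termtwo$ and $\termthree$ are normal forms; and, because $\rewrlambdav$ is closed under all applicative contexts and under no other context, $\TRStolambda{\termtwo}\TRStolambda{\termthree}$ is a $\lambda$-normal form iff it is not a root redex and both $\TRStolambda{\termtwo}$ and $\TRStolambda{\termthree}$ are normal forms. The induction hypothesis applied to the canonical terms $\termtwo$ and $\termthree$ now closes the case.

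The main point to get right is the correspondence of the call-by-value side conditions: on the $\TRS$ side a step requires the matched argument to be a constructor term, on the $\lambda$ side it requires it to be a value, and it is exactly here that canonicity is needed — in a canonical term every subterm with a constructor at the head is genuinely a constructor term, so nothing that $\TRStolambda{\cdot}$ would place under a $\lambda$ can conceal a $\TRS$-redex. The term $\constr{\varone}{\vartwo}(\appTRS(\constr{\varthree}{\varthree},\constr{\varthree}{\varthree}))$ displayed before the statement is precisely a non-canonical witness where this breaks down. Beyond that, the argument is routine structural bookkeeping.
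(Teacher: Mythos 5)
Your proof is correct, but it is organized quite differently from the paper's. The paper gives a short direct argument resting on two global structural facts, both left implicit: first, that a canonical normal form cannot contain $\appTRS$ at all (so its image under $\TRStolambda{\cdot}$ is an abstraction, hence weakly normal), and second, that a closed $\lambda$-application is never a normal form under weak call-by-value reduction (so if $\TRStolambda{\termone}$ is normal then $\termone$ cannot have $\appTRS$ at its root, hence lies in $\TRSconterms$ by canonicity). You instead run a single structural induction on the canonical term and prove the biconditional locally, via the observation that $\appTRS(\termtwo,\termthree)$ is a root redex exactly when $\termtwo,\termthree\in\TRSconterms$, exactly when $\TRStolambda{\termtwo}\TRStolambda{\termthree}$ is a root $\rewrlambdav$-redex. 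The two auxiliary facts the paper appeals to each need a small induction of their own anyway (the first uses precisely your redex correspondence to rule out $\appTRS$-nodes in a canonical normal form; the second is the infinite-descent argument that a closed normal application would need a non-value normal operand, itself a closed normal application), so your version is longer but more self-contained, and it makes explicit the point that the paper's counterexample $\constr{\varone}{\vartwo}(\appTRS(\constr{\varthree}{\varthree},\constr{\varthree}{\varthree}))$ is meant to illustrate: canonicity is exactly what prevents redexes from hiding beneath constructors. Either route is sound.
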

\condinc{
\begin{proof}
If a canonical $\termone$ is a normal form, then $\termone$ does not contain the function symbol $\appTRS$ and,
as a consequence, $\TRStolambda{\termone}$ is an abstraction, which is always a normal form.
Conversely, if $\TRStolambda{\termone}$ is a normal form, then
$\termone$ is not in the form $\appTRS(\termtwo,\termthree)$, because otherwise
$\TRStolambda{\termone}$ will be a (closed) application, which cannot be a normal form.
But since $\termone$ is canonical, $\termone\in\TRSconterms$, which only contains terms
in normal form.
\end{proof}}{}
\condinc{
The following substitution lemma will be useful later.
\begin{lemma}
For every term $\termone\in\TRSvarterms$ and every $\termone_1,\ldots,\termone_n\in\TRSconterms$,
$$
\TRStolambda{\termone\{\termone_1/\varone_1,\ldots,\termone_n/\varone_n\}}=
\TRStolambda{\termone}\{\TRStolambda{\termone_1}/\varone_1,\ldots,\TRStolambda{\termone_n}/\varone_n\}
$$
whenever $\varone_1,\ldots,\varone_n$ includes all the variables in $\termone$.
\end{lemma}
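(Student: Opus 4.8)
The plan is to prove the substitution lemma by a straightforward structural induction on the term $\termone\in\TRSvarterms$, after first strengthening the statement slightly so that the inductive hypothesis is usable. Concretely, I would prove: for every $\termone\in\TRSvarterms$, every list of distinct variables $\varone_1,\ldots,\varone_n$ containing all variables occurring in $\termone$, and every $\termone_1,\ldots,\termone_n\in\TRSconterms$,
$$
\TRStolambda{\termone\{\termone_1/\varone_1,\ldots,\termone_n/\varone_n\}}=
\TRStolambda{\termone}\{\TRStolambda{\termone_1}/\varone_1,\ldots,\TRStolambda{\termone_n}/\varone_n\}.
$$
The three cases of the induction mirror exactly the three clauses in the definition of $\TRStolambda{\cdot}$.

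First I would treat the variable case: if $\termone=\varone_i$ then the left-hand side is $\TRStolambda{\termone_i}$ and the right-hand side is $\varone_i\{\TRStolambda{\termone_1}/\varone_1,\ldots\}=\TRStolambda{\termone_i}$, so the two sides coincide. The application case $\termone=\appTRS(\termtwo,\termthree)$ is immediate from the definition of $\TRStolambda{\cdot}$ on $\appTRS$, which commutes with substitution, together with two applications of the induction hypothesis to $\termtwo$ and $\termthree$ (each of which only involves variables already among $\varone_1,\ldots,\varone_n$). The interesting case is $\termone=\constr{\vartwo}{\lambdatwo}(\termfour_1,\ldots,\termfour_m)$ with $\FV{\lambda\vartwo.\lambdatwo}=\varthree_1,\ldots,\varthree_m$: here substituting into $\termone$ only affects the argument terms, giving $\constr{\vartwo}{\lambdatwo}(\termfour_1\theta,\ldots,\termfour_m\theta)$ where $\theta=\{\termone_1/\varone_1,\ldots,\termone_n/\varone_n\}$, and applying $\TRStolambda{\cdot}$ yields $(\lambda\vartwo.\lambdatwo)\{\TRStolambda{\termfour_j\theta}/\varthree_j\}_{j}$. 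By the induction hypothesis on each $\termfour_j$ this equals $(\lambda\vartwo.\lambdatwo)\{(\TRStolambda{\termfour_j}\sigma)/\varthree_j\}_j$ where $\sigma$ is the $\lambda$-substitution on the right-hand side, and one then needs to check that this composed substitution is the same as first forming $\TRStolambda{\termone}=(\lambda\vartwo.\lambdatwo)\{\TRStolambda{\termfour_j}/\varthree_j\}_j$ and then applying $\sigma$.

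The one point that requires care — and which I expect to be the main (though still mild) obstacle — is this last substitution-composition step: we must know that the variables $\varthree_1,\ldots,\varthree_m$ being substituted inside $\lambda\vartwo.\lambdatwo$ do not clash with, and are not captured by, the outer substitution $\sigma=\{\TRStolambda{\termone_i}/\varone_i\}$, and conversely that the free variables of the $\TRStolambda{\termone_i}$ are fresh for $\lambda\vartwo.\lambdatwo$. This is exactly where the hypotheses pay off: each $\termone_i\in\TRSconterms$ is a \emph{constructor} term, and $\TRStolambda{\cdot}$ of a constructor term is closed — more precisely, as already used in the proof of Lemma~\ref{lemma:canonicity}, constructors $\constr{\varone}{\lambdaone}$ carry all of $\FV{\lambda\varone.\lambdaone}$ as explicit arguments, so the $\TRStolambda{\termone_i}$ have no free variables to be captured, and the usual substitution lemma of the $\lambda$-calculus applies without any $\alpha$-renaming. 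I would therefore invoke the standard $\lambda$-calculus substitution lemma (composition of substitutions, valid here because the relevant terms are closed and, as already observed in the Preliminaries, $\alpha$-conversion is not needed on closed terms) to conclude the constructor case, and hence the induction.
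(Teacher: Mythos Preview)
Your proposal is correct and follows exactly the same structure as the paper's own proof: a structural induction on $\termone$ with the three cases (variable, $\appTRS$, constructor) handled just as you describe. You are in fact more explicit than the paper about the substitution-composition step in the constructor case---the paper simply writes the equality $(\lambda\vartwo.\lambdatwo)\{\TRStolambda{\termfour_j}\sigma/\varthree_j\}_j=\bigl((\lambda\vartwo.\lambdatwo)\{\TRStolambda{\termfour_j}/\varthree_j\}_j\bigr)\sigma$ without comment, whereas you correctly identify that closedness of each $\TRStolambda{\termone_i}$ (for $\termone_i\in\TRSconterms$) is what makes this go through without capture issues.
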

\begin{proof}
By induction on $\termone$:
\begin{varitemize}
  \item
    If $\termone=\varone_i$, then 
    \begin{eqnarray*}
      \TRStolambda{\termone\{\termone_1/\varone_1,\ldots,\termone_n/\varone_n\}}&=&
      \TRStolambda{\varone_i\{\termone_1/\varone_1,\ldots,\termone_n/\varone_n\}}\\
      &=&\TRStolambda{\termone_i}\\
      &=&\varone_i\{\TRStolambda{\termone_1}/\varone_1,\ldots,\TRStolambda{\termone_n}/\varone_n\}\\
      &=&\termone\{\TRStolambda{\termone_1}/\varone_1,\ldots,\TRStolambda{\termone_n}/\varone_n\}.
    \end{eqnarray*}
  \item
    If $\termone=\appTRS(\termtwo,\termthree)$, then
    \begin{eqnarray*}
      \TRStolambda{\termone\{\termone_1/\varone_1,\ldots,\termone_n/\varone_n\}}&=&
      \TRStolambda{\appTRS(\termtwo,\termthree)\{\termone_1/\varone_1,\ldots,\termone_n/\varone_n\}}\\
      &=&\TRStolambda{\appTRS(\termtwo\{\termone_1/\varone_1,\ldots,\termone_n/\varone_n\},
        \termthree\{\termone_1/\varone_1,\ldots,\termone_n/\varone_n\})}\\
      &=&\TRStolambda{\termtwo\{\termone_1/\varone_1,\ldots,\termone_n/\varone_n\}}
      \TRStolambda{\termthree\{\termone_1/\varone_1,\ldots,\termone_n/\varone_n\}}\\
      &=&\TRStolambda{\termtwo}\{\TRStolambda{\termone_1}/\varone_1,\ldots,\TRStolambda{\termone_n}/\varone_n\}
      \TRStolambda{\termthree}\{\TRStolambda{\termone_1}/\varone_1,\ldots,\TRStolambda{\termone_n}/\varone_n\}\\
      &=&\TRStolambda{\termtwo}\TRStolambda{\termthree}
      \{\TRStolambda{\termone_1}/\varone_1,\ldots,\TRStolambda{\termone_n}/\varone_n\}\\
      &=&\TRStolambda{\appTRS(\termtwo,\termthree)}\{\TRStolambda{\termone_1}/\varone_1,\ldots,\TRStolambda{\termone_n}/\varone_n\}\\
      &=&\TRStolambda{\termone}\{\TRStolambda{\termone_1}/\varone_1,\ldots,\TRStolambda{\termone_n}/\varone_n\}.
    \end{eqnarray*}
  \item
    If $\termone=\constr{\vartwo}{\lambdatwo}(\termtwo_{1},\ldots,\termtwo_{m})$, then
    \begin{eqnarray*}
      \TRStolambda{\termone\{\termone_1/\varone_1,\ldots,\termone_n/\varone_n\}}&=&
      \TRStolambda{\constr{\vartwo}{\lambdatwo}(\termtwo_{1},\ldots,\termtwo_{m})
        \{\termone_1/\varone_1,\ldots,\termone_n/\varone_n\}}\\
      &=&\TRStolambda{\constr{\vartwo}{\lambdatwo}
        (\termtwo_{1}\{\termone_1/\varone_1,\ldots,\termone_n/\varone_n\},\ldots,
         \termtwo_{m}\{\termone_1/\varone_1,\ldots,\termone_n/\varone_n\})}\\
      &=&(\lambda\vartwo.\lambdatwo)\{
         \TRStolambda{\termtwo_{1}\{\termone_1/\varone_1,\ldots,\termone_n/\varone_n\}}/\varone_{i_1}\\
         & &\hspace{36pt},\ldots,\\
         & &\hspace{36pt}\TRStolambda{\termtwo_{m}\{\termone_1/\varone_1,\ldots,\termone_n/\varone_n\}}/\varone_{i_m}\}\\
      &=&(\lambda\vartwo.\lambdatwo)\{\TRStolambda{\termtwo_{1}}\{\TRStolambda{\termone_1}/\varone_1,\ldots,
         \TRStolambda{\termone_n}/\varone_n\}/\varone_{i_1}\\
         & &\hspace{36pt},\ldots,\\
         & &\hspace{36pt}\TRStolambda{\termtwo_m}\{\TRStolambda{\termone_1}/\varone_1,\ldots,\TRStolambda{\termone_n}/\varone_n\}/\varone_{i_m}\}\\
      &=&((\lambda\vartwo.\lambdatwo)\{\TRStolambda{\termtwo_{1}}/\varone_1,\ldots,\termtwo_{_m}/\varone_{i_1}\})
         \{\TRStolambda{\termone_1}/\varone_1,\ldots,\TRStolambda{\termone_n}/\varone_n\}\\
      &=&\TRStolambda{\constr{\vartwo}{\lambdatwo}(\termtwo_{1},\ldots,\termtwo_{m})}\{\TRStolambda{\termone_1}/\varone_1,\ldots,\TRStolambda{\termone_n}/\varone_n\}\\
      &=&\TRStolambda{\termone}\{\TRStolambda{\termone_1}/\varone_1,\ldots,\TRStolambda{\termone_n}/\varone_n\}.
    \end{eqnarray*}
\end{varitemize}
This concludes the proof.
\end{proof}
\begin{lemma}\label{lemma:invert}
For every $\lambda$-term $\lambdaone\in\Lambdaterms$, $\TRStolambda{\LambdatoTRS{\lambdaone}}=\lambdaone$.
\end{lemma}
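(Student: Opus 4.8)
The plan is to proceed by structural induction on $\lambdaone$, peeling off one layer of the definitions of $\LambdatoTRS{\cdot}$ and $\TRStolambda{\cdot}$ at each step. Of the three cases, the variable case and the application case are immediate, and the abstraction case is the one where the interaction between the two translations is non-trivial, so it deserves the most attention --- though, perhaps surprisingly, it will not even require the induction hypothesis.

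For $\lambdaone=\varone$, both translations act as the identity on variables, so $\TRStolambda{\LambdatoTRS{\varone}}=\TRStolambda{\varone}=\varone$. For $\lambdaone=\lambdatwo\lambdathree$, I would unfold $\LambdatoTRS{\lambdatwo\lambdathree}=\appTRS(\LambdatoTRS{\lambdatwo},\LambdatoTRS{\lambdathree})$, apply the $\appTRS$-clause of $\TRStolambda{\cdot}$ to obtain $\TRStolambda{\LambdatoTRS{\lambdatwo}}\,\TRStolambda{\LambdatoTRS{\lambdathree}}$, and finish with the induction hypothesis on $\lambdatwo$ and $\lambdathree$. For $\lambdaone=\lambda\varone.\lambdatwo$, write $\FV{\lambda\varone.\lambdatwo}=\varone_1,\ldots,\varone_n$; then $\LambdatoTRS{\lambda\varone.\lambdatwo}=\constr{\varone}{\lambdatwo}(\varone_1,\ldots,\varone_n)$, the constructor $\constr{\varone}{\lambdatwo}$ applied to the variables $\varone_1,\ldots,\varone_n$ themselves (each viewed as a term of $\TRSvarterms$). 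Applying the constructor clause of $\TRStolambda{\cdot}$ gives $(\lambda\varone.\lambdatwo)\{\TRStolambda{\varone_1}/\varone_1,\ldots,\TRStolambda{\varone_n}/\varone_n\}$, and since $\TRStolambda{\varone_i}=\varone_i$ this substitution is the identity, so the result is exactly $\lambda\varone.\lambdatwo$. The intuition is that the constructor symbol $\constr{\varone}{\lambdatwo}$ already records the entire abstraction verbatim, and decoding it simply hands $\lambda\varone.\lambdatwo$ back, wrapped in an identity substitution on its free variables.

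There is essentially no serious obstacle. The only point that must be handled with care is the bookkeeping of free-variable sequences in the abstraction case: one must check that $\LambdatoTRS{\cdot}$ and $\TRStolambda{\cdot}$ enumerate $\FV{\lambda\varone.\lambdatwo}$ in the same order --- which is guaranteed by the fixed total order assumed on $\Variables$ --- and that $\varone_1,\ldots,\varone_n$ is precisely the free-variable sequence of the abstraction (so in particular it does not contain $\varone$), so that the substitution $\{\varone_1/\varone_1,\ldots,\varone_n/\varone_n\}$ produced by $\TRStolambda{\cdot}$ is genuinely the identity on $\lambda\varone.\lambdatwo$.
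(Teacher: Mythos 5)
Your proof is correct and follows exactly the paper's own argument: structural induction on $\lambdaone$, with the variable and application cases handled by unfolding and the induction hypothesis, and the abstraction case reducing to the observation that $(\lambda\varone.\lambdatwo)\{\varone_1/\varone_1,\ldots,\varone_n/\varone_n\}=\lambda\varone.\lambdatwo$. The extra remark about the fixed ordering of $\FV{\lambda\varone.\lambdatwo}$ is a sensible piece of bookkeeping that the paper leaves implicit.
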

\begin{proof}
By induction on $\lambdaone$:
\begin{varitemize}
\item
  If $\lambdaone=\varone$, then 
  $$
    \TRStolambda{\LambdatoTRS{\lambdaone}}=\TRStolambda{\LambdatoTRS{\varone}}=\TRStolambda{\varone}=\varone.
  $$
\item
  If $\lambdaone=\lambdatwo\lambdathree$, then
  $$
  \TRStolambda{\LambdatoTRS{\lambdaone}}=
  \TRStolambda{\appTRS(\LambdatoTRS{\lambdatwo},\LambdatoTRS{\lambdathree})}=
  \TRStolambda{\LambdatoTRS{\lambdatwo}}\TRStolambda{\LambdatoTRS{\lambdathree}}=
  \lambdatwo\lambdathree.
  $$
\item
  If $\lambdaone=\lambda\vartwo.\lambdatwo$, then
  $$
    \TRStolambda{\LambdatoTRS{\lambdaone}}=
    \TRStolambda{\constr{\vartwo}{\lambdatwo}(\varone_{1},\ldots,\varone_{n})}=
    (\lambda\vartwo.\lambdatwo)\{\varone_1/\varone_1,\ldots,\varone_n/\varone_n\}=
    \lambda\vartwo.\lambdatwo=\lambdaone.
  $$
\end{varitemize}
This concludes the proof.
\end{proof}
The previous two lemmas implies that if $\lambdaone\in\Lambdaterms$, 
$\termone_1,\ldots,\termone_n\in\TRSconterms$ and $\varone_1,\ldots,\varone_n$ 
includes all the variables in $\FV{\lambdaone}$, then:
\begin{equation}\label{equat:commute}
\TRStolambda{\LambdatoTRS{\lambdaone}\{\termone_1/\varone_1,\ldots,\termone_n/\varone_n\}}=
\lambdaone\{\TRStolambda{\termone_1}/\varone_1,\ldots,\TRStolambda{\termone_n}/\varone_n\}.
\end{equation}}{}
Reduction in $\TRS$ can be simulated by reduction in the $\lambda$-calculus,
provided the starting term is canonical.  
\begin{lemma}\label{lemma:TRStolam}
If $\termone$ is canonical and $\termone\rewrTRS\termtwo$, then
$\TRStolambda{\termone}\rewrlambdav\TRStolambda{\termtwo}$.
\end{lemma}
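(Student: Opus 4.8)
The plan is to proceed by induction on the structure of the canonical term $\termone$, following the definition of the reduction relation $\rewrTRS$ and using the substitution lemma (equation~\eqref{equat:commute}) for the base case where the redex is at the root. Since $\termone$ is canonical, it is either a constructor term in $\TRSconterms$ — in which case it contains no $\appTRS$ symbol and hence admits no $\TRS$-reduction, so the hypothesis $\termone\rewrTRS\termtwo$ is vacuous — or it has the form $\appTRS(\termthree,\termfour)$ with $\termthree,\termfour$ themselves canonical. So the real work is entirely in the application case, and there the redex contracted in $\termone\rewrTRS\termtwo$ is either at the root or inside $\termthree$ or inside $\termfour$.

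First I would handle the root case. Here $\termone=\appTRS(\constr{\varone}{\lambdaone}(\termone_1,\ldots,\termone_n),\termthree)$ and, because reduction is call-by-value, the matching substitution assigns constructor terms to variables, so $\termthree\in\TRSconterms$ and each $\termone_i\in\TRSconterms$. The reduct is $\termtwo=\LambdatoTRS{\lambdaone}\{\termone_1/\varone_1,\ldots,\termone_n/\varone_n,\termthree/\varone\}$ where $\FV{\lambda\varone.\lambdaone}=\varone_1,\ldots,\varone_n$. By the substitution lemma (equation~\eqref{equat:commute}), $\TRStolambda{\termtwo}=\lambdaone\{\TRStolambda{\termone_1}/\varone_1,\ldots,\TRStolambda{\termone_n}/\varone_n,\TRStolambda{\termthree}/\varone\}$. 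On the other hand $\TRStolambda{\termone}=\big((\lambda\varone.\lambdaone)\{\TRStolambda{\termone_1}/\varone_1,\ldots,\TRStolambda{\termone_n}/\varone_n\}\big)\TRStolambda{\termthree}$, which $\beta$-contracts (a single $\rewrlambdav$ step, since $\TRStolambda{\termthree}$ is a value because $\termthree\in\TRSconterms$ maps to an abstraction) exactly to $\TRStolambda{\termtwo}$, after folding the substitution on $\varone$ into the outer one. One technical point to check is that the outer substitution on $\varone_1,\ldots,\varone_n$ does not capture $\varone$ and vice versa; this is fine because $\varone\notin\FV{\lambda\varone.\lambdaone}$ and arguments are closed constructor terms, so no variable capture occurs (as already noted in the preliminaries for closed-term reduction).

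For the two contextual cases, suppose the redex lies in $\termthree$, so $\termthree\rewrTRS\termthree'$ and $\termtwo=\appTRS(\termthree',\termfour)$. Since $\termthree$ is canonical, the induction hypothesis gives $\TRStolambda{\termthree}\rewrlambdav\TRStolambda{\termthree'}$, and then closing $\rewrlambdav$ under applicative contexts yields $\TRStolambda{\termthree}\TRStolambda{\termfour}\rewrlambdav\TRStolambda{\termthree'}\TRStolambda{\termfour}$, i.e. $\TRStolambda{\termone}\rewrlambdav\TRStolambda{\termtwo}$; the case where the redex is in $\termfour$ is symmetric, using the other applicative-context closure rule. The main obstacle — really the only subtle part — is the root case: one must be careful that the call-by-value restriction on $\TRS$-matching is exactly what guarantees the arguments are constructor terms, hence map to values under $\TRStolambda{\cdot}$, so that the single contraction step is a legitimate $\rewrlambdav$ step and not merely a full-$\beta$ step; and one must invoke equation~\eqref{equat:commute} with the right list of variables (all of $\FV{\lambdaone}$, which is $\varone_1,\ldots,\varone_n,\varone$ after the root redex exposes the body). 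Everything else is a routine unwinding of the definitions of $\TRStolambda{\cdot}$ and of the two reduction relations.
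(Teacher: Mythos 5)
Your proposal is correct and follows essentially the same route as the paper's proof: the heart of the argument in both cases is applying equation~(\ref{equat:commute}) to the contracted rule instance to show that the images of its left- and right-hand sides under $\TRStolambda{\cdot}$ are related by a single $\beta_v$-step, with the call-by-value/canonicity constraints guaranteeing that the argument maps to an abstraction (hence a value). The only difference is that you spell out the structural induction and the propagation through applicative contexts, which the paper leaves implicit.
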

\condinc{
\begin{proof}
Consider the (instance of the) rewriting rule which
turns $\termone$ into $\termtwo$. Let it be
$$
\appTRS(\constr{\vartwo}{\lambdaone}(\termone_1,\ldots,\termone_n),\termthree)\rewrTRS
\LambdatoTRS{\lambdaone}\{\termone_1/\varone_1,\ldots,\termone_n/\varone_n,\termthree/\vartwo\}.
$$
Clearly,
$$
\TRStolambda{\appTRS(\constr{\vartwo}{\lambdaone}(\termone_1,\ldots,\termone_n),\termthree)}=
  ((\lambda\vartwo.\lambdaone)\{\termone_1/\varone_1,\ldots,\termone_n/\varone_n\})\TRStolambda{\termthree}
$$
while, by~(\ref{equat:commute}):
$$
\TRStolambda{\LambdatoTRS{\lambdaone}\{\termone_1/\varone_1,\ldots,\termone_n/\varone_n,\termthree/\vartwo\}}
=\lambdaone\{\TRStolambda{\termone_1}/\varone_1,\ldots,\TRStolambda{\termone_n}/\varone_n,\TRStolambda{\termthree}/\vartwo\}
$$
which implies the thesis.
\end{proof}}{}
Conversely, call-by-value reduction in the $\lambda$-calculus can be simulated in $\TRS$:
\begin{lemma}\label{lemma:lamtoTRS}
If $\lambdaone\rewrlambdav\lambdatwo$, $\termone$ is canonical and $\TRStolambda{\termone}=\lambdaone$, then
$\termone\rewrTRS\termtwo$, where $\TRStolambda{\termtwo}=\lambdatwo$.
\end{lemma}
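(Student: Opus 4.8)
The plan is to argue by induction on the derivation of $\lambdaone\rewrlambdav\lambdatwo$, using two structural facts about canonical terms: a canonical term is either in $\TRSconterms$ or of the shape $\appTRS(\termfour,\termfive)$ with $\termfour,\termfive$ themselves canonical; and, since $\TRSterms$ and $\TRSconterms$ are built without variables, $\TRStolambda{\termone}$ is always a \emph{closed} $\lambda$-term. I would also use the commutation of $\TRStolambda{\cdot}$ with instantiation of pattern variables by constructor terms, i.e.\ equation~(\ref{equat:commute}).

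First I would observe that $\lambdaone$ is not a normal form (it reduces), so by Lemma~\ref{lemma:NFcanonical} the canonical term $\termone$ is not a normal form either; since every element of $\TRSconterms$ is a normal form, this forces $\termone=\appTRS(\termfour,\termfive)$ with $\termfour,\termfive$ canonical, and hence $\lambdaone=\TRStolambda{\termfour}\,\TRStolambda{\termfive}$ is an application. Then I split on which of the three rules of $\rewrlambdav$ produced $\lambdaone\rewrlambdav\lambdatwo$. In the two contextual cases the redex lies inside $\TRStolambda{\termfour}$ or inside $\TRStolambda{\termfive}$: in the first, $\TRStolambda{\termfour}\rewrlambdav\lambdathree$ and $\lambdatwo=\lambdathree\,\TRStolambda{\termfive}$, so the induction hypothesis applied to $\termfour$ gives $\termfour\rewrTRS\termfour'$ with $\TRStolambda{\termfour'}=\lambdathree$, and $\termtwo=\appTRS(\termfour',\termfive)$ works; the second case is symmetric, with the induction hypothesis applied to $\termfive$ (the reduction of the excerpt allows reducing the argument of an application with no assumption on the function part, so nothing extra is needed about $\termfour$).

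The interesting case is when the redex is at the root: $\TRStolambda{\termfour}=\lambda\vartwo.\lambdathree$, $\TRStolambda{\termfive}=\valueone$ a value, $\lambdatwo=\lambdathree\{\valueone/\vartwo\}$. Here $\termfour$, being canonical and mapped to an abstraction, cannot be an $\appTRS$, so $\termfour\in\TRSconterms$, and since the only constructors are the $\constr{\varone}{\lambdaone}$'s we get $\termfour=\constr{\vartwo}{\lambdafour}(\termone_1,\ldots,\termone_n)$ with $\termone_1,\ldots,\termone_n\in\TRSconterms$ and $\FV{\lambda\vartwo.\lambdafour}=\varone_1,\ldots,\varone_n$; unfolding $\TRStolambda{\cdot}$ and using that the $\TRStolambda{\termone_i}$ are closed identifies $\lambdathree$ with $\lambdafour\{\TRStolambda{\termone_1}/\varone_1,\ldots,\TRStolambda{\termone_n}/\varone_n\}$. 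Likewise $\termfive$ is canonical and maps to a value which, being closed, is an abstraction, so $\termfive\in\TRSconterms$ too. Hence the corresponding rule $\appTRS(\constr{\vartwo}{\lambdafour}(\varone_1,\ldots,\varone_n),\vartwo)\rewrTRS\LambdatoTRS{\lambdafour}$ of $\Rules{\TRS}$ fires on $\termone$ — the call-by-value side condition holds because $\termone_1,\ldots,\termone_n,\termfive$ are all constructor terms — giving $\termtwo=\LambdatoTRS{\lambdafour}\{\termone_1/\varone_1,\ldots,\termone_n/\varone_n,\termfive/\vartwo\}$. Then~(\ref{equat:commute}) yields $\TRStolambda{\termtwo}=\lambdafour\{\TRStolambda{\termone_1}/\varone_1,\ldots,\TRStolambda{\termone_n}/\varone_n,\valueone/\vartwo\}=\lambdathree\{\valueone/\vartwo\}=\lambdatwo$, the middle step being the routine observation that $\vartwo$ occurs neither in the closed terms $\TRStolambda{\termone_i}$ nor among the $\varone_i$, so the two instantiations may be performed in sequence.

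I expect this root case to be the only real obstacle: one has to check that canonicity forces $\termone$ to be exactly a legal left-hand side of a $\TRS$-rule — in particular that the argument is a \emph{constructor} term, as demanded by the call-by-value restriction, which works because a closed value is an abstraction and a canonical term mapping to an abstraction lies in $\TRSconterms$ — and one has to keep straight the $\alpha$-renaming and substitution bookkeeping relating the body $\lambdafour$ stored inside the constructor to the body $\lambdathree$ that appears in $\TRStolambda{\termfour}$. The two contextual cases are routine applications of the induction hypothesis, and canonicity of $\termtwo$, wherever it is needed later, comes for free from Lemma~\ref{lemma:canonicity}.
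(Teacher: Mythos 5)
Your proof is correct and follows essentially the same route as the paper's: the contextual cases are handled by locating the redex (which you formalize as an induction on the derivation of $\lambdaone\rewrlambdav\lambdatwo$, where the paper just speaks of the ``corresponding subterm'' of $\termone$), and the root case is identical --- canonicity forces the function part into $\TRSconterms$ with shape $\constr{\vartwo}{\lambdafour}(\termone_1,\ldots,\termone_n)$, the argument is a constructor term because it maps to a (closed) value, and equation~(\ref{equat:commute}) closes the computation. Your explicit check that the two substitutions compose is the same step the paper performs when it rewrites $\lambdafour\{\ldots,\TRStolambda{\termfour}/\varone\}$ as a sequential substitution.
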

\condinc{
\begin{proof}
Let $(\lambda\varone.\lambdathree)\valueone$ be the redex fired in $\lambdaone$ when rewriting
it to $\lambdatwo$. There must be a corresponding subterm $\termthree$ of $\termone$ such
that $\TRStolambda{\termthree}=(\lambda\varone.\lambdathree)\valueone$. Then
$$
\termthree=\appTRS(\constr{\varone}{\lambdafour}(\termone_1,\ldots,\termone_n),\termfour),
$$
where $\TRStolambda{\constr{\varone}{\lambdafour}(\termone_1,\ldots,\termone_n)}=\lambda\varone.\lambdathree$.
and $\TRStolambda{\termfour}=\valueone$. Observe that, by definition,
$$
\TRStolambda{\constr{\varone}{\lambdafour}(\termone_1,\ldots,\termone_n)}=
(\lambda\varone.\lambdafour)\{\TRStolambda{\termone_1}/\varone_1,\ldots,\TRStolambda{\termone_n}/\varone_n\}
$$
where $\FV{\lambdafour}=\varone_1,\ldots,\varone_n$. Since $\termone$ is canonical, 
$\termone_1,\ldots,\termone_n\in\TRSconterms$. Moreover, since $\valueone$ is a value,
$\termfour$ itself is in $\TRSconterms$.This implies
$$
\appTRS(\constr{\varone}{\lambdafour}(\termone_1,\ldots,\termone_n),\termfour)\rewrTRS
\LambdatoTRS{\lambdafour}\{\termone_1/\varone_1,\ldots,\termone_n/\varone_n,\termfour/\varone\}.
$$   
By~(\ref{equat:commute}):
\begin{eqnarray*}
\TRStolambda{\LambdatoTRS{\lambdafour}\{\termone_1/\varone_1,\ldots,\termone_n/\varone_n,\termfour/\varone\}}&=&
   \lambdafour\{\TRStolambda{\termone_1}/\varone_1,\ldots,\TRStolambda{\termone_n}/\varone_n,\TRStolambda{\termfour}/\varone\}\\
&=&(\lambdafour\{\TRStolambda{\termone_1}/\varone_1,\ldots,\TRStolambda{\termone_n}/\varone_n\})\{\TRStolambda{\termfour}/\varone\}\\
&=&(\lambda\varone.\lambdathree)\{\valueone/\varone\}.
\end{eqnarray*}
This concludes the proof.
\end{proof}}{}
The previous lemmas altogether imply the following theorem, by which $\lambda$-calculus
normalization can be mimicked (step-by-step) by reduction in $\TRS$:
\begin{theorem}[Term Reducibility]\label{theo:termreducible}
Let $\lambdaone\in\Lambdaterms$ be a closed term. The following
two conditions are equivalent:
\begin{varenumerate}
\item
  $\lambdaone\rewrlambdav^n\lambdatwo$ where $\lambdatwo$ is in normal form;
\item
  $\LambdatoTRS{\lambdaone}\rewrTRS^n\termone$ where
  $\TRStolambda{\termone}=\lambdatwo$ and $\termone$ is in normal form.
\end{varenumerate}
\end{theorem}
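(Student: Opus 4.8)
The plan is to derive the theorem directly from the simulation lemmas by a straightforward induction on $n$, carried out separately in each direction, while keeping track of two invariants along the reduction: that the term of $\TRS$ under consideration stays canonical, and that it is mapped by $\TRStolambda{\cdot}$ to the current $\lambda$-term.

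For the implication $(1)\Rightarrow(2)$, write the given reduction as $\lambdaone=\lambdaone_0\rewrlambdav\lambdaone_1\rewrlambdav\cdots\rewrlambdav\lambdaone_n=\lambdatwo$. I would set $\termone_0=\LambdatoTRS{\lambdaone}$, which is canonical by Lemma~\ref{lemma:canonicity} and satisfies $\TRStolambda{\termone_0}=\lambdaone_0$ by Lemma~\ref{lemma:invert}. Then, inductively, whenever I have a canonical $\termone_i$ with $\TRStolambda{\termone_i}=\lambdaone_i$ and $i<n$, Lemma~\ref{lemma:lamtoTRS} applied to $\lambdaone_i\rewrlambdav\lambdaone_{i+1}$ produces a $\termone_{i+1}$ with $\termone_i\rewrTRS\termone_{i+1}$ and $\TRStolambda{\termone_{i+1}}=\lambdaone_{i+1}$, and the preservation half of Lemma~\ref{lemma:canonicity} guarantees $\termone_{i+1}$ is again canonical. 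After $n$ steps I obtain $\LambdatoTRS{\lambdaone}\rewrTRS^n\termone_n$ with $\TRStolambda{\termone_n}=\lambdatwo$; since $\termone_n$ is canonical while $\lambdatwo$ is a normal form, Lemma~\ref{lemma:NFcanonical} yields that $\termone_n$ is a normal form, which is condition~(2).

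For $(2)\Rightarrow(1)$, write $\LambdatoTRS{\lambdaone}=\termone_0\rewrTRS\termone_1\rewrTRS\cdots\rewrTRS\termone_n=\termone$. Each $\termone_i$ is canonical, by the base case and the preservation clause of Lemma~\ref{lemma:canonicity}. Applying Lemma~\ref{lemma:TRStolam} to each step gives $\TRStolambda{\termone_i}\rewrlambdav\TRStolambda{\termone_{i+1}}$, so composing the $n$ steps and using $\TRStolambda{\termone_0}=\lambdaone$ (Lemma~\ref{lemma:invert}) we get $\lambdaone\rewrlambdav^n\TRStolambda{\termone}$. Finally, $\termone$ is canonical and a normal form, so $\TRStolambda{\termone}$ is a normal form by Lemma~\ref{lemma:NFcanonical}; taking $\lambdatwo:=\TRStolambda{\termone}$ gives condition~(1).

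There is no substantial obstacle once the lemmas are available; the only points that need care are (i) that the step counts match \emph{exactly} --- which holds because Lemmas~\ref{lemma:TRStolam} and~\ref{lemma:lamtoTRS} each turn a single reduction step into a single reduction step in the other system, so the induction transports $n$ to $n$ with no collapsing or duplication --- and (ii) that $\TRStolambda{\cdot}$ behaves as a simulation only on canonical terms, which is exactly why the canonicity invariant must be threaded through the whole induction rather than just asserted at the endpoints; the term $\constr{\varone}{\vartwo}(\appTRS(\constr{\varthree}{\varthree},\constr{\varthree}{\varthree}))$ exhibited before Lemma~\ref{lemma:NFcanonical} shows that without canonicity the equivalence of ``being a normal form'' on the two sides already fails.
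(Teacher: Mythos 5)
Your proposal is correct and follows essentially the same route as the paper's own proof: both directions are obtained by iterating Lemmas~\ref{lemma:lamtoTRS} and~\ref{lemma:TRStolam} respectively, threading the canonicity invariant via Lemma~\ref{lemma:canonicity}, anchoring the base case with Lemma~\ref{lemma:invert}, and transferring normal-formness at the end with Lemma~\ref{lemma:NFcanonical}. Your version merely makes the step-by-step induction (and the reason the step counts are preserved exactly) explicit where the paper compresses it into ``by applying Lemma~X $n$ times''.
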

\begin{proof}
Suppose $\lambdaone\rewrlambdav^n\lambdatwo$, where $\lambdatwo$ is in normal form. 
Then, by applying Lemma~\ref{lemma:lamtoTRS}, we obtain a term $\termone$ such that
$\LambdatoTRS{\lambdaone}\rewrTRS^n\termone$ and $\TRStolambda{\termone}=\lambdatwo$.
By Lemma~\ref{lemma:canonicity}, $\termone$ is canonical and, by Lemma~\ref{lemma:NFcanonical}, 
it is in normal form. Now, suppose $\LambdatoTRS{\lambdaone}\rewrTRS^n\termone$ where
$\TRStolambda{\termone}=\lambdatwo$ and $\termone$ is in normal form. By
applying $n$ times Lemma~\ref{lemma:TRStolam}, we obtain
$\TRStolambda{\LambdatoTRS{\lambdaone}}\rewrlambdav^n\TRStolambda{\termone}=\lambdatwo$.
But $\TRStolambda{\LambdatoTRS{\lambdaone}}=\lambdaone$ \condinc{by Lemma~\ref{lemma:invert}}{by an
easy induction on $\lambdaone$} and $\lambdatwo$ is a normal form 
by Lemma~\ref{lemma:NFcanonical}, since $\LambdatoTRS{\lambdaone}$ and $\termone$ are
canonical by Lemma~\ref{lemma:canonicity}.\condinc{}{\hfill$\Box$}
\end{proof}
There is another nice property of $\TRS$, that will be crucial in proving the main
result of this paper:
\begin{proposition}\label{prop:constred}
For every $\lambdaone\in\Lambdaterms$, for every $\termone$ with $\LambdatoTRS{\lambdaone}\rewrTRS^*\termone$
and for every occurrence of a constructor $\constr{\varone}{\lambdatwo}$ in $\termone$, $\lambdatwo$ is
a subterm of $\lambdaone$.
\end{proposition}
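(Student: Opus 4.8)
The plan is to prove the statement by a straightforward invariance induction on the length of the reduction $\LambdatoTRS{\lambdaone}\rewrTRS^*\termone$. Say that a term $\termtwo\in\TRSvarterms$ is \emph{$\lambdaone$-bounded} when, for every occurrence in $\termtwo$ of a constructor $\constr{\vartwo}{\lambdatwo}$, the body $\lambdatwo$ is a subterm of $\lambdaone$. The proposition then follows from two claims: (a) $\LambdatoTRS{\lambdaone}$ is $\lambdaone$-bounded; and (b) if $\termtwo$ is $\lambdaone$-bounded and $\termtwo\rewrTRS\termthree$, then $\termthree$ is $\lambdaone$-bounded. Granting these, a trivial induction on $n$ shows that $\LambdatoTRS{\lambdaone}\rewrTRS^n\termone$ implies $\termone$ is $\lambdaone$-bounded, which is exactly the assertion.

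I would establish (a) by induction on the structure of $\lambdaone$, reading off the clauses defining $\LambdatoTRS{\cdot}$. The variable case contributes no constructor at all. For $\lambdaone=\lambdatwo\lambdathree$ we have $\LambdatoTRS{\lambdaone}=\appTRS(\LambdatoTRS{\lambdatwo},\LambdatoTRS{\lambdathree})$, so every constructor occurrence already occurs in $\LambdatoTRS{\lambdatwo}$ or in $\LambdatoTRS{\lambdathree}$; by the induction hypothesis its body is a subterm of $\lambdatwo$ or of $\lambdathree$, hence a subterm of $\lambdaone$. For $\lambdaone=\lambda\vartwo.\lambdatwo$ we have $\LambdatoTRS{\lambdaone}=\constr{\vartwo}{\lambdatwo}(\varone_1,\ldots,\varone_n)$, whose only constructor occurrence is the head $\constr{\vartwo}{\lambdatwo}$, with body $\lambdatwo$, a subterm of $\lambdaone$; the arguments $\varone_i$ are plain variables and carry no constructor. (Here one only uses transitivity of ``is a subterm of''.)

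For (b), let the contracted redex be $\appTRS(\constr{\varone}{\lambdathree}(\termone_1,\ldots,\termone_n),\termfour)$ with $\FV{\lambda\varone.\lambdathree}=\varone_1,\ldots,\varone_n$, replaced in $\termthree$ by the contractum $\LambdatoTRS{\lambdathree}\{\termone_1/\varone_1,\ldots,\termone_n/\varone_n,\termfour/\varone\}$ (recall that every rule of $\TRS$ has a right-hand side of the form $\LambdatoTRS{L}$, and that call-by-value constructor rewriting forces $\termone_1,\ldots,\termone_n,\termfour\in\TRSconterms$). Every constructor occurrence of $\termthree$ then falls into exactly one of three cases: (i) it lies outside the position of the contractum, and is thus inherited unchanged from $\termtwo$; (ii) it lies inside the contractum and comes from one of the grafted constructor terms $\termone_1,\ldots,\termone_n,\termfour$; or (iii) it lies inside the contractum and comes from $\LambdatoTRS{\lambdathree}$ itself --- contracting the redex only grafts subterms at variable leaves of $\LambdatoTRS{\lambdathree}$, so (ii) and (iii) exhaust the constructors of the contractum. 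In cases (i) and (ii) the body is a subterm of $\lambdaone$ because $\termtwo$ is $\lambdaone$-bounded and each of $\termone_1,\ldots,\termone_n,\termfour$ is a subterm of $\termtwo$. In case (iii), claim (a) applied to $\lambdathree$ makes the body a subterm of $\lambdathree$; and since the head $\constr{\varone}{\lambdathree}$ of the redex occurs in $\termtwo$, $\lambdathree$ is itself a subterm of $\lambdaone$, so the body is too. Hence $\termthree$ is $\lambdaone$-bounded.

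The whole argument is routine; the only point needing any care is the case analysis in (b) --- in particular, checking that contracting a redex produces no constructor occurrence besides those already present in the $\lambda$-skeleton $\LambdatoTRS{\lambdathree}$ of the contractum and those carried by the substituted constructor terms --- and this is immediate from the uniform shape of the rules of $\TRS$. I do not foresee a genuine obstacle.
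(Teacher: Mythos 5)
Your proof is correct, and it follows exactly the route the paper indicates (the paper's own proof is just the one-line instruction ``proceed by induction on $n$'' for $\LambdatoTRS{\lambdaone}\rewrTRS^n\termone$). Your claims (a) and (b) are precisely the base case and inductive step of that induction, spelled out in full; the case analysis in (b) is the right one and the appeal to (a) for the constructors arising from the right-hand side $\LambdatoTRS{\lambdathree}$ is the key point.
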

\condinc{
\begin{proof}
Assume $\LambdatoTRS{\lambdaone}\rewrTRS^n\termone$ and proceed
by induction on $n$.
\end{proof}}{}
\begin{example}
Let us consider the $\lambda$-term $\lambdaone=(\lambda\varone.(\lambda \vartwo.\varone)\varone)
(\lambda\varthree.\varthree)$. Notice that 
$$
\lambdaone\rewrlambdav(\lambda \vartwo.(\lambda\varthree.\varthree))
(\lambda\varthree.\varthree)\rewrlambdav\lambda\varthree.\varthree.
$$
Clearly $\LambdatoTRS{\lambdaone}=\appTRS(\constr{\varone}{(\lambda\vartwo.\varone)\varone},
\constr{\varthree}{\varthree})$. Moreover:
$$
\appTRS(\constr{\varone}{(\lambda\vartwo.\varone)\varone},
\constr{\varthree}{\varthree})\rewrTRS \appTRS(\constr{\vartwo}{\varone}(\constr{\varthree}{\varthree}),
   \constr{\varthree}{\varthree})\rewrTRS\constr{\varthree}{\varthree}.
$$
For every constructor $\constr{\varfour}{\lambdatwo}$ occurring in any term in the previous
reduction sequence, $\lambdatwo$ is a subterm of $\lambdaone$. 
\end{example}

A remark on $\TRS$ is now in order. $\TRS$ is an infinite CRS, since $\Functions{\TRS}$ contains
an infinite amount of constructor symbols and, moreover, there are infinitely many
rules in $\Rules{\TRS}$. As a consequence, what we have presented here is an embedding
of the (weak, call-by-value) $\lambda$-calculus into an infinite (orthogonal) CRS. Consider, now,
the following scenario: suppose the $\lambda$-calculus is used to write a \emph{program} $\lambdaone$,
and suppose that inputs to $\lambdaone$ form an infinite set of $\lambda$-terms $\Theta$ which can 
anyway be represented by a finite set of constructors in $\TRS$. In this scenario,
Proposition~\ref{prop:constred} allows to conclude the existence of finite
subsets of $\Functions{\TRS}$ and $\Rules{\TRS}$ such that \emph{every} $\lambdaone\lambdatwo$
(where $\lambdatwo\in\Theta$) can be reduced via $\TRS$ by using only
constructors and rules in those \emph{finite} subsets. As a consequence, we can see the
above schema as one that puts any program $\lambdaone$ 
in correspondence to a \emph{finite} CRS. Finally, observe that assuming \emph{data} to
be representable by a finite number of constructors in $\TRS$ is reasonable.
Scott's scheme~\cite{Wadsworth80}, for example, allows to represent any term in a given
free algebra in a finitary way, e.g. the natural number $0$ becomes
$\nat{0}\equiv\constr{\vartwo}{\lambda\varthree.\varthree}$ while
$n+1$ becomes $\nat{n+1}\equiv\constr{\vartwo}{\lambda\varthree.\vartwo\varone}(\nat{n})$.
Church's scheme, on the other hand, does not have this property.
\section{From Constructor Term Rewriting to Lambda-Calculus}\label{Sect:CTR2L}
In this Section, we will show that any rewriting step of a constructor rewrite
system can be simulated by a fixed number of weak call-by-value beta-reductions. 

Let $\TRSone$ be an orthogonal constructor rewrite system over a finite signature $\Functions{\TRSone}$. Let
$\conone_1,\ldots,\conone_g$ be the constructors of $\TRSone$
and let $\funone_1,\ldots,\funone_h$ be the function symbols of
$\TRSone$. The following constructions work independently of $\TRSone$.

We will first concentrate on constructor terms, encoding them as
$\lambda$-terms using Scott's schema~\cite{Wadsworth80}.
Constructor terms can be easily put in correspondence with $\lambda$-terms
by way of a map $\TRSonetolambda{\cdot}$ defined by induction as follows:
$$
\TRSonetolambda{\conone_i(\termone_1\ldots,\termone_n)}\equiv
\lambda x_1.\ldots.\lambda x_g.\lambda y.x_i\TRSonetolambda{\termone_1}\ldots\TRSonetolambda{\termone_n}.
$$
This way constructors become functions:
$$
\TRSonetolambda{\conone_i}\equiv\lambda x_1.\ldots.\lambda x_{\arity{\conone_i}}.
 \lambda y_1.\ldots.\lambda y_g.\lambda z.y_ix_1\ldots x_{\arity{\conone_i}}.
$$
Trivially, $\TRSonetolambda{\conone_i}\TRSonetolambda{\termone_1}\ldots\TRSonetolambda{\termone_n}$
rewrites to $\TRSonetolambda{\conone_i(\termone_1\ldots\termone_n)}$ in
$\arity{\conone_i}$ steps. To represent an error value, we use the $\lambda$-term
$\errorterm \equiv
\lambda x_1.\ldots.\lambda x_g.\lambda y.y$. A $\lambda$-term
built in this way, i.e. a $\lambda$-term which is either $\errorterm$ or in
the form $\TRSonetolambda{\termone}$ is denoted with metavariables like
$\cltermone$ or $\cltermtwo$.

The map $\TRSonetolambda{\cdot}$ defines encodings of constructor terms.
But what about terms containing function symbols?
The goal is defining another map $\TRSonetolambdaII{\cdot}$ returning
a $\lambda$-term given any term $t$ in $\TRStermsp{\TRSone}$, in such
a way that $\termone\rewrTRS^*\termtwo$ and $\termtwo\in\TRScontermsp{\TRSone}$ implies
$\TRSonetolambdaII{\termone}\rewrlambdav^*\TRSonetolambda{\termtwo}$. Moreover,
$\TRSonetolambdaII{\termone}$ should rewrite to $\errorterm$ whenever the rewriting of $\termone$
causes an error (i.e. whenever $\termone$ has a normal form containing
a function symbol).
\condinc{
First of all, we can define the $\lambda$-term $\TRSonetolambdaII{\conone_i}$ corresponding
to any constructor $\conone_i$. To do that, define a $\lambda$-term $\lambdaone^i_{\varone_1,\ldots,\varone_m}$
for every $1\leq i\leq g$, for every $0\leq m\leq\arity{\conone_i}$ and for every
variables $\varone_1,\ldots,\varone_m$ by induction on $\arity{\conone_i}-m$:
\begin{eqnarray*}
  \lambdaone^i_{\varone_1,\ldots,\varone_{\arity{\conone_i}}}&\equiv&
  \lambda\vartwo_1.\ldots.\lambda\vartwo_{g}.\vartwo_{i}\varone_1\ldots\varone_{\arity{\conone_i}};\\
  \forall m: 0\leq m<\arity{\conone_i} \qquad \lambdaone^i_{\varone_1,\ldots,\varone_m}&\equiv&
  \lambda\vartwo.\vartwo\lambdatwo_{1,i}^m\ldots\lambdatwo_{g,i}^m\lambdathree^m_i;
\end{eqnarray*}
where:
\begin{eqnarray*}
\lambdatwo_{j,i}^m&\equiv&\lambda\varthree_1.\ldots.\lambda\varthree_{\arity{\conone_j}}.
  (\lambda\varone_{m+1}.\lambdaone^i_{\varone_1,\ldots,\varone_{m+1}})
  \lambdaone^{\arity{\conone_j}}_{\varthree_1,\ldots,\varthree_{\arity{\conone_j}}};\\
\lambdathree^m_i&\equiv&\lambda\varthree_{m+2}.\ldots.\lambda\varthree_{\arity{\conone_i}}.\errorterm.
\end{eqnarray*}
\begin{lemma}
There is a constant $n\in\N$ such that for every $i$ and for every $m$:
$$
\lambdaone^i_{\varone_1,\ldots,\varone_m}\{\TRSonetolambda{\termone_1}/\varone_1,\ldots,\TRSonetolambda{\termone_m}/\varone_m\}
\TRSonetolambda{\termone_{m+1}}\ldots\TRSonetolambda{\termone_{\arity{\conone_i}}}
\rewrTRS^k\TRSonetolambda{\conone_i(\termone_1\ldots\termone_{\arity{\conone_i}})}
$$
(where $k\leq n$) and
$$
\lambdaone^i_{\varone_1,\ldots,\varone_m}\{\TRSonetolambda{\termone_1}/\varone_1,\ldots,\TRSonetolambda{\termone_m}/\varone_m\}
\cltermone_{m+1}\ldots\cltermone_{\arity{\conone_i}}
\rewrTRS^l\errorterm
$$
(where $l\leq n$) whenever $\cltermone_{j}$ is either $\TRSonetolambda{\termone_j}$ or $\errorterm$
but at least one among $\cltermone_{m+1}\ldots\cltermone_{\arity{\conone_i}}$ is $\errorterm$.
\end{lemma}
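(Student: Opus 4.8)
The plan is to prove the two identities together by induction on $d=\arity{\conone_i}-m$, the number of arguments that $\lambdaone^i_{\varone_1,\ldots,\varone_m}$ still has to absorb, with one twist built in from the start. A naive induction that simply re-derives ``at most $n$ steps'' at every level is doomed, since consuming one argument costs a fixed positive number of reduction steps; so I would instead prove the stronger bound ``at most $c\cdot d$ steps'', where $c$ is a constant depending only on $g$ and on the maximal constructor arity $a=\max_{1\leq k\leq g}\arity{\conone_k}$ (concretely $c=g+a+3$ works), and then take $n=c\cdot a$, which is legitimate because $d=\arity{\conone_i}-m\leq\arity{\conone_i}\leq a$. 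Throughout I would write $\widehat{\cdot}$ for the substitution $\{\TRSonetolambda{\termone_1}/\varone_1,\ldots,\TRSonetolambda{\termone_m}/\varone_m\}$ and use repeatedly that each $\TRSonetolambda{\termone}$ and $\errorterm$ is a closed abstraction, hence a value: this makes every call-by-value redex below fireable and all the trailing arguments $\cltermone_{m+2},\ldots,\cltermone_{\arity{\conone_i}}$ values.

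The base case $d=0$ (that is, $m=\arity{\conone_i}$) is immediate. The second identity is vacuous, its hypothesis quantifying over an empty list of trailing arguments; and for the first, unfolding the bottom-level defining clause and applying $\widehat{\cdot}$ shows that $\widehat{\lambdaone^i_{\varone_1,\ldots,\varone_{\arity{\conone_i}}}}$ is, up to renaming its bound variables, exactly the Scott encoding $\TRSonetolambda{\conone_i(\termone_1\ldots\termone_{\arity{\conone_i}})}$, so $k=0$.

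For the inductive step $d>0$ I would first fire the head redex: since $\widehat{\lambdaone^i_{\varone_1,\ldots,\varone_m}}=\lambda\vartwo.\vartwo\,\widehat{\lambdatwo_{1,i}^m}\ldots\widehat{\lambdatwo_{g,i}^m}\,\lambdathree^m_i$ (the error branch $\lambdathree^m_i$ being closed), feeding it the value $\cltermone_{m+1}$ yields in one step $\cltermone_{m+1}\,\widehat{\lambdatwo_{1,i}^m}\ldots\widehat{\lambdatwo_{g,i}^m}\,\lambdathree^m_i\,\cltermone_{m+2}\ldots\cltermone_{\arity{\conone_i}}$. Now I would split on the shape of $\cltermone_{m+1}$. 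If $\cltermone_{m+1}=\errorterm=\lambda x_1.\ldots.\lambda x_g.\lambda y.y$, then $g+1$ steps select $\lambdathree^m_i=\lambda\varthree_{m+2}.\ldots.\lambda\varthree_{\arity{\conone_i}}.\errorterm$ and $\arity{\conone_i}-m-1$ further steps discard the remaining (value) arguments and leave $\errorterm$; since $\cltermone_{m+1}$ is itself the error witness demanded by the hypothesis, this proves the second identity, in $1+(g+1)+(\arity{\conone_i}-m-1)\leq c\cdot d$ steps. If instead $\cltermone_{m+1}=\TRSonetolambda{\termone_{m+1}}$ with $\termone_{m+1}=\conone_j(\termtwo_1,\ldots,\termtwo_{\arity{\conone_j}})$, then $g+1$ steps select $\widehat{\lambdatwo_{j,i}^m}$ applied to $\TRSonetolambda{\termtwo_1},\ldots,\TRSonetolambda{\termtwo_{\arity{\conone_j}}}$ (still trailed by $\cltermone_{m+2}\ldots\cltermone_{\arity{\conone_i}}$), and consuming the (at most $a$) leading abstractions of $\widehat{\lambdatwo_{j,i}^m}$ produces $(\lambda\varone_{m+1}.\widehat{\lambdaone^i_{\varone_1,\ldots,\varone_{m+1}}})\,\lambdaone^{\arity{\conone_j}}_{\TRSonetolambda{\termtwo_1},\ldots,\TRSonetolambda{\termtwo_{\arity{\conone_j}}}}\,\cltermone_{m+2}\ldots\cltermone_{\arity{\conone_i}}$; by the bottom-level clause again the middle term equals, up to renaming, $\TRSonetolambda{\conone_j(\termtwo_1,\ldots,\termtwo_{\arity{\conone_j}})}=\TRSonetolambda{\termone_{m+1}}$, which is a value, so one more step substitutes it for $\varone_{m+1}$ and I reach $\lambdaone^i_{\varone_1,\ldots,\varone_{m+1}}\{\TRSonetolambda{\termone_1}/\varone_1,\ldots,\TRSonetolambda{\termone_{m+1}}/\varone_{m+1}\}\,\cltermone_{m+2}\ldots\cltermone_{\arity{\conone_i}}$ --- precisely what the induction hypothesis at level $m+1$ (same $i$, hence $d-1$) governs. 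I then invoke it with all $\cltermone_k=\TRSonetolambda{\termone_k}$ for the first identity, and with the still-present error witness among $\cltermone_{m+2},\ldots,\cltermone_{\arity{\conone_i}}$ for the second; the running count is the $\leq g+a+3$ steps spent at this level plus $c(d-1)$, hence $\leq c\cdot d$.

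The step I expect to be the real obstacle is precisely the one addressed above, namely keeping the step-bound uniform in $i$ and $m$: the clean way is to carry a bound linear in the number $d\leq a$ of remaining arguments (equivalently, to observe that the recursion nests to depth $\leq a$) rather than a flat constant. The only other point that demands attention --- more a matter of reading the definitions than of proof --- is that after substitution the bottom-level term $\lambdaone^{\arity{\conone_j}}_{\ldots}$ must agree, up to $\alpha$-conversion, with the Scott encoding $\TRSonetolambda{\conone_j(\ldots)}$; this is exactly what lets the inductive case's ``rebuild the $(m+1)$-st argument and feed it back in'' step close up, and it is why the $\lambdatwo_{j,i}^m$ were defined as they were. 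Everything else --- the nested substitutions, the $\alpha$-renamings matching $\vartwo_1,\ldots,\vartwo_g,\vartwo$ with $x_1,\ldots,x_g,y$, the verification that each intermediate term is the stated one --- is routine bookkeeping.
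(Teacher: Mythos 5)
Your proof is correct and follows essentially the same route as the paper's: induction on $\arity{\conone_i}-m$, with a trivial base case and an inductive step that fires the head redex, branches on whether $\cltermone_{m+1}$ is $\errorterm$ or a Scott-encoded constructor term, and in the latter case rebuilds the argument and invokes the induction hypothesis. Your explicit strengthening of the bound to $c\cdot d$ with $n=c\cdot a$ is a slightly more careful treatment of the uniform constant than the paper's closing remark that no reduction length depends on the parameters, but it is the same argument.
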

\begin{proof}
We proceed by induction on $\arity{\conone_i}-m$:
\begin{varitemize}
\item
  If $m=\arity{\conone_i}$, then
  \begin{eqnarray*}
    &&\lambdaone^i_{\varone_1,\ldots,\varone_{\arity{\conone_i}}}\{\TRSonetolambda{\termone_1}/\varone_1,\ldots,\TRSonetolambda{\termone_{\arity{\conone_i}}}/\varone_{\arity{\conone_i}}\}\\
    &\equiv&(\lambda\vartwo_1.\ldots.\lambda\vartwo_{g}\vartwo_{i}\varone_1\ldots\varone_{\arity{\conone_i}})
      \{\TRSonetolambda{\termone_1}/\varone_1,\ldots,\TRSonetolambda{\termone_{\arity{\conone_i}}}/\varone_{\arity{\conone_i}}\}\\
    &\equiv&\lambda\vartwo_1.\ldots.\lambda\vartwo_{g}.\vartwo_{i}\TRSonetolambda{\termone_1}\ldots\TRSonetolambda{\termone_{\arity{\conone_i}}}\\
    &\equiv&\TRSonetolambda{\conone_i(\termone_1,\ldots,\termone_{\arity{\conone_i}})}.
  \end{eqnarray*}
\item
  If $m<\arity{\conone_i}$, we
  use the following abbreviations:
  \begin{eqnarray*}
    \lambdafour_{j,i}^m&\equiv&\lambdatwo_{j,i}^m\{\TRSonetolambda{\termone_1}/\varone_1,\ldots,\TRSonetolambda{\termone_{m}}/\varone_{m}\};\\
    \lambdafive_{j}^m&\equiv&\lambdathree_{j}^m\{\TRSonetolambda{\termone_1}/\varone_1,\ldots,\TRSonetolambda{\termone_{m}}/\varone_{m}\}.
  \end{eqnarray*}
  Let's distinguish two cases:
  \begin{varitemize}
  \item
    If $\cltermone_{m+1}\equiv\errorterm$, then:
    \begin{eqnarray*}
      &&\lambdaone^i_{\varone_1,\ldots,\varone_{m}}\{\TRSonetolambda{\termone_1}/\varone_1,\ldots,
      \TRSonetolambda{\termone_{m}}/\varone_{m}\}\cltermone_{m+1}\ldots\cltermone_{\arity{\conone_i}}\\
      &\rewrlambdav&(\errorterm\lambdafour_{1,i}^m\ldots\lambdafour_{g,i}^m\lambdafive^m_i)\cltermone_{m+2}\ldots\cltermone_{\arity{\conone_i}}\\
      &\rewrlambdav^*&\lambdafive_i^m\cltermone_{m+2}\ldots\cltermone_{\arity{\conone_i}}\\
      &\rewrlambdav^*&\errorterm
    \end{eqnarray*}
  \item
    Let $\cltermone_{m+1}$ be $\TRSonetolambda{\termone_{m+1}}$,
    where $\termone_{m+1}\equiv\conone_j(\termtwo_{1},\ldots,\termtwo_{\arity{\conone_j}}) $. 
    Then:
    \begin{eqnarray*}
      &&\lambdaone^i_{\varone_1,\ldots,\varone_{m}}\{\TRSonetolambda{\termone_1}/\varone_1,\ldots,
        \TRSonetolambda{\termone_{m}}/\varone_{m}\}\cltermone_{m+1}\ldots\cltermone_{\arity{\conone_i}}\\
      &\rewrlambdav&(\TRSonetolambda{\conone_j(\termtwo_{1},\ldots,\termtwo_{\arity{\conone_j}})}
        \lambdafour_{1,i}^m\ldots\lambdafour_{g,i}^m\lambdafive^m_i)\cltermone_{m+2}\ldots\cltermone_{\arity{\conone_i}}\\
      &\rewrlambdav^*&\lambdafour_{j,i}^m\TRSonetolambda{\termtwo_{1}}\ldots\TRSonetolambda{\termtwo_{\arity{\conone_j}}}
        \cltermone_{m+2}\ldots\cltermone_{\arity{\conone_i}}\\
      &\rewrlambdav^*&(\lambda\varone_{m+1}.\lambdaone^i_{\varone_1,\ldots,\varone_{m+1}}\{\TRSonetolambda{\termone_1}/\varone_1,\ldots,
        \TRSonetolambda{\termone_{m}}/\varone_m\})\\
      &&(\lambdaone^{j}_{\varthree_1,\ldots,\varthree_{\arity{\conone_j}}}\{\TRSonetolambda{\termtwo_1}/\vartwo_1,\ldots,
        \TRSonetolambda{\termone_{\arity{\conone_j}}}/\vartwo_{\arity{\conone_j}}\})\cltermone_{m+2}\ldots\cltermone_{\arity{\conone_i}}\\
      &\rewrlambdav^*&(\lambda\varone_{m+1}.\lambdaone^i_{\varone_1,\ldots,\varone_{m+1}}\{\TRSonetolambda{\termone_1}/\varone_1,\ldots,
        \TRSonetolambda{\termone_{m}}/\varone_m\})\\
      &&(\TRSonetolambda{\conone_j(\termtwo_1,\ldots,\termtwo_{\arity{\conone_j}})})
        \cltermone_{m+2}\ldots\cltermone_{\arity{\conone_i}}\\
     &\rewrlambdav^*&\lambdaone^i_{\varone_1,\ldots,\varone_{m+1}}\{\TRSonetolambda{\termone_1}/\varone_1,\ldots,
        \TRSonetolambda{\termone_{m+1}}/\varone_{m+1}\}\cltermone_{m+2}\ldots\cltermone_{\arity{\conone_i}}
    \end{eqnarray*}
    and, by the inductive hypothesis, the last term in the reduction sequence reduces to the correct
    normal form. The existence of a natural number $n$ with the prescribed properties can be proved
    by observing that none of the reductions above have a length which depends on the parameters
    $\TRSonetolambda{\termone_1},\ldots,\TRSonetolambda{\termone_{m}}$ and $\cltermone_{m+1}\ldots\cltermone_{\arity{\conone_i}}$.
  \end{varitemize}
\end{varitemize}
This concludes the proof.
\end{proof}
So, the required lambda term $\TRSonetolambdaII{\conone_i}$ is simply $\lambdaone^i_{\varepsilon}$.
Interpreting function symbols is more difficult, since we have to ``embed'' the reduction rules
into the $\lambda$-term interpreting the function symbol. To do that, we need a preliminary
result to encode pattern matching.
}
{
The map we are looking for should act compositionally on terms:
\begin{eqnarray*}
\TRSonetolambdaII{\conone(\termone_1,\ldots,\termone_{\arity{\conone}})}&=&\TRSonetolambdaII{\conone}
  \TRSonetolambdaII{\termone_1}\ldots\TRSonetolambdaII{\termone_{\arity{\conone}}}\\
\TRSonetolambdaII{\funone(\termone_1,\ldots,\termone_{\arity{\funone}})}&=&\TRSonetolambdaII{\funone}
  \TRSonetolambdaII{\termone_1}\ldots\TRSonetolambdaII{\termone_{\arity{\funone}}}.
\end{eqnarray*}
As a consequence, we only need to define our map on constructors
and on function symbols. Defining the $\lambda$-term $\TRSonetolambdaII{\conone}$ corresponding to
a constructor $\conone$ is relatively easy: we simply need to take into account the
case when some argument is $\errorterm$. Treating function symbols is
more complicated, since the rewrite rules governing $\funone$ must be ``embedded'' inside
$\TRSonetolambdaII{\funone}$. One of the main ingredients in rewriting is first-order matching, which is
not natively available in the \emph{pure} $\lambda$-calculus, and should therefore
be coded into the $\lambda$-term. This is the purpose of the 
following lemma. 
}
\begin{lemma}[Pattern matching]\label{lemma:pm}
Let $\seqpone_1,\ldots,\seqpone_n$ be non-overlapping sequences of patterns of the same length $m$.
Then there are a term $\lambdaone_{\seqpone_1,\ldots,\seqpone_n}^m$ and an integer $l$ such that
for every sequence of values $\valueone_1,\ldots,\valueone_n$,
if $\seqpone_i=\patone_1,\ldots,\patone_m$ then
\begin{equation*}
\begin{split}
\lambdaone_{\seqpone_1,\ldots,\seqpone_n}^m
  \TRSonetolambda{\patone_1(\termone_1^1,\ldots,\termone_1^{k_1})}
  \ldots
  \TRSonetolambda{\patone_m(\termone_m^1,\ldots,\termone_m^{k_m})}
   \valueone_1\ldots \valueone_n \\ 
\rewrlambdav^k \; 
\valueone_i\TRSonetolambda{\termone_1^1}\ldots\TRSonetolambda{\termone_1^{k_1}}
\ldots\TRSonetolambda{\termone_m^1}\ldots\TRSonetolambda{\termone_m^{k_m}},
\end{split}
\end{equation*}
where $k\leq l$, whenever the $\termone_i^j$ are constructor terms. Moreover,
$$
\lambdaone_{\seqpone_1,\ldots,\seqpone_n}^m\cltermone_1,\ldots,\cltermone_mV_1\ldots V_n
\rewrlambdav^k \errorterm,
$$
where $k\leq l$,
whenever $\cltermone_1,\ldots,\cltermone_m$ do not unify with
any of the sequences $\seqpone_1,\ldots,\seqpone_n$ or any of 
the $\cltermone_1,\ldots,\cltermone_m$ is itself $\errorterm$.
\end{lemma}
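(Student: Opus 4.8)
The idea is to compile the matching problem into a cascade of \emph{case analyses}, each realized by feeding a Scott-encoded constructor term its $g+1$ branch handlers: one handler $G_1,\ldots,G_g$ for each constructor $\conone_1,\ldots,\conone_g$, and one handler $G_{\errorterm}$ for the case $\errorterm$. The term $\lambdaone^m_{\seqpone_1,\ldots,\seqpone_n}$ is defined by well-founded recursion on the pair $(\Phi,m)$, ordered lexicographically, where $\Phi$ is the total number of constructor occurrences among the patterns forming $\seqpone_1,\ldots,\seqpone_n$. It has the shape $\lambda x_1\ldots x_m v_1\ldots v_n.\,x_1 G_1\cdots G_g G_{\errorterm} x_2\cdots x_m v_1\cdots v_n$, so that inspecting $x_1$ either selects $G_{\errorterm}\equiv\lambda z_1\ldots z_{m-1+n}.\errorterm$ when $x_1=\errorterm$, or selects some $G_j$, handing it the $\arity{\conone_j}$ encoded immediate subterms of $x_1$ followed by $x_2,\ldots,x_m,v_1,\ldots,v_n$; the branch $G_{\errorterm}$ forces the whole term to $\errorterm$, which is exactly the required behaviour when the first argument is the error value.

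Each $G_j$ is built from the recursive solution of a strictly smaller problem. Write $I_j$ for the set of indices $i$ such that the first pattern of $\seqpone_i$ is either $\conone_j(\cdots)$ or a variable, i.e.\ the patterns compatible with head $\conone_j$; if $I_j=\emptyset$, then head $\conone_j$ matches nothing and $G_j$ discards its arguments and returns $\errorterm$. If $I_j\neq\emptyset$ but every $i\in I_j$ has a variable as first pattern, then $G_j$ forms $W\equiv\TRSonetolambda{\conone_j}z_1\cdots z_{\arity{\conone_j}}$ (which, when the $z$'s are encodings of constructor terms, reduces in $\arity{\conone_j}$ steps to the Scott encoding of the rebuilt argument), and passes $x_2,\ldots,x_m$ and the branches $\lambda\vec w.\,v_i\,W\,\vec w$ — which prepend the captured value $W$ — to $\lambdaone^{m-1}_{(\seqpone_i')_{i\in I_j}}$, where $\seqpone_i'$ is $\seqpone_i$ with its first pattern removed; here $m$ strictly decreases and $\Phi$ does not grow. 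Otherwise some $i\in I_j$ has first pattern $\conone_j(q_1^{(i)},\ldots,q_{\arity{\conone_j}}^{(i)})$, and $G_j$ passes $z_1,\ldots,z_{\arity{\conone_j}},x_2,\ldots,x_m$ and suitable branches to $\lambdaone^{\arity{\conone_j}+m-1}$ for the family obtained by replacing the first pattern of each $i\in I_j$ by $q_1^{(i)},\ldots,q_{\arity{\conone_j}}^{(i)}$ when it is $\conone_j(\cdots)$ (keeping the branch $v_i$), and by fresh variables $z_1,\ldots,z_{\arity{\conone_j}}$ when it is a variable (now with the branch $\lambda z_1''\ldots z_{\arity{\conone_j}}''.\lambda\vec w.\,v_i(\TRSonetolambda{\conone_j}z_1''\cdots z_{\arity{\conone_j}}'')\vec w$, which reassembles the captured value from the exposed pieces); here at least one occurrence of $\conone_j$ is deleted, so $\Phi$ strictly decreases regardless of how $m$ changes.

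The remaining work consists of three routine inductive checks. First, that each new family is again non-overlapping and has the size demanded by the recursion measure: two of the new sequences unify only if the corresponding original ones do, because an incompatibility in a shared tail survives, an incompatibility between two $\conone_j$-headed first patterns descends to their arguments, and two first patterns cannot be incompatible if they are both variables or one of them is. Second, that on well-formed input the reduction follows the claimed shape and length: a Scott dispatch costs $g+1$ steps, a reassembly costs at most $\max_j\arity{\conone_j}+1$ steps, every exposed $\beta$-redex is fired on a value (the encoded data, the branches $v_i$, and all the abstractions introduced above are values in the weak call-by-value sense), the recursion depth is bounded, and both the per-level cost and the depth are bounded by quantities fixed by $\seqpone_1,\ldots,\seqpone_n$ and the signature $\Functions{\TRSone}$, so their product is a bound $l$ that does not depend on the captured terms $\termone_i^j$. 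Third, that $\errorterm$ is produced exactly when it must be: if $x_1=\errorterm$ the $G_{\errorterm}$ branch fires; if $x_1$ encodes a term with head $\conone_j$ and $I_j=\emptyset$ then no original sequence matches and $G_j$ yields $\errorterm$; otherwise the claim for the error case follows from the induction hypothesis on the subproblem, since matching $\conone_j(\cdots)$ against $\conone_j(\cdots)$ reduces to matching their arguments, and matching against a variable never fails.

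The main obstacle is making the recursion terminate while still treating variable patterns correctly. Scott encoding forces one to open an argument in order, even just to test it for being $\errorterm$, so a recursion on $m$ alone would blow up the number of columns whenever the matching pattern is a variable sitting in front of a constructor of arity $>1$. The fix used above — taking the constructor count $\Phi$ as the primary measure, consuming an entire column when it contains no constructor, and, when it does, reassembling the opened argument inside the continuation of each variable branch rather than in the pattern itself — is the one delicate design decision; once it is made, and the lexicographic measure $(\Phi,m)$ is seen to decrease in every recursive call, the rest is bookkeeping.
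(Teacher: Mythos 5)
Your proposal is correct and follows essentially the same route as the paper's proof: a Scott-encoding dispatch on one column, residual non-overlapping pattern families per constructor head, a reassembly combinator $\TRSonetolambda{\conone_j}z_1\cdots z_{\arity{\conone_j}}$ threaded into the continuation for variable patterns, and an induction driven by the total constructor count with a data-independent step bound. The only (harmless) deviation is that you always dispatch on the first column and compensate with the lexicographic measure $(\Phi,m)$ plus a ``consume the all-variable column'' case, whereas the paper dispatches on a column already known to contain a constructor and gets by with the single measure $p$.
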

\condinc{
\begin{proof}
We go by induction on $p=\sum_{i=1}^n\dsize{\seqpone_i}$, where
$\dsize{\seqpone_i}$ is the number of constructors occurrences in patterns
inside $\seqpone_i$:
\begin{varitemize}
\item
  If $p=0$ and $n=0$, then we should always return $\errorterm$: 
  $$
  \lambdaone_{\varepsilon}^m\equiv\lambda\varone_1.\ldots.\lambda\varone_m.\errorterm.
  $$
\item
  If $p=0$ and $n=1$ and $\seqpone_1$ is simply a sequence of variables 
  $\varone_1,\ldots,\varone_m$ (because the $\seqpone_i$ are assuming to be non-overlapping).
  Then $\lambdaone_{\varone_1,\ldots,\varone_m}^m$ is a term defined by induction on $m$
  which returns $\errorterm$ only if one of its first $m$ arguments is $\errorterm$ and otherwise
  returns its $m+1$-th argument applied to its first $m$ arguments.
\item
  If $p\geq 1$, then there must be integers $i$ and $j$ with 
  $1\leq i\leq m$ and $1\leq j\leq n$ such that
  $$
  \seqpone_j=\patone_1,\ldots,\patone_{i-1},\conone_k(\pattwo_1,\ldots,\pattwo_{\arity{\conone_k}}),
  \patone_{i+1},\ldots,\patone_{m}
  $$
  for a constructor $\conone_k$ and for some patterns $\patone_p$ and some $\pattwo_q$.
  Now, for every $1\leq p\leq n$ 
  and for every $1\leq j\leq g$ 
  we define sequences of patterns $\seqptwo_p^j$ and values $\valuetwo_p^j$ as follows:
  \begin{varitemize}
  \item
    If 
    $$
    \seqpone_p=\patone_1,\ldots,\patone_{i-1},\conone_j(\pattwo_1,\ldots,\pattwo_{\arity{\conone_j}}),
    \patone_{i+1}\ldots\patone_m
    $$
    then $\seqptwo_p^j$ is defined to be the sequence
    $$
    \patone_1,\ldots,\patone_{i-1},\pattwo_1,\ldots,\pattwo_{\arity{\conone_k}},\patone_{i+1},\ldots,\patone_m.
    $$
    Moreover, $\valuetwo_p$ is simply the indentity $\lambda\varone.\varone$.
  \item
    If 
    $$
    \seqpone_p=\patone_1,\ldots,\patone_{i-1},\conone_s(\pattwo_1,\ldots,\pattwo_{\arity{\conone_s}}),
    \patone_{i+1}\ldots\patone_m
    $$
    where $s\neq j$ then $\seqptwo_p^j$ and $\valuetwo_p^j$ are both undefined.
  \item
    Finally, if
    $$
    \seqpone_p=\patone_1,\ldots,\patone_{i-1},\varone,\patone_{i+1}\ldots\patone_m
    $$
    then $\seqptwo_p^j$ is defined to be the sequence
    $$
    \patone_1,\ldots,\patone_{i-1},\varone_1,\ldots,\varone_{\arity{\conone_j}},\patone_{i+1},\ldots,\patone_m.
    $$
    and $\valuetwo_p^j$ is the following $\lambda$-term
    $$
    \lambda\varone.\lambda\vartwo_1.\ldots.\lambda\vartwo_t.\varone_1.\ldots.\lambda\varone_{\arity{\conone_k}}.
    \lambda\varthree_1.\ldots.\lambda\varthree_u.\varone\vartwo_1\ldots\vartwo_t
    (\TRSonetolambda{\conone_j}\varone_1\ldots\varone_{\arity{\conone_j}})\varthree_1\ldots\varthree_u
    $$
    where $t$ is the number of variables in $\patone_1,\ldots,\patone_{i-1}$ and $u$ is the
    number of variables in $\patone_{i+1},\ldots,\patone_{m}$.
  \end{varitemize}
  As a consequence, for every $1\leq j\leq g$, we can find a natural number $t_j$ and 
  a sequence of pairwise distinct natural numbers $i_1,\ldots,i_{t_j}$ such that 
  $\seqptwo^j_{i_1},\ldots,\seqptwo^j_{i_{t_j}}$ are exactly the sequences which can
  be defined by the above construction. We are now able to formally define
  $\lambdaone_{\seqpone_1,\ldots,\seqpone_n}^m$; it is the term
  $$
  \lambda\varone_1.\ldots.\lambda\varone_m.\lambda\vartwo_1.\ldots.\lambda\vartwo_n.
  ((\varone_i\valueone_1\ldots\valueone_g\valueone_\bot)\varone_1\ldots\varone_{i-1}\varone_{i+1}\ldots\varone_m)
  \vartwo_1\ldots\vartwo_n
  $$
  where
  \begin{eqnarray*}
    \forall 1\leq j\leq g.\valueone_j&\equiv&\lambda\varthree_1.\ldots.\lambda\varthree_{\arity{\conone_j}}.
    \lambda\varone_1.\ldots.\lambda\varone_{i-1}.\lambda\varone_{i+1}.\ldots.\lambda\varone_m.
    \lambda\vartwo_1.\ldots.\lambda\vartwo_n.\\
    & &\lambdaone_{\seqptwo^j_{i_1},\ldots,\seqptwo^j_{i_{t_j}}}^{m-1+\arity{\conone_j}}\varone_1\ldots\varone_{i-1}\varthree_1\ldots\varthree_{\arity{\conone_j}}
      \varone_{i+1}\ldots\varone_m(\valuetwo^j_{i_1}\vartwo_{i_1})\ldots(\valuetwo^j_{i_{t_j}}\vartwo_{i_{t_j}})\\
    \valueone_\bot&\equiv&\lambda\varone_1.\ldots.\lambda\varone_{i-1}.\lambda\varone_{i+1}.\ldots.\lambda\varone_m.
    \lambda\vartwo_1.\ldots.\lambda\vartwo_n.\errorterm
  \end{eqnarray*}
  Notice that, for every $j$,
  $p>\sum_{v=1}^{t_j}\dsize{\seqptwo^j_v}$. Moreover, for every $j$ any
  $\seqptwo^j_v$ has the same length $m-1+\arity{\conone_j}$.
  This justifies the application of the
  induction hypothesis above.
\end{varitemize}
This concludes the proof.
\end{proof}
For every function symbol $\funone_i$, let
$$
\funone_i(\seqpone_i^1)\rewrTRS\termone_i^1,\ldots,
\funone_i(\seqpone_i^{n_i})\rewrTRS\termone_i^{n_i}
$$
be the rules for $\funone_i$. Moreover, suppose that
the variables appearing in the patterns in $\seqpone_i^j$ are
$z_i^{j,1},\ldots,z_i^{j,m_{i,j}}$.
Recall that we have a signature with $\funone_1,\ldots,\funone_h$ 
function symbols. For any $1\leq i \leq h$ 
the lambda term interepreting $\funone_i$ is defined to be:
$$
\TRSonetolambdaII{\funone_i}\equiv H_i\valueone_1\ldots\valueone_h
$$
where
\begin{eqnarray*}
\valueone_i&\equiv&\lambda x_1.\ldots.\lambda x_h.\lambda y_1.\ldots.\lambda y_{\arity{\funone_i}}.
\lambdaone_{\seqpone_i^1,\ldots,\seqpone_i^n}y_1\ldots y_{\arity{\funone_i}}\valuetwo_i^1\ldots\valuetwo_i^{n_i}\\
\valuetwo_i^j&\equiv&\lambda\varthree_1.\ldots.\lambda\varthree_{m_{i,j}}.\TRSonetolambdaIII{\termone_i^j}
\end{eqnarray*}
whenever $1\leq i\leq h$ and $1\leq j\leq n_i$. Moreover
$\TRSonetolambdaIII{\cdot}$ is defined by induction as follows:
\begin{eqnarray*}
\TRSonetolambdaIII{\varone}&=&\varone\\
\TRSonetolambdaIII{\conone_i(\termone_1,\ldots,\termone_{\arity{\conone_i}})}&=&\TRSonetolambdaII{\conone_i}
  \TRSonetolambdaIII{\termone_1}\ldots\TRSonetolambdaIII{\termone_{\arity{\conone_i}}}\\
\TRSonetolambdaIII{\funone_i(\termone_1,\ldots,\termone_{\arity{\funone_i}})}&=&\varone_i
  \TRSonetolambdaIII{\termone_1}\ldots\TRSonetolambdaIII{\termone_{\arity{\funone_i}}}
\end{eqnarray*}
Now, we have all the necessary ingredients to extend the mapping $\TRSonetolambdaII{\cdot}$
to every term in $\TRStermsp{\TRSone}$:
\begin{eqnarray*}
\TRSonetolambdaII{\conone(\termone_1,\ldots,\termone_{\arity{\conone_i}})}&=&\TRSonetolambdaII{\conone_i}
  \TRSonetolambdaII{\termone_1}\ldots\TRSonetolambdaII{\termone_{\arity{\conone_i}}}\\
\TRSonetolambdaII{\funone_i(\termone_1,\ldots,\termone_{\arity{\funone_i}})}&=&\TRSonetolambdaII{\funone_i}
  \TRSonetolambdaII{\termone_1}\ldots\TRSonetolambdaII{\termone_{\arity{\funone_i}}}
\end{eqnarray*}
}
{
Rewrite rules define the computational behavior of function symbols in a mutually recursive
way. We can exploit the powerful fixed-point combinators of the untyped $\lambda$-calculus
(see Lemma~\ref{lemma:mfpc}) to define 
$\TRSonetolambdaII{\funone}$. 
}

\begin{theorem}\label{theo:simulcl}
There is a natural number $k$ such that for every
function symbol $\funone$ and for every 
$\termone_1,\ldots,\termone_{\arity{\funone}}\in\TRScontermsp{\TRSone}$,
the following three implications hold (where
$\termtwo$ stands for $\funone(\termone_1,\ldots,\termone_{\arity{\funone}})$
and $\lambdaone$ stands for $\TRSonetolambdaII{\funone}\TRSonetolambda{\termone_1}\ldots\TRSonetolambda{\termone_{\arity{\funone}}}$):
\begin{varitemize}
\item
If $\termtwo$ rewrites to
$\termthree\in\TRScontermsp{\TRSone}$ in $n$ steps, then
$\lambdaone$ rewrites to $\TRSonetolambda{\termthree}$ in
at most $kn$ steps.
\item
If $\termtwo$ rewrites to a normal form
$\termthree\notin\TRScontermsp{\TRSone}$, then
$\lambdaone$ rewrites to $\errorterm$.
\item
If $\termtwo$ diverges, then $\lambdaone$ diverges.
\end{varitemize}
\end{theorem}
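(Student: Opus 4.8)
The plan is to prove the theorem by induction on the number of reduction steps $n$ (for the first clause), simultaneously with a more general statement covering arbitrary terms $t \in \TRStermsp{\TRSone}$, not merely terms of the form $\funone(\termone_1,\ldots,\termone_{\arity{\funone}})$ with constructor arguments. The key reason to generalize is that when we fire a rewrite rule $\funone_i(\seqpone_i^j) \rewrTRS \termone_i^j$, the right-hand side $\termone_i^j$ is an arbitrary term possibly containing nested function symbols, so the induction hypothesis must apply to such terms. Concretely, I would prove: for every $t \in \TRStermsp{\TRSone}$, if $t \rewrTRS^n u$ with $u \in \TRScontermsp{\TRSone}$ then $\TRSonetolambdaII{t} \rewrlambdav^{\leq kn + c} \TRSonetolambda{u}$ for suitable global constants $k, c$ (the additive $c$ absorbing the bounded cost of the constructor-assembly and fixed-point-unrolling steps that do not correspond to rewrite steps), and if $t$ has a normal form not in $\TRScontermsp{\TRSone}$ then $\TRSonetolambdaII{t} \rewrlambdav^* \errorterm$, and if $t$ diverges then $\TRSonetolambdaII{t}$ diverges.

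The heart of the argument is the single-step simulation. Suppose $t \rewrTRS u$ by contracting an innermost redex $\funone_i(\termone_1,\ldots,\termone_r)$ where the $\termone_p$ are constructor terms (by call-by-value, and we may assume the contracted redex is innermost since weak call-by-value $\lambda$-reduction and orthogonal CRS reduction are both confluent with step count invariant under strategy — we use Theorem~\ref{theo:termreducible}-style reasoning, or rather the diamond property cited in the preliminaries, to reorganize). Then $\TRSonetolambdaII{\funone_i} = H_i \valueone_1 \ldots \valueone_h$, and by Lemma~\ref{lemma:mfpc} applied to the $H_i$, the term $\TRSonetolambdaII{\funone_i}\TRSonetolambda{\termone_1}\ldots\TRSonetolambda{\termone_r}$ reduces in at most $m$ steps to $\valueone_i(\lambda x.H_1\vec\valueone x)\ldots(\lambda x.H_h\vec\valueone x)\TRSonetolambda{\termone_1}\ldots\TRSonetolambda{\termone_r}$, which after $\beta$-reducing the outer abstractions of $\valueone_i$ becomes $\lambdaone_{\seqpone_i^1,\ldots,\seqpone_i^{n_i}} \TRSonetolambda{\termone_1}\ldots\TRSonetolambda{\termone_r}\valuetwo_i^1\ldots\valuetwo_i^{n_i}$. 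Now we invoke Lemma~\ref{lemma:pm} (pattern matching): since exactly one pattern sequence $\seqpone_i^j$ matches $\termone_1,\ldots,\termone_r$ (orthogonality), this reduces in at most $l$ steps to $\valuetwo_i^j$ applied to the $\lambda$-encodings of the matched subterms, which after substitution equals $\TRSonetolambda{\termone_i^j}$ with the pattern variables replaced by the corresponding constructor-term encodings — and this is exactly $\TRSonetolambdaII{\,\text{(the CRS contractum)}\,}$, using a substitution lemma of the form $\TRSonetolambdaIII{w}\{\TRSonetolambda{\vec s}/\vec z\} = \TRSonetolambdaII{w\{\vec s/\vec z\}}$ for constructor-term substitutions $\vec s$. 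Thus each CRS step is simulated by at most $m + l + O(r) \leq k$ $\lambda$-steps, with $k$ depending only on the (finite) signature and rule set of $\TRSone$, not on term sizes — this is the crucial uniformity, guaranteed because Lemmas~\ref{lemma:mfpc} and~\ref{lemma:pm} provide bounds independent of the values plugged in.

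For the error clause, one argues the same way: if $t$ normalizes to some $\termthree \notin \TRScontermsp{\TRSone}$, then $\termthree$ contains a function symbol $\funone_i$ applied to constructor terms that match no left-hand side (by orthogonality and the fact that $\termthree$ is a normal form, there is a ``stuck'' redex), and the second half of Lemma~\ref{lemma:pm} guarantees $\lambdaone_{\seqpone_i^1,\ldots,\seqpone_i^{n_i}}$ fed these non-matching arguments reduces to $\errorterm$; the constructor-encoding combinators $\TRSonetolambdaII{\conone_i}$ propagate $\errorterm$ upward (by the lemma preceding Lemma~\ref{lemma:pm}), so the whole term collapses to $\errorterm$. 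For divergence, if $t$ has an infinite reduction, by the single-step simulation each CRS step forces at least one $\beta$-step (in fact the whole packet, but at least one), and since the translation of a redex always genuinely reduces, $\TRSonetolambdaII{t}$ admits an infinite $\rewrlambdav$ sequence; one must be slightly careful here and note that $\lambda$-divergence follows because the simulation never gets ``stuck'' before performing the step corresponding to the next CRS reduction — formally, co-induction or a standard ``infinite reduction maps to infinite reduction'' argument. The main obstacle I anticipate is bookkeeping the constant $k$ cleanly across the mutually recursive definitions: one must check that unrolling the fixed-point combinator ($\leq m$ steps from Lemma~\ref{lemma:mfpc}), performing the pattern match ($\leq l$ steps from Lemma~\ref{lemma:pm}), assembling constructors on the right-hand side (bounded by $\max_i \arity{\conone_i}$ times the number of constructor nodes in the largest rule right-hand side), and the administrative $\beta$-redexes all contribute only a bounded amount \emph{per CRS step}, so that $n$ CRS steps cost $\leq kn$ — and getting the additive slack to telescope correctly, which is why I would carry the $+c$ in the generalized statement and only at the end specialize to the theorem's exact form where $\termtwo = \funone(\termone_1,\ldots,\termone_{\arity{\funone}})$ with the $\termone_i$ already constructor terms so $\TRSonetolambdaII{\termtwo} = \TRSonetolambdaII{\funone}\TRSonetolambda{\termone_1}\ldots\TRSonetolambda{\termone_{\arity{\funone}}}$, i.e. $\lambdaone$ itself, and $c$ can be folded into $k$ by enlarging it.
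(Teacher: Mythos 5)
Your proposal is correct and follows essentially the same route as the paper, which itself only sketches the argument (``an easy combinatorial argument following from the definition of $\TRSonetolambdaII{\cdot}$'', with $k$ obtained from the constants of Lemma~\ref{lemma:mfpc} and Lemma~\ref{lemma:pm}): unroll the fixed point via Lemma~\ref{lemma:mfpc}, dispatch on the matching rule via Lemma~\ref{lemma:pm}, and charge the bounded per-step overhead (constructor assembly, administrative redexes) to the constant $k$. Your additional care about generalizing to arbitrary terms, the additive slack $c$, and the error/divergence clauses fills in exactly the details the paper leaves implicit.
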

\condinc{
\begin{proof}
By an easy combinatorial argument following from the definition
of $\TRSonetolambdaII{\cdot}$.
\end{proof}}{
Informally, any term $\TRSonetolambdaII{\funone_i}$
has the form $H_iV_1\ldots V_h$ where 
$$
V_i=\lambda\varone_1.\ldots.\lambda\varone_h.\lambda\vartwo_1.\ldots.\lambda\vartwo_{\arity{\funone_i}}.
\lambdaone_{\seqpone_1,\ldots,\seqpone_n}\vartwo_1\ldots\vartwo_{\arity{\funone_i}}\valuetwo_1\ldots\valuetwo_n,
$$
$n$ is the number of reduction rules ``governing'' $\funone_i$, $\seqpone_i$ is the sequence
of patterns appearing in the $i$-th such rule and $\valuetwo_i$ corresponds
to the rhs of the same rule. As a consequence, the natural number $k$ of Theorem~\ref{theo:simulcl} 
can be obtained from the corresponding constants from Lemma~\ref{lemma:mfpc} and
Lemma~\ref{lemma:pm}.
}
Clearly, the constant $k$ in
Theorem~\ref{theo:simulcl} depends on $\TRSone$, but is independent on the particular
term $\termtwo$.
\condinc{
\section{Graph Representation}
\label{Sect:GraphRep}
The previous two sections proved the main simulation result of the paper.
To complete the picture, we show in this section that the unitary cost model
for the (weak call-by-value) $\lambda$-calculus (and hence the number of
rewriting in a constructor term rewriting system) is polynomially related to the actual cost of implementing those reductions\footnote{As mentioned in the introduction, see~\cite{Sands:Lambda02} for another proof of this with other means.}. We do so by introducing term graph rewriting, following~\cite{TGRbarendregt} but
adapting the framework to call-by-value constructor rewriting. Contrarily to what
we did in Section~\ref{sect:CRS}, we will stay abstract here: our attention will
not be restricted to the particular graph rewrite system that is needed to implement 
reduction in the $\lambda$-calculus. 

We refer the reader to our~\cite{DLM09} for more details on efficient simulations between term graph rewriting and constructor term rewriting, both under innermost (i.e., call-by-value) and outermost (i.e., call-by-name) reduction strategies.

\begin{definition}[Labelled Graph]
Given a signature $\sigone$, a \emph{labelled graph over $\sigone$} consists of a directed
acyclic graph together with an ordering on the outgoing edges of each node and a (partial)
labelling of nodes with symbols from $\sigone$ such that the out-degree of each node
matches the arity of the corresponding symbols (and is $0$ if the labelling is undefined).
Formally, a labelled graph is a triple $\tgone=(\vsone,\ordone,\labelone)$
where: 
\begin{varitemize}
\item
  $\vsone$ is a set of \emph{vertices}.
\item
  $\ordone:\vsone\rightarrow\vsone^*$ is a (total) \emph{ordering function}.
\item
  $\labelone:\vsone\rightharpoonup\vsone$ is a (partial) \emph{labelling function} such
  that the length of $\ordone(\verone)$ is the arity of $\labelone(\verone)$ if
  $\labelone(\verone)$ is defined and is $0$ otherwise.
\end{varitemize}
A labelled graph $(\vsone,\ordone,\labelone)$ is \emph{closed} iff $\labelone$ is a 
total function. 
\end{definition}
Consider the signature $\Sigma=\{a,b,c,d\}$, where
arities of $a,b,c,d$ are $2$, $1$, $0$, $2$ respectively, and
$b$, $c$, $d$ are constructors. Examples of labelled graphs over 
the signature $\Sigma$ are the following ones:
\begin{displaymath}
\xymatrix@R=15pt@C=8pt{
     & a \ar@/_/[d]\ar@/^/[d] &   \\
     & b \ar[d] &   \\
     & d \ar[dl]\ar[dr] &   \\
b \ar[d] &          & c \\
\bot &          &   \\
}
\hspace{20pt}
\xymatrix@R=15pt@C=8pt{
a\ar@/^/[dr]\ar@/_/[dr] & & b \ar[dl]\\
          & \bot &                         \\
}
\hspace{20pt}
\xymatrix@R=15pt@C=8pt{
     & a\ar@/_1pc/[dd]\ar[d] &   \\
     & b \ar[d] &   \\
     & a \ar[dl]\ar[dr] &   \\
\bot &          & b\ar[d] \\
     &          & \bot  \\
}
\end{displaymath}
The symbol $\bot$ denotes vertices where the underlying labelling function
is undefined (and, as a consequence, no edge departs from such vertices).
Their role is similar to the one of variables in terms.

If one of the vertices of a labelled graph is selected as the \emph{root}, we obtain
a term graph:
\begin{definition}[Term Graphs]
A \emph{term graph}, is a quadruple 
$\tgone=(\vsone,\ordone,\labelone,\rootone)$, where 
$(\vsone,\ordone,\labelone)$ is a labelled graph and
$\rootone\in\vsone$ is the \emph{root} of the term graph.
\end{definition}
The following are graphic representations of some term graphs.
\begin{displaymath}
\xymatrix@R=15pt@C=8pt{
     &  *+[o][F]{a} \ar@/_/[d]\ar@/^/[d] &   \\
     & b \ar[d] &   \\
     & a \ar[dl]\ar[dr] &   \\
b \ar[d] &          & c \\
\bot &          &   \\
}
\hspace{20pt}
\xymatrix@R=15pt@C=8pt{
 a\ar@/^/[dr]\ar@/_/[dr] & &  *+[o][F]{b} \ar[dl]\\
          & \bot &                         \\
}
\hspace{20pt}
\xymatrix@R=15pt@C=8pt{
     & *+[o][F]{a}\ar@/_1pc/[dd]\ar[d] &   \\
     & b \ar[d] &   \\
     & a \ar[dl]\ar[dr] &   \\
\bot &          & b\ar[d] \\
     &          & \bot  \\
}
\end{displaymath}
The root is the only vertex drawn inside a circle.

There are some classes of paths which are particularly relevant for our purposes
\begin{definition}[Paths]
A path $\verone_1,\ldots,\verone_n$ in a labelled graph
$\tgone=(\vsone,\ordone,\labelone)$ is said to be:
\begin{varitemize}
\item
  A \emph{constructor path} iff for every $1\leq i\leq n$, the symbol
  $\labelone(\verone_i)$ is a constructor;
\item
  A \emph{pattern path} iff for every $1\leq i\leq n$, 
  $\labelone(\verone_i)$ is either a constructor symbol or is undefined;
\item
  A \emph{left path} iff $n\geq 1$, the symbol $\labelone(\verone_1)$ is 
  a function symbol and $\verone_2,\ldots,\verone_n$
  is a pattern path.
\end{varitemize}
\end{definition}

\begin{definition}[Homomorphisms]
An \emph{homomorphism} between two labelled graphs 
$\tgone=(\vsone_\tgone,\ordone_\tgone,\labelone_\tgone)$ and 
$\tgtwo=(\vsone_\tgtwo,\ordone_\tgtwo,\labelone_\tgtwo)$ over the same 
signature $\sigone$ is a function $\homone$ from $\vsone_\tgone$ to $\vsone_\tgtwo$ 
preserving the term graph structure. In particular
\begin{eqnarray*}
  \labelone_\tgtwo(\homone(\verone))&=&\labelone_\tgone(\verone)\\
  \ordone_\tgtwo(\homone(\verone))&=&\homone^*(\ordone_\tgone(\verone))
\end{eqnarray*}
for any $\verone\in\domain{\labelone}$, where $\homone^*$ is the obvious
generalization of $\homone$ to sequences of vertices. 
An \emph{homomorphism} between two term graphs 
$\tgone=(\vsone_\tgone,\ordone_\tgone,\labelone_\tgone,\rootone_\tgone)$ and 
$\tgtwo=(\vsone_\tgtwo,\ordone_\tgtwo,\labelone_\tgtwo,\rootone_\tgtwo)$ is
an homomorphism between $(\vsone_\tgone,\ordone_\tgone,\labelone_\tgone)$ and 
$(\vsone_\tgtwo,\ordone_\tgtwo,\labelone_\tgtwo)$ such that
$\homone(\rootone_\tgone)=\rootone_\tgtwo$. Two labelled graphs $\tgone$ and $\tgtwo$ are 
isomorphic iff there is a bijective homomorphism from
$\tgone$ to $\tgtwo$; in this case, we write $\tgone\cong\tgtwo$. Similarly
for term graphs.
\end{definition}
In the following, we will consider term graphs modulo isomorphism, i.e., $\tgone=\tgtwo$
iff $\tgone\cong\tgtwo$. Observe that two isomorphic term graphs have the same graphical
representation.
\begin{definition}[Graph Rewrite Rules]
A \emph{graph rewrite rule} over a signature $\sigone$ 
is a triple $\rrone=(\tgone,\rootone,\roottwo)$ such that:
\begin{varitemize}
\item
  $\tgone$ is a labelled graph;
\item
  $\rootone,\roottwo$ are vertices of $\tgone$, called 
  the \emph{left root} and the \emph{right root} of $\rrone$,
  respectively.
\item
  Any path starting in $\rootone$ is a
  left path.
\end{varitemize}
\end{definition}
The following are examples of graph rewriting rules, assuming $a$ to be a function
symbol and $b,c,d$ to be constructors:
\begin{displaymath}
\xymatrix@R=15pt@C=8pt{
     &  *+[o][F]{a} \ar@/_/[d]\ar@/^/[d] &   \\
     & b \ar[d] &   \\
     & d \ar[dl]\ar[dr] &   \\
b \ar[d] &          & c \\
*+[F]{\bot} &          &   \\
}
\hspace{40pt}
\xymatrix@R=15pt@C=8pt{
 *+[o][F]{a}\ar@/^/[dr]\ar@/_/[dr] & &  *+[F]{b} \ar[dl]\\
          & \bot &                         \\
}
\hspace{40pt}
\xymatrix@R=15pt@C=8pt{
        & *+[o][F]{a}\ar[dl]\ar[dr] &  &  *+[F]{c}\\
b\ar[d] &                  & b\ar[d] &   \\
\bot    &                  & \bot    &   \\
}
\end{displaymath}
\begin{definition}[Subgraphs]
Given a labelled graph $\tgone=(\vsone_\tgone,\ordone_\tgone,\labelone_\tgone)$ 
and any vertex $\verone\in\vsone_\tgone$, the \emph{subgraph of $\tgone$ rooted
at $\verone$}, denoted $\subgr{\tgone}{\verone}$, is the
term graph $(\vsone_{\subgr{\tgone}{\verone}},\ordone_{\subgr{\tgone}{\verone}},\labelone_{\subgr{\tgone}{\verone}},
\rootone_{\subgr{\tgone}{\verone}})$ where
\begin{varitemize}
\item
  $\vsone_{\subgr{\tgone}{\verone}}$ is the subset of $\vsone_\tgone$ whose elements
  are vertices which are reachable from $\verone$ in $\tgone$.
\item
  $\ordone_{\subgr{\tgone}{\verone}}$ and $\labelone_{\subgr{\tgone}{\verone}}$ are
  the appropriate restrictions of $\ordone_\tgone$ and $\labelone_\tgone$
  to $\vsone_{\subgr{\tgone}{\verone}}$.
\item
  $\rootone_{\subgr{\tgone}{\verone}}$ is $\verone$.
\end{varitemize}
\end{definition}
\begin{definition}[Redexes]
Given a labelled graph $\tgone$, a \emph{redex} for $\tgone$ is
a pair $(\rrone,\homone)$, where $\rrone$ is a rewrite rule 
$(\tgtwo,\rootone,\roottwo)$ and $\homone$ is an homomorphism
between $\subgr{\tgtwo}{\rootone}$ and $\tgone$
such that for any vertex $\verone\in\vsone_{\subgr{\tgtwo}{\rootone}}$
with $\verone\notin\domain{\labelone_{\subgr{\tgtwo}{\rootone}}}$, 
any path starting in $\homone(\verone)$ is a constructor path.
\end{definition}
The last condition in the definition of a redex is needed to
capture the call-by-value nature of the rewriting process.

Given a term graph $\tgone$ and a redex $((\tgtwo,\rootone,\roottwo),\homone)$,
the result of firing the redex is another term graph obtained by
successively applying the following three steps to $\tgone$:
\begin{varenumerate}
\item
  The \emph{build phase}: create an isomorphic copy of the portion of
  $\subgr{\tgtwo}{\roottwo}$ not contained in
  $\subgr{\tgtwo}{\rootone}$, and add it to $\tgone$, obtaining $\tgthree$.
  The underlying ordering and labelling functions are defined in the natural
  way.
\item
  The \emph{redirection phase}: all edges in $\tgthree$ pointing to $\homone(\rootone)$
  are replaced by edges pointing to the copy of $\roottwo$. If $\homone(\rootone)$
  is the root of $\tgone$, then the root of the newly created graph will be
  the newly created copy of $\roottwo$. The graph $\tgfour$ is obtained.
\item
  The \emph{garbage collection phase}: all vertices which are not accessible
  from the root of $\tgfour$ are removed. The graph $\tgfive$ is obtained.
\end{varenumerate}
We will write $\tgone\rewrite{(\tgtwo,\rootone,\roottwo)}\tgfive$ (or simply
$\tgone\rewrgraph\tgfive$, if this does not cause ambiguity) in this case.

As an example, consider the term graph $\tgone$ and the rewriting rule 
$\rrone=(\tgtwo,\rootone,\roottwo)$:
\begin{displaymath}
\xymatrix@R=15pt@C=8pt{
  &  *+[o][F]{a}\ar[dl]\ar[dr] &   \\
  b \ar[d] & & a \ar[ll]\ar[dll] \\
  c & &  \\
  & \tgone & \\
}
\hspace{40pt}
\xymatrix@R=15pt@C=8pt{
 *+[o][F]{a} \ar[d]\ar[ddrr] & & *+[F]{b}\ar[d] \\
 b \ar[d] & & a\ar[ll]\ar[d] \\
\bot & & c \\
 & \rrone & \\
}
\end{displaymath}
There is an homomorphism $\homone$ from
$\subgr{\tgtwo}{\rootone}$ to
$\tgone$. In particular, $\homone$ maps $\rootone$ to
the rightmost vertex in $\tgone$.
Applying the build phase and the redirection phase we get $\tgthree$
and $\tgfour$ as follows:
\begin{displaymath}
\xymatrix@R=15pt@C=8pt{
  &  *+[o][F]{a}\ar[dl]\ar[dr] & & b\ar[d]  \\
  b \ar[d] & & a \ar[ll]\ar[dll] & a\ar@/^/[dlll]\ar@/_1pc/[lll] \\
  c & & & \\
  & \tgthree & & \\
}
\hspace{40pt}
\xymatrix@R=15pt@C=8pt{
  &  *+[o][F]{a}\ar[dl]\ar[rr] & & b\ar[d]  \\
  b \ar[d] & & a \ar[ll]\ar[dll] & a\ar@/^/[dlll]\ar@/_1pc/[lll] \\
  c & & & \\
  & \tgfour & & \\
}
\end{displaymath}
Finally, applying the garbage collection phase, we get the
result of firing the redex $(\rrone,\homone)$:
\begin{displaymath}
\xymatrix@R=15pt@C=8pt{
 & *+[o][F]{a}\ar@/_/[dddl]\ar[d] & \\
 & b\ar[d] & \\
 & a\ar[dl]\ar[dr] & \\
 b\ar[rr] & & c\\
 & I & \\
}
\end{displaymath}

\begin{definition}
A constructor graph rewrite system (CGRS) over a signature $\sigone$ consists of
a set of graph rewrite rules $\sgrone$ on $\sigone$.
\end{definition}
\subsection{From Term Rewriting to Graph Rewriting}
Any term $\termone$ over the signature $\sigone$ can be turned into a
graph $\tgone$ in the obvious way: $\tgone$ will be
a tree and vertices in $\tgone$ will be in one-to-one correspondence
with symbol occurrences in $\termone$. Conversely, any term graph
$\tgone$ over $\sigone$ can be turned into a term $\termone$
over $\sigone$ (remember: we only consider acyclic graphs here).
Similarly, any term rewrite rule $\termone\rewrTRS\termtwo$ 
over the signature $\sigone$ can
be translated into a graph rewrite rule 
$(\tgone,\rootone,\roottwo)$
as follows:
\begin{varitemize}
\item
  Take the graph representing $\termone$ and $\termtwo$. They are
  trees, in fact. 
\item
  From the union of these two trees, share those
  nodes representing the same variable in $\termone$ and $\termtwo$.
  This is $\tgone$.
\item
  Take $\rootone$ to be the root of $\termone$ in $\tgone$ and
  $\roottwo$ to be the root of $\termtwo$ in $\tgone$.
\end{varitemize}

As an example, consider the rewriting rule
$$
a(b(x),y)\rewrTRS b(a(y,a(y,x))).
$$
Its translation as a graph rewrite rule is the following:
\begin{displaymath}
\xymatrix@R=15pt@C=8pt{
 & *+[o][F]{a}\ar[dl]\ar[dr] & & *+[F]{b}\ar[d] & \\
 b\ar[d] & & \bot & a\ar[l]\ar[dr] & \\
 \bot & & & & a\ar@/^/[ull]\ar@/^1pc/[llll]\\
}
\end{displaymath}

An arbitrary constructor rewriting system can be turned 
into a constructor graph rewriting system:
\begin{definition}
Given a constructor rewriting system $\srrone$ over $\sigone$, the 
corresponding constructor graph rewriting system $\sgrone$ is defined 
as the class of graph rewrite rules corresponding to those in $\srrone$.
Given a term $\termone$, $\CtoCG{\termone}$ will be the corresponding
graph, while the term graph $\tgone$ corresponds to the term $\CGtoC{\tgone}$.
\end{definition}

Let us now consider graph rewrite rules corresponding to
rewrite rules in $\srrone$. It is easy to realize that
the following invariant is preserved while performing
rewriting in $\CtoCG{\srrone}$: whenever any vertex $\verone$ can
be reached by two distinct paths starting at the root
(i.e., $\verone$ is \emph{shared}), any path starting at
$\verone$ is a constructor path. A term graph
satisfying this invariant is said to be 
\emph{constructor-shared}.

Constructor-sharedness holds for term graphs coming from terms and
is preserved by graph rewriting:
\begin{lemma}\label{lemma:constructorsharedness}
For every closed term $\termone$, $\CtoCG{\termone}$ is constructor-shared.
Moreover, if $\tgone$ is closed and constructor-shared and $\tgone\rewrgraph\tgfive$, 
then $\tgfive$ is constructor-shared.
\end{lemma}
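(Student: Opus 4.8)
The plan is to prove the two claims separately. The first is immediate: $\CtoCG{\termone}$ is by construction a tree, its vertices being in bijection with the symbol occurrences of $\termone$, so no vertex is reachable from the root along two distinct paths and the constructor-sharedness condition is satisfied vacuously (and $\CtoCG{\termone}$ is closed because $\termone$ is).

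For the preservation statement, fix a closed constructor-shared term graph $\tgone$ and a redex $((\tgtwo,\rootone,\roottwo),\homone)$, where $(\tgtwo,\rootone,\roottwo)$ is the graph rewrite rule obtained from a term rule $\funone(\patone_1,\ldots,\patone_n)\rewrTRS\termtwo$; let $\tgthree,\tgfour,\tgfive$ be the graphs produced by the build, redirection and garbage-collection phases. I would first record four structural facts. (a) $\homone(\rootone)$ carries a function symbol; since in a constructor-shared graph every shared vertex roots a constructor path and is in particular constructor-labelled, $\homone(\rootone)$ is \emph{not} shared in $\tgone$, hence receives at most one incoming edge (none if it is the root). (b) For every unlabelled vertex $\verone$ of $\subgr{\tgtwo}{\rootone}$ the redex condition forces every path out of $\homone(\verone)$ to be a constructor path, so $\subgr{\tgone}{\homone(\verone)}$ contains only constructor-labelled vertices; call these subgraphs the \emph{argument subgraphs}, and note they are modified by none of the three phases (they contain no function symbol, so no edge inside them is redirected, and nothing fresh is reachable from them), beyond possibly being discarded in garbage collection. (c) The part of $\subgr{\tgtwo}{\roottwo}$ lying outside $\subgr{\tgtwo}{\rootone}$ is exactly the set of internal nodes of the right-hand side tree, so in $\tgfive$ the newly built vertices form a tree rooted at the copy $\roottwo'$ of $\roottwo$, whose outgoing edges lead only to other fresh vertices or into argument subgraphs, and the only fresh vertex that can receive an edge from outside the fresh part is $\roottwo'$, which receives at most the single edge obtained by redirecting the edge into $\homone(\rootone)$. (d) Consequently, no edge of $\tgfive$ that is not already an edge of $\tgone$ has as target an old vertex lying outside the argument subgraphs.

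Next I would take a vertex $\verone$ of $\tgfive$ reachable from the root by two distinct paths and show it roots a constructor path, by cases. If $\verone$ lies in an argument subgraph this is immediate from (b). If $\verone$ is a fresh vertex, then by (c) any path to it factors through $\roottwo'$ and then a unique path inside the fresh tree; since $\roottwo'$ has at most one incoming edge, from a vertex which by (a) is itself unshared, $\roottwo'$ is unshared, hence so is $\verone$, and this case does not arise (the root of an acyclic graph is reachable only by the empty path). If $\verone$ is an old vertex outside every argument subgraph, then by (c) and (d) no path from the root to $\verone$ in $\tgfive$ meets a fresh vertex or an argument-subgraph vertex, so it traverses only edges already present in $\tgone$ and is a path of $\tgone$; two distinct such paths show $\verone$ is shared in $\tgone$, whence $\subgr{\tgone}{\verone}$ consists entirely of constructors. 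That subgraph contains no function vertex, in particular not $\homone(\rootone)$, so no edge leaving it is redirected and no fresh vertex is reachable from it; thus $\subgr{\tgfive}{\verone}=\subgr{\tgone}{\verone}$ and $\verone$ still roots a constructor path. The degenerate situations — $\roottwo\in\subgr{\tgtwo}{\rootone}$ (right-hand side a single variable) and $\homone(\rootone)$ being the root of $\tgone$ — fall under the same bookkeeping: in the former $\tgfive$ is an argument subgraph, or $\tgone$ with the edge into $\homone(\rootone)$ redirected into one; in the latter every surviving old vertex lies in an argument subgraph.

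The only real obstacle is the chain of observations (a)–(c): the reason the redirection phase cannot leak sharing into the freshly built right-hand side is precisely that the contracted redex's root $\homone(\rootone)$ is function-labelled and therefore unshared in a constructor-shared graph; and in the "old vertex" case one must argue carefully that a path of $\tgfive$ to such a vertex never enters the fresh structure nor uses a modified edge, so that it lifts to a path of $\tgone$. The remaining verifications — that closedness is preserved as well, and that graph rewrite rules coming from term rules have the shape described in (c) — are routine.
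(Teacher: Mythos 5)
Your proof is correct and follows essentially the same route as the paper's: both rest on the observations that the function-labelled vertex $\homone(\rootone)$ cannot be shared in a constructor-shared graph (so the redirection phase moves at most one edge, onto the freshly built copy of $\roottwo$, which has no other incoming edges), that the newly built portion of the right-hand side shares only at variable positions whose images root constructor paths by the redex condition, and that garbage collection is harmless. The paper organizes this phase by phase while you do a single case analysis on the shared vertices of the resulting graph, but the substance is the same.
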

\begin{proof}
The fact $\CtoCG{\termone}$ is constructor-shared for every $\termone$ follows
from the way the $\CtoCG{\cdot}$ map is defined: it does not introduce any
sharing. Now, suppose $\tgone$ is constructor-shared and
$$
\tgone\rewrite{(\tgtwo,\rootone,\roottwo)}\tgfive
$$
where $(\tgtwo,\rootone,\roottwo)$ corresponds to a term rewrite
rule $\termone\rewrTRS\termtwo$. The term graph $\tgthree$ obtained from
$\tgone$ by the build phase is itself constructor-shared: it is obtained from
$\tgone$ by adding some new nodes, namely an isomorphic copy of the portion
of $\subgr{\tgtwo}{\roottwo}$ not contained in $\subgr{\tgtwo}{\rootone}$. Notice
that $\tgthree$ is constructor-shared in a stronger sense: any vertex which
can be reached from the newly created copy of $\roottwo$ by two distinct paths
must be a constructor path. This is a consequence of $(\tgtwo,\rootone,\roottwo)$ 
being a graph rewrite rule corresponding to a term rewrite
rule $\termone\rewrTRS\termtwo$, where the only shared vertices are those
where the labelling function is undefined. The redirection phase preserves
itself constructor-sharedness, because only one pointer is redirected
(the vertex is labelled by a function symbol) and the destination 
of this redirection is a vertex (the newly created
copy of $\roottwo$) which had no edge incident to it. Clearly, the
garbage collection phase preserve constructor-sharedness.
\end{proof}
\begin{lemma}\label{lemma:NFconstructorshared}
A closed term graph $\tgone$ in $\sgrone$ is a 
normal form iff $\CGtoC{\tgone}$ is a normal form.
\end{lemma}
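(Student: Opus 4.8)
The plan is to prove both implications by contraposition, reducing everything to a correspondence between the redexes of $\tgone$ and the redexes of $\CGtoC{\tgone}$. Recall that $\CGtoC{\cdot}$ simply unfolds a (necessarily acyclic) term graph: a position $p$ in the term $\CGtoC{\tgone}$ corresponds to the unique vertex $\verone$ of $\tgone$ reached by following $p$ from the root, and the subterm of $\CGtoC{\tgone}$ at $p$ is exactly $\CGtoC{\subgr{\tgone}{\verone}}$. I will also use two elementary facts: (i) since the rewrite rules of $\srrone$ are left-linear, the left-hand side of each rule, viewed as a graph, is a tree with $\errorterm$-labelled leaves playing the role of the pairwise distinct pattern variables, so a homomorphism from such a left-hand-side graph into a term graph is nothing but a first-order match of the corresponding pattern; (ii) since $\tgone$ is closed its labelling is total, hence every vertex reachable from a vertex $\verone'$ is labelled, and therefore ``every path starting at $\verone'$ is a constructor path'' is equivalent to ``$\CGtoC{\subgr{\tgone}{\verone'}}$ is a constructor term''.

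($\Rightarrow$) Suppose $\CGtoC{\tgone}$ is not a normal form, so at some position $p$ it has a subterm of the form $\funone(\patone_1\sigma,\ldots,\patone_n\sigma)$ where $\funone(\patone_1,\ldots,\patone_n)\rewrTRS\termtwo$ is a rule of $\srrone$ and $\sigma$ assigns constructor terms to the pattern variables. Let $\verone$ be the vertex of $\tgone$ at path $p$ and let $\rho=(\tgtwo,\rootone,\roottwo)$ be the graph rule corresponding to $\funone(\patone_1,\ldots,\patone_n)\rewrTRS\termtwo$. Since $\CGtoC{\subgr{\tgone}{\verone}}$ matches the left-hand side, fact (i) yields a homomorphism $\homone\colon\subgr{\tgtwo}{\rootone}\to\tgone$ with $\homone(\rootone)=\verone$ sending each variable leaf of the left-hand side to the vertex of $\tgone$ rooting the corresponding component of $\sigma$. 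Each such component is a constructor term, so by fact (ii) every path starting at its image is a constructor path; hence $(\rho,\homone)$ is a redex of $\tgone$ and $\tgone$ is not a normal form.

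($\Leftarrow$) Suppose $\tgone$ is not a normal form and fix a redex $(\rho,\homone)$ with $\rho=(\tgtwo,\rootone,\roottwo)$ corresponding to a term rule $\termone\rewrTRS\termtwo$, and put $\verone=\homone(\rootone)$. Since $\tgone$ carries no garbage (term graphs produced by $\CtoCG{\cdot}$ and by graph rewriting are accessible from the root, garbage collection being part of the rewriting step), $\verone$ lies on some path $p$ from the root, and the subterm of $\CGtoC{\tgone}$ at $p$ is $\CGtoC{\subgr{\tgone}{\verone}}$. Reading off $\homone$ as a match (fact (i), using left-linearity so that the components at the variable leaves are unambiguously determined), $\CGtoC{\subgr{\tgone}{\verone}}$ equals an instance $\termone\sigma$ of the left-hand side, where $\sigma$ maps each pattern variable $\varone$ to $\CGtoC{\subgr{\tgone}{\homone(\verone_\varone)}}$ for the corresponding leaf $\verone_\varone$. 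The redex condition forces every path starting at $\homone(\verone_\varone)$ to be a constructor path, so by fact (ii) each $\sigma(\varone)$ is a constructor term; therefore $\termone\sigma$ is a genuine call-by-value redex in $\CGtoC{\tgone}$, and $\CGtoC{\tgone}$ is not a normal form.

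The routine part is the unfolding bookkeeping of fact (i), which is where left-linearity of $\srrone$ is essential. The only genuinely delicate point is the equivalence in fact (ii): it would fail for non-closed term graphs, since an $\errorterm$-vertex inside $\subgr{\tgone}{\verone'}$ would make ``only constructor paths start at $\verone'$'' true while $\CGtoC{\subgr{\tgone}{\verone'}}$ is not a constructor term, so the closedness hypothesis is used here in an essential way, exactly as canonicity was used in Lemma~\ref{lemma:NFcanonical}. Note that constructor-sharedness is \emph{not} needed for this lemma; it will matter only when analysing the cost of the simulation.
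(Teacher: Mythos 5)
Your proof is correct and follows essentially the same route as the paper's own (two-sentence) argument: both directions reduce to the observation that redexes of $\tgone$ and of $\CGtoC{\tgone}$ are in correspondence. You merely spell out the details the paper leaves implicit — left-linearity turning homomorphisms from the left-hand-side tree into first-order matches, closedness giving the equivalence between constructor paths and constructor subterms, and accessibility of the matched vertex — all of which are accurate.
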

\begin{proof}
Clearly, if a closed term graph $\tgone$ is in normal form,
then $\CGtoC{\tgone}$ is a term in normal form, because each
redex in $\tgone$ translates to a redex in $\CGtoC{\tgone}$.
On the other hand, if $\CGtoC{\tgone}$ is in normal form,
then $\tgone$ is in normal form: each redex in $\CGtoC{\tgone}$
translates back to a redex in $\tgone$.
\end{proof}
Reduction at the level of graphs correctly simulates reduction
at the level of terms, but only if the underlying graphs
are constructor shared:
\begin{lemma}\label{lemma:CGtoC}
If $\tgone$ is closed and constructor-shared and 
$\tgone\rewrgraph\tgfive$, then
$\CGtoC{\tgone}\rewrgraph\CGtoC{\tgfive}$.
\end{lemma}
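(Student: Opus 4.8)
The plan is to trace the three phases of the graph rewrite step through the unravelling map $\CGtoC{\cdot}$ and to show that, on a constructor-shared graph, a single graph step is read back as a single term step. Suppose $\tgone\rewrite{(\tgtwo,\rootone,\roottwo)}\tgfive$ is witnessed by the redex $((\tgtwo,\rootone,\roottwo),\homone)$, where $(\tgtwo,\rootone,\roottwo)$ is the graph rule built from the term rule $\termone\rewrTRS\termtwo$ of $\srrone$. First I would record the crucial consequence of the hypothesis: since every path out of $\rootone$ is a left path, the vertex $\homone(\rootone)$ carries a function symbol, so by the constructor-shared invariant it cannot be reached from the root of $\tgone$ along two distinct paths. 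Hence it occurs at a \emph{unique} position, call it $\pi$, in the tree $\CGtoC{\tgone}$.

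Next I would isolate a substitution lemma for $\CGtoC{\cdot}$: if a term graph is obtained from the (tree-shaped) graph of a linear pattern by plugging, into each hole $v$ where $\labelone$ is undefined, a subgraph rooted at some vertex $\homone(v)$, then its unravelling is the pattern term with each variable $x$ replaced by $\CGtoC{\subgr{\tgone}{\homone(v_x)}}$; this holds regardless of any sharing present, because unravelling simply follows the reachable tree structure. Applying this to $\subgr{\tgone}{\homone(\rootone)}$ and to the left-hand side $\funone(\patone_1,\ldots,\patone_n)$, the subterm of $\CGtoC{\tgone}$ at $\pi$ is exactly $\termone\theta$, where $\theta$ maps each rule variable $x$ to $\CGtoC{\subgr{\tgone}{\homone(v_x)}}$. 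The redex condition forces every path out of $\homone(v_x)$ to be a constructor path, so each $\theta(x)$ is a constructor term, and $\termone\theta\rewrTRS\termtwo\theta$ is therefore a legitimate call-by-value redex of $\CGtoC{\tgone}$ at $\pi$.

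Then I would handle the right-hand side. The build phase adds a fresh isomorphic copy of the portion of $\subgr{\tgtwo}{\roottwo}$ not already contained in $\subgr{\tgtwo}{\rootone}$, with its boundary leaves pointing into the matched argument subgraphs; by the same substitution lemma this new region unravels to $\termtwo\theta$ (a variable occurring several times in the rhs is shared in the rule graph, and simply produces several copies of $\theta(x)$ in the unravelling, as it should). The redirection phase redirects the unique incoming edge of $\homone(\rootone)$ — unique precisely because that vertex is unshared — to the new copy of $\roottwo$, which on trees is exactly the replacement of the single occurrence of $\termone\theta$ at $\pi$ by $\termtwo\theta$; and garbage collection only deletes unreachable vertices, hence does not alter the unravelling, so $\CGtoC{\tgfive}=\CGtoC{\tgfour}$. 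Assembling the three observations, $\CGtoC{\tgfive}$ is $\CGtoC{\tgone}$ with $\termone\theta$ at position $\pi$ replaced by $\termtwo\theta$, i.e.\ $\CGtoC{\tgone}\rewrTRS\CGtoC{\tgfive}$.

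The main obstacle is the bookkeeping around sharing: stating and proving the substitution lemma cleanly (so that ``plug subgraphs into holes, then unravel'' agrees with ``unravel the subgraphs, then substitute''), and invoking constructor-sharedness at exactly the right point to guarantee that the redirected function vertex is unshared. This is the load-bearing step — were $\homone(\rootone)$ shared, one graph step would realise several term steps and the lemma as stated would be false. Once that correspondence is pinned down, verifying that the build phase produces $\termtwo\theta$ and that garbage collection is invisible to $\CGtoC{\cdot}$ is routine.
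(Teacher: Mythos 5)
Your proof is correct and follows essentially the same route as the paper: the paper's own argument is a two-line appeal to the known fact that one graph step unravels to $n$ term steps, plus the observation that in a constructor-shared graph no redex can be shared (since $\homone(\rootone)$ is labelled by a function symbol), forcing $n=1$ — which is precisely the load-bearing step you identify. Your write-up simply makes explicit the standard bookkeeping (the substitution lemma for unravelling and the phase-by-phase analysis) that the paper delegates to the literature.
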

\begin{proof}
The fact each reduction step starting in $\tgone$ can be
mimicked by $n$ reduction steps in $\CGtoC{\tgone}$ is known
from the literature. If $\tgone$ is constructor-shared,
then $n=1$, because any redex in a constructor-shared
term graph cannot be shared.
\end{proof}
A counterexample, when $\tgone$ in not constructor-shared
can be easily built: consider the term rewrite rule
$a(c,c)\rewrTRS c$ and the following term graph, which is not
constructor-shared and correspond to $a(a(c,c),a(c,c))$:
\begin{displaymath}
\xymatrix@R=15pt@C=8pt{
 & *+[o][F]{a}\ar@/_/[d]\ar@/^/[d] &  \\
 & a\ar[dl]\ar[dr] & \\
 c & & c\\
}
\end{displaymath}
The term graph rewrites in \emph{one} step to the following one
\begin{displaymath}
\xymatrix@R=15pt@C=8pt{
 & *+[o][F]{a}\ar@/_/[d]\ar@/^/[d] &  \\
 & c & \\
}
\end{displaymath}
while the term $a(a(c,c),a(c,c))$ rewrites to $a(c,c)$ in
\emph{two} steps.

As can be expected, graph reduction is even complete with respect to
term reduction, with the only \emph{proviso} that term graphs must
be constructor-shared:
\begin{lemma}\label{lemma:CtoCG}
If $\termone\rewrTRS\termtwo$, $\tgone$ is constructor-shared 
and $\CGtoC{\tgone}=\termone$, then
$\tgone\rewrgraph\tgfive$, where $\CGtoC{\tgfive}=\termtwo$.
\end{lemma}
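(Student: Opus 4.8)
The plan is to locate, inside $\tgone$, a graph redex mirroring the term redex contracted in $\termone\rewrTRS\termtwo$, fire it, and then check by reading back that the resulting term graph represents $\termtwo$. Since $\termone\rewrTRS\termtwo$, there are a rule $\funone(\patone_1,\ldots,\patone_n)\rewrTRS\termthree$ in $\srrone$, a position in $\termone$, and a substitution $\sigma$ mapping variables to constructor terms, such that the subterm of $\termone$ at that position is $\funone(\patone_1,\ldots,\patone_n)\sigma$ and $\termtwo$ is $\termone$ with this subterm replaced by $\termthree\sigma$.

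First I would identify the head of the redex in $\tgone$. As $\CGtoC{\tgone}=\termone$, the occurrence of $\funone$ heading the contracted subterm corresponds to a vertex $\verone$ of $\tgone$ with $\labelone_{\tgone}(\verone)=\funone$. This vertex is uniquely determined: a function-labelled vertex of a constructor-shared term graph is never shared, hence it is reached by a unique path from the root, and $\subgr{\tgone}{\verone}$ is precisely the subgraph that reads back as $\funone(\patone_1,\ldots,\patone_n)\sigma$. Let $(\tgtwo,\rootone,\roottwo)$ be the graph rewrite rule in $\sgrone$ associated with $\funone(\patone_1,\ldots,\patone_n)\rewrTRS\termthree$. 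Because the system is orthogonal, every variable occurs at most once in the left-hand side, so $\subgr{\tgtwo}{\rootone}$ is a tree; matching $\funone(\patone_1,\ldots,\patone_n)\sigma$ against it determines a homomorphism $\homone$ from $\subgr{\tgtwo}{\rootone}$ into $\tgone$, sending each labelled node to the corresponding constructor or function node inside $\subgr{\tgone}{\verone}$ and sending the unlabelled node for a variable $x$ to the root of the subgraph of $\tgone$ reading back as $\sigma(x)$. The pair $((\tgtwo,\rootone,\roottwo),\homone)$ is a legal redex: for every unlabelled vertex $\vertwo$ of $\subgr{\tgtwo}{\rootone}$, the vertex $\homone(\vertwo)$ roots a subgraph reading back as a constructor term, so every path starting at $\homone(\vertwo)$ is a constructor path --- this is exactly where call-by-value rewriting is used.

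Next I would fire $((\tgtwo,\rootone,\roottwo),\homone)$ and compute $\CGtoC{\tgfive}$. The build phase adjoins a fresh copy of the part of $\subgr{\tgtwo}{\roottwo}$ not contained in $\subgr{\tgtwo}{\rootone}$; since the only shared vertices of $\tgtwo$ are its variable nodes, this fresh copy of $\roottwo$ --- with its dangling variable edges routed by $\homone$ into the subgraphs representing the $\sigma(x)$ --- reads back exactly as $\termthree\sigma$. The redirection phase then makes every edge pointing to $\homone(\rootone)=\verone$ point instead to the fresh copy of $\roottwo$, and resets the root if $\verone$ was the root of $\tgone$; as $\verone$ is a non-shared function vertex, this is a single well-defined redirection, whose effect after garbage collection is to replace the occurrence of $\subgr{\tgone}{\verone}$ by the fresh copy of $\roottwo$. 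Reading back, $\CGtoC{\tgfive}$ is $\termone$ with the subterm at the contracted position replaced by $\termthree\sigma$, that is $\CGtoC{\tgfive}=\termtwo$, and $\tgone\rewrgraph\tgfive$ as required.

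I expect the main obstacle to be the read-back bookkeeping of the last step: one must verify that the build, redirection and garbage-collection phases interact with $\CGtoC{\cdot}$ so as to perform exactly one term-level substitution-and-replacement, creating no unintended sharing and redirecting the incoming edges of no vertex other than the redex head. All of this reduces to the constructor-sharedness invariant (Lemma~\ref{lemma:constructorsharedness}) together with the fact that the graph rewrite rules coming from $\srrone$ share only their variable nodes; granting these, the statement is the completeness half of the standard adequacy of term graph rewriting, specialised to constructor-shared graphs, and is dual to Lemma~\ref{lemma:CGtoC}.
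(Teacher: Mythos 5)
Your argument is correct. Note that the paper itself gives no proof of this lemma: it is stated bare, implicitly treated as the standard completeness half of the adequacy of term graph rewriting (the companion Lemma~\ref{lemma:CGtoC} is likewise discharged by appeal to the literature plus the constructor-sharedness observation). Your proof supplies exactly the argument the authors are taking for granted, and it isolates the two points where the hypotheses genuinely matter: constructor-sharedness forces the function-labelled redex head to be unshared, so the redirection phase replaces exactly the one intended occurrence; and the call-by-value restriction (variables instantiated by constructor terms) is what makes the matching homomorphism satisfy the constructor-path side condition in the definition of a graph redex. Nothing is missing.
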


\begin{theorem}[Graph Reducibility]\label{theo:graphreducible}
For every constructor rewrite system $\srrone$ over $\sigone$ and
for every term $\termone$ over $\sigone$, the following two conditions are
equivalent:
\begin{varenumerate}
\item
  $\termone\rewrTRS^n\termtwo$, where $\termtwo$ is in normal form;
\item
  $\CtoCG{\termone}\rewrTRS^n\tgone$, where $\tgone$ is in normal
  form and $\CGtoC{\tgone}=\termtwo$.
\end{varenumerate}
\end{theorem}
\begin{proof}
Suppose $\termone\rewrgraph^n\termtwo$, where $\termtwo$ is in normal form. 
Then, by applying Lemma~\ref{lemma:CtoCG}, we obtain a term graph $\tgone$ such that
$\CtoCG{\termone}\rewrgraph^n\tgone$ and $\CGtoC{\tgone}=\termtwo$.
By Lemma~\ref{lemma:constructorsharedness}, $\tgone$ is canonical and, by 
Lemma~\ref{lemma:NFconstructorshared}, 
it is in normal form. Now, suppose $\CtoCG{\termone}\rewrgraph^n\tgone$ where
$\CGtoC{\tgone}=\termtwo$ and $\tgone$ is in normal form. By
applying $n$ times Lemma~\ref{lemma:CGtoC}, we obtain
that $\CGtoC{\CtoCG{\termone}}\rewrTRS^n\CGtoC{\tgone}=\termtwo$.
But $\CGtoC{\CtoCG{\termone}}=\termone$ and $\termtwo$ is a normal form 
by Lemma~\ref{lemma:NFconstructorshared}, since 
$\CtoCG{\termone}$ and $\tgone$ are
constructor shared due to Lemma~\ref{lemma:constructorsharedness}.
\end{proof}

There are \emph{term} rewrite systems which are not graph reducible, i.e.
for which the two conditions of Theorem~\ref{theo:graphreducible} are
not equivalent (see~\cite{TGRbarendregt}). However, any 
\emph{othogonal constructor} rewrite system is graph reducible, due to the 
strict constraints on the shape of rewrite rules~\cite{Plump90ggacs}.
This result can be considered as a by-product of our analysis, for which graph rewriting
is only instrumental.

\subsection{Lambda-Terms Can Be Efficiently Reduced by Graph Rewriting}
\label{sect:MainResult}
As a corollary of Theorem~\ref{theo:graphreducible} and Theorem~\ref{theo:termreducible},
we obtain the possibility of reducing $\lambda$-terms by term graphs over $\Functions{\TRS}$.
To this purpose, we can use the CGRS $\GRS$ corresponding to $\TRS$:
\begin{corollary}
Let $\lambdaone\in\Lambdaterms$ be a closed term. The following
two conditions are equivalent:
\begin{varenumerate}
\item
  $\lambdaone\rewrlambdav^n\lambdatwo$ where $\lambdatwo$ is in normal form;
\item
  $\TRStoGRS{\LambdatoTRS{\lambdaone}}\rewrTRS^n\tgone$ where
  $\TRSonetolambda{\GRStoTRS{\tgone}}=\lambdatwo$ and $\tgone$ is in normal form.
\end{varenumerate}
\end{corollary}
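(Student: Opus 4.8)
The plan is to obtain this statement as a direct composition of the two reducibility theorems already established, Theorem~\ref{theo:termreducible} and Theorem~\ref{theo:graphreducible}, using $\TRS$ as the intermediate constructor term rewrite system and $\GRS$ (the CGRS corresponding to $\TRS$) as the graph-level system.

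First I would invoke Theorem~\ref{theo:termreducible}: since $\lambdaone$ is closed, condition~1 holds for a given normal form $\lambdatwo$ if and only if there is a term $\termone$ of $\TRS$ in normal form with $\LambdatoTRS{\lambdaone}\rewrTRS^n\termone$ and $\TRStolambda{\termone}=\lambdatwo$. Hence it remains only to re-express, in terms of graph rewriting in $\GRS$, the clause ``$\LambdatoTRS{\lambdaone}$ reduces to a normal form $\termone$ in $n$ steps of $\TRS$''. For this I would apply Theorem~\ref{theo:graphreducible} to the rewrite system $\TRS$ --- which is an orthogonal constructor rewrite system, hence graph reducible, as observed right after that theorem --- and to the closed term $\LambdatoTRS{\lambdaone}\in\TRSterms$. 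That theorem gives: $\LambdatoTRS{\lambdaone}\rewrTRS^n\termone$ with $\termone$ a normal form if and only if $\TRStoGRS{\LambdatoTRS{\lambdaone}}\rewrgraph^n\tgone$ for some normal form $\tgone$ with $\GRStoTRS{\tgone}=\termone$. Composing the two equivalences and eliminating the intermediate term through $\termone=\GRStoTRS{\tgone}$, condition~1 becomes exactly: $\TRStoGRS{\LambdatoTRS{\lambdaone}}\rewrgraph^n\tgone$ with $\tgone$ in normal form and $\TRStolambda{\GRStoTRS{\tgone}}=\lambdatwo$, which is condition~2.

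There is no genuinely hard step here; the care needed is purely bookkeeping. One should check that both theorems preserve the step count exactly --- the same $n$ occurs on each side of each equivalence --- so that the two occurrences of $n$ in the corollary coincide; that the infiniteness of $\Functions{\TRS}$ and $\Rules{\TRS}$ is harmless, since the proof of Theorem~\ref{theo:graphreducible} inspects only finitely many rules along any reduction and relies on orthogonality of the underlying constructor system, which $\TRS$ has; and that the normal-form clauses line up, which follows from Lemma~\ref{lemma:canonicity} and Lemma~\ref{lemma:NFcanonical} (already used inside Theorem~\ref{theo:termreducible}) together with Lemma~\ref{lemma:NFconstructorshared} at the graph level. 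The only thing resembling an obstacle is making sure the intermediate object $\termone$ is threaded consistently through both equivalences, so that the composed statement mentions only $\lambdaone$, $\lambdatwo$ and $\tgone$.
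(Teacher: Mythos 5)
Your proposal is correct and matches the paper's own derivation: the corollary is obtained there precisely as the composition of Theorem~\ref{theo:termreducible} with Theorem~\ref{theo:graphreducible} applied to the orthogonal system $\TRS$ and its associated CGRS $\GRS$, with the intermediate normal form $\termone$ threaded through exactly as you describe. Your bookkeeping remarks (exact preservation of the step count $n$, harmlessness of the infinite signature, and alignment of the normal-form clauses via Lemmas~\ref{lemma:canonicity}, \ref{lemma:NFcanonical} and \ref{lemma:NFconstructorshared}) are the right things to check and are all that the paper leaves implicit.
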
 
However, there are some missing tales. Let us analyze more closely the combinatorics 
of graph rewriting in $\GRS$:
\begin{varitemize}
\item
  Consider a closed $\lambda$-term $\lambdaone$ and a term graph $\tgone$ such
  that $\TRStoGRS{\LambdatoTRS{\lambdaone}}\rewrgraph^*\tgone$. By Proposition~\ref{prop:constred}
  and Lemma~\ref{lemma:CGtoC}, for every constructor $\constr{\varone}{\lambdatwo}$ appearing
  as a label of a vertex in $\tgone$, $\lambdatwo$ is a subterm of $\lambdaone$. 
\item
  As a consequence, if $\TRStoGRS{\LambdatoTRS{\lambdaone}}\rewrgraph^*\tgone\rewrgraph\tgtwo$,
  then the difference $\length{\tgtwo}-\length{\tgone}$ cannot be too big: at most
  $\length{\lambdaone}$. As a consequence, if 
  $\TRStoGRS{\LambdatoTRS{\lambdaone}}\rewrgraph^n\tgone$ then $\length{\tgone}\leq(n+1)\length{\lambdaone}$.
  Here, we exploit in an essential way the possibility of sharing constructors.
\item
  Whenever $\TRStoGRS{\LambdatoTRS{\lambdaone}}\rewrgraph^n\tgone$, computing
  a graph $\tgtwo$ such that $\tgone\rewrgraph\tgtwo$ takes polynomial time in
  $\length{\tgone}$, which is itself polynomially bounded by $n$ and $\length{\lambdaone}$.
\end{varitemize}
Hence:
\begin{theorem}\label{thm:main}
There is a polynomial $p:\N^2\rightarrow\N$ such that for every $\lambda$-term $\lambdaone$,
the normal form of $\TRStoGRS{\LambdatoTRS{\lambdaone}}$ can be computed in time at most 
$p(|\lambdaone|,\Time{\lambdaone})$.
\end{theorem}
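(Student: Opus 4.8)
The plan is to read the bound off the three remarks preceding the statement, turning them into a single polynomial by combining Theorems~\ref{theo:termreducible} and~\ref{theo:graphreducible} (as summarised in the Corollary just stated) with Proposition~\ref{prop:constred}. The real content is an explicit, reduction-order-independent bound on the size of the term graphs occurring along a reduction of $\TRStoGRS{\LambdatoTRS{\lambdaone}}$; everything else is routine bookkeeping about the cost of the graph operations. We may assume $\lambdaone$ is closed and normalizing, the claim being vacuous when $\Time{\lambdaone}=\omega$.

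First I would fix the following procedure: set $\tgone_0=\TRStoGRS{\LambdatoTRS{\lambdaone}}$, and for $i\ge 0$ let $\tgone_{i+1}$ be obtained from $\tgone_i$ by locating an \emph{arbitrary} redex of $\GRS$ and firing it, halting as soon as no redex remains. I claim this halts after exactly $n=\Time{\lambdaone}$ steps and returns the term graph encoding the $\rewrlambdav$-normal form of $\lambdaone$. Since weak call-by-value $\lambda$-reduction satisfies the one-step diamond property, Theorem~\ref{theo:termreducible} (together with Lemma~\ref{lemma:canonicity}, Lemma~\ref{lemma:NFcanonical} and Lemma~\ref{lemma:TRStolam}) shows every maximal $\rewrTRS$-reduction of $\LambdatoTRS{\lambdaone}$ has length $n$; iterating Lemma~\ref{lemma:CGtoC} and invoking Lemma~\ref{lemma:NFconstructorshared} then lifts this to graphs, so every maximal $\rewrgraph$-reduction of $\tgone_0$ (which is constructor-shared by Lemma~\ref{lemma:constructorsharedness}) has length $n$ and terminates in the required normal form. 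Deciding whether a candidate redex may be fired --- i.e.\ whether the subgraphs matched to the unlabelled vertices of the left-hand side are constructor paths, as the definition of redex demands --- is a traversal of $\tgone_i$, hence polynomial in $\length{\tgone_i}$.

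Next I would bound $\length{\tgone_i}$ for every $i\le n$. Each step fires the graph rewrite rule obtained from a term rule $\appTRS(\constr{\varone}{\lambdatwo}(\varone_1,\ldots,\varone_n),\varone)\rewrTRS\LambdatoTRS{\lambdatwo}$ of $\TRS$; by Proposition~\ref{prop:constred}, transported from terms to graphs via Lemma~\ref{lemma:CGtoC}, the body $\lambdatwo$ is a subterm of $\lambdaone$, so $\LambdatoTRS{\lambdatwo}$ contains at most $\length{\lambdaone}$ occurrences of $\appTRS$ and of constructor symbols, each of arity at most $\length{\lambdaone}$. The build phase adds exactly these finitely many vertices, while the variable leaves of the right-hand side are \emph{shared} with subgraphs already present in $\tgone_i$ rather than copied; and since $\tgone_i$ is constructor-shared (Lemma~\ref{lemma:constructorsharedness}), those shared argument subgraphs carry only constructors, so no subgraph containing a function symbol is ever duplicated. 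Hence $\length{\tgone_{i+1}}-\length{\tgone_i}\le\length{\lambdaone}$ (the redirection and garbage-collection phases add no vertices), and as $\length{\tgone_0}$ is linear in $\length{\lambdaone}$ we get $\length{\tgone_i}\le (n+1)\cdot c\,\length{\lambdaone}$ for a fixed constant $c$.

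Finally, computing $\tgone_0$ from $\lambdaone$, and performing the find-redex, build, redirection and garbage-collection phases, are all graph manipulations running in time polynomial in the size of the graph at hand, hence polynomial in $(n+1)\length{\lambdaone}$; summing over the $n$ steps gives a total running time of the form $n\cdot q\big((n+1)\length{\lambdaone}\big)$ for a fixed polynomial $q$, and taking $p$ to dominate this expression (together with the cost of the initial encoding) proves the statement. The one genuinely delicate point is the size bound of the previous paragraph: it needs \emph{both} Proposition~\ref{prop:constred}, which confines the constructors that can ever appear to a finite, bounded-arity set determined by the subterms of $\lambdaone$, \emph{and} the sharing of the constructor-only argument subgraphs, without which the graph could roughly double at a single step, so that after $n$ steps it would represent --- once unshared --- a term of size exponential in $n$; it is precisely the ability to keep that term shared that makes the normal form computable in polynomial time, and this is why the whole argument is routed through graph rewriting rather than through $\beta$- or plain term rewriting.
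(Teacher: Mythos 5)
Your proposal is correct and follows essentially the same route as the paper, which proves the theorem by exactly the three observations you expand: the reduction length transfers from $\rewrlambdav$ to $\GRS$ via Theorems~\ref{theo:termreducible} and~\ref{theo:graphreducible}, the graph size grows by at most $\length{\lambdaone}$ per step thanks to Proposition~\ref{prop:constred} and constructor sharing, and each rewrite step costs time polynomial in the current graph size. Your additional details (arity bounds, the cost of redex detection, and the remark on why unshared representations would blow up exponentially) are faithful elaborations of the paper's sketch rather than a different argument.
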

As we mentioned in the introduction, this cannot be achieved when using explicit representations
of $\lambda$-terms. Moreover, reading back a $\lambda$-term from a term graph can take exponential
time, as we mentioned in the introduction.

We can complement Theorem~\ref{thm:main} with a completeness statement --- any universal computational model 
with an invariant cost model can be embedded in the $\lambda$-calculus with a polynomial
overhead. We can exploit for this the analogous result we proved in~\cite{CIE2006} (Theorem 1) --- 
the unitary cost model is easily proved to be more parsimonious than 
the difference cost model considered in~\cite{CIE2006}.

\begin{theorem}
Let $f:\Sigma^*\rightarrow\Sigma^*$ be computed by a Turing machine
$\TMone$ in time $g$. Then, there are a $\lambda$-term $\lambdatwo_\TMone$
and a suitable encoding $\cod{\cdot}:\Sigma^*\rightarrow\Lambdaterms$ 
such that $\lambdatwo_\TMone\cod{v}$ normalizes to $\cod{f(v)}$ in 
$O(g(|v|))$ beta steps.
\end{theorem}
}
{
\section{By-product: Polynomial Invariance}\label{Sect:Byproduct}
In this section, we will show how the correspondence between
$\lambda$-calculus and constructor rewrite systems allows to prove an
invariance result on $\lambda$-calculus originally established by
Sands, Gustavsson, and Moran~\cite{Sands:Lambda02}.
We sketch here how we may obtain the same result by using our
simulation results. The interested reader could find all the technical details 
in~\cite{longversion}.

The result we are interested in is the following: for weak call-by-value $\lambda$-calculus,
the number of beta-reductions to normal form is polynomially
related to the actual cost of computing that normal form
on a Turing machine (and thus on any other reasonable
machine model).
Both in constructor
term rewriting and in the $\lambda$-calculus, proving the
number of reduction steps to normal form to be a ``good''
cost model is not trivial. Indeed, the substitution process
could involve the manipulation (in particular, the copying) 
of terms whose size is not even polynomially
related to the size of the original term nor to the number of
reduction steps performed insofar. As a consequence, substitution
must be implemented carefully, and terms have to be represented
implicitly, exploiting sharing of subterms.
The term rewriting community has already developed a formalism
which allows us to exploit sharing, namely term graph rewriting.
The desired result thus follows easily once we observe that 
\emph{every} (orthogonal) constructor rewrite system is
graph reducible (which is the property expressed by Theorem~\ref{theo:graphreducible}, below; recall that this does not hold for
arbitrary term rewriting systems~\cite{TGRbarendregt}).

Given a signature $\sigone$, a \emph{labelled graph over $\sigone$} 
consists of a directed acyclic graph together with an ordering on 
the outgoing edges of each node, and a (partial) labelling of nodes 
with symbols from $\sigone$ such that the out-degree of each node
matches the arity of the corresponding symbols 
(and is $0$ if the labelling is undefined).
As an example, consider the signature $\Sigma=\{a,b,c,d\}$, where
arities of $a,b,c,d$ are $2$, $1$, $0$, $2$ respectively, and
$b$, $c$, $d$ are constructors. Examples of labelled graphs over 
the signature $\Sigma$ are the following ones:
\begin{center}
\begin{minipage}[c]{3.5cm}
\begin{displaymath}
\xymatrix@R=9pt@C=8pt{
     & a \ar@/_/[d]\ar@/^/[d] &   \\
     & b \ar[d] &   \\
     & d \ar[dl]\ar[dr] &   \\
b \ar[d] &          & c \\
\bot &          &   \\
}
\end{displaymath}
\end{minipage}
\begin{minipage}[c]{3.5cm}
\begin{displaymath}
\xymatrix@R=9pt@C=8pt{
a\ar@/^/[dr]\ar@/_/[dr] & & b \ar[dl]\\
          & \bot &                         \\
}
\end{displaymath}
\end{minipage}
\begin{minipage}[c]{3.5cm}
\begin{displaymath}
\xymatrix@R=9pt@C=8pt{
     & a\ar@/_1pc/[dd]\ar[d] &   \\
     & b \ar[d] &   \\
     & a \ar[dl]\ar[dr] &   \\
\bot &          & b\ar[d] \\
     &          & \bot  \\
}
\end{displaymath}
\end{minipage}
\end{center}
The symbol $\bot$ denotes vertices where the underlying labelling function
is undefined (and, as a consequence, no edge departs from such vertices).
Their role is similar to the one of variables in terms.
A \emph{term graph} is a labelled graph
$\tgone$ together with a node $\rootone$ of $\tgone$, the
\emph{root} of the term graph. As
an example, the following are graphic representations of some term graphs
(over the same signature $\sigone$ considered in the previous examples).
\begin{center}
\begin{minipage}[c]{3.5cm}
\begin{displaymath}
\xymatrix@R=9pt@C=8pt{
     &  *+[o][F]{a} \ar@/_/[d]\ar@/^/[d] &   \\
     & b \ar[d] &   \\
     & a \ar[dl]\ar[dr] &   \\
b \ar[d] &          & c \\
\bot &          &   \\
}
\end{displaymath}
\end{minipage}
\begin{minipage}[c]{3.5cm}
\begin{displaymath}
\xymatrix@R=9pt@C=8pt{
 a\ar@/^/[dr]\ar@/_/[dr] & &  *+[o][F]{b} \ar[dl]\\
          & \bot &                         \\
}
\end{displaymath}
\end{minipage}
\begin{minipage}[c]{3.5cm}
\begin{displaymath}
\xymatrix@R=9pt@C=8pt{
     & *+[o][F]{a}\ar@/_1pc/[dd]\ar[d] &   \\
     & b \ar[d] &   \\
     & a \ar[dl]\ar[dr] &   \\
\bot &          & b\ar[d] \\
     &          & \bot  \\
}
\end{displaymath}
\end{minipage}
\end{center}
The root is the only vertex drawn inside a circle. We now need a way
to write programs acting on term graphs.
A \emph{graph rewrite rule} over a signature $\sigone$ 
consists of a labelled graph $\tgone$ together with
two vertices $\rootone,\roottwo$ of $\tgone$. In the
spirit of constructor rewriting, $\rootone$ must be labelled
with a function symbol, while the label of any vertex reachable from $\rootone$ (if any)
must be a constructor symbol. Moreover, every non-labelled vertex reachable from $\roottwo$
can be reached from $\rootone$, too.
The following are examples of graph rewriting rules, assuming $a$ to be a function
symbol and $b,c,d$ to be constructors:
\begin{center}
\begin{minipage}[c]{3.5cm}
\begin{displaymath}
\xymatrix@R=9pt@C=8pt{
     &  *+[o][F]{a} \ar@/_/[d]\ar@/^/[d] &   \\
     & b \ar[d] &   \\
     & d \ar[dl]\ar[dr] &   \\
b \ar[d] &          & c \\
*+[F]{\bot} &          &   \\
}
\end{displaymath}
\end{minipage}
\begin{minipage}[c]{3.5cm}
\begin{displaymath}
\xymatrix@R=9pt@C=8pt{
 *+[o][F]{a}\ar@/^/[dr]\ar@/_/[dr] & &  *+[F]{b} \ar[dl]\\
          & \bot &                         \\
}
\end{displaymath}
\end{minipage}
\begin{minipage}[c]{3.5cm}
\begin{displaymath}
\xymatrix@R=9pt@C=8pt{
        & *+[o][F]{a}\ar[dl]\ar[dr] &  &  *+[F]{c}\\
b\ar[d] &                  & b\ar[d] &   \\
\bot    &                  & \bot    &   \\
}
\end{displaymath}
\end{minipage}
\end{center}
The circled vertex corresponds to $\rootone$, while the squared vertex
corresponds to $\roottwo$. A set of graph rewrite rules over
a signature $\sigone$ defines a constructor graph rewrite
system $\sgrone$. The relation $\rewr{\sgrone}$ is defined in
the natural way (see~\cite{longversion}). As an example, 
if $\sgrone$ includes the following
rewrite rule
\begin{displaymath}
\xymatrix@R=9pt@C=8pt{
 *+[o][F]{a} \ar[d]\ar[ddrr] & & *+[F]{b}\ar[d] \\
 b \ar[d] & & a\ar[ll]\ar[d] \\
\bot & & c \\
}
\end{displaymath}
then
\begin{center}
\begin{minipage}[c]{3cm}
\begin{displaymath}
\xymatrix@R=9pt@C=8pt{
  &  *+[o][F]{a}\ar[dl]\ar[dr] &   \\
  b \ar[d] & & a \ar[ll]\ar[dll] \\
  c & &  \\
}
\end{displaymath}
\end{minipage}
\begin{minipage}[c]{.5cm}
\begin{displaymath}
\rewr{\sgrone}
\end{displaymath}
\end{minipage}
\begin{minipage}[c]{3cm}
\begin{displaymath}
\xymatrix@R=9pt@C=8pt{
 & *+[o][F]{a}\ar@/_/[dddl]\ar[d] & \\
 & b\ar[d] & \\
 & a\ar[dl]\ar[dr] & \\
 b\ar[rr] & & c\\
}
\end{displaymath}
\end{minipage}
\end{center}
Given a constructor rewrite system $\srrone$ over $\sigone$, the 
corresponding constructor graph rewrite system $\sgrone$ is defined 
as the class of graph rewrite rules corresponding to those in $\srrone$.
Given a term $\termone$, $\CtoCG{\termone}$ will be the corresponding
graph, while the term graph $\tgone$ corresponds to the term $\CGtoC{\tgone}$.
Contrarily to what happens in term rewriting, any construtor rewrite system
is graph reducible:

\begin{theorem}[Graph Reducibility]\label{theo:graphreducible}
For every constructor rewrite system $\srrone$ over $\sigone$ and
for every term $\termone$ over $\sigone$, the following two conditions are
equivalent:
\begin{varenumerate}
\item
  $\termone\rewrTRS^n\termtwo$, where $\termtwo$ is in normal form;
\item
  $\CtoCG{\termone}\rewrTRS^n\tgone$, where $\tgone$ is in normal
  form and $\CGtoC{\tgone}=\termtwo$.
\end{varenumerate}
\end{theorem}
As a corollary (and thanks to Proposition~\ref{prop:constred}), we get:
\begin{corollary}\label{cor:main}
There is a polynomial $p:\N^2\rightarrow\N$ such that for every $\lambda$-term $\lambdaone$,
the normal form of $\TRStoGRS{\LambdatoTRS{\lambdaone}}$ can be computed in time at most 
$p(|\lambdaone|,\Time{\lambdaone})$.
\end{corollary}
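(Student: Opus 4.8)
The plan is to compose the two step-by-step simulations already established --- the embedding of (weak call-by-value) $\lambda$-reduction into $\TRS$ and the embedding of constructor term rewriting into graph rewriting --- and then to show that a single graph rewrite step is cheap on a Turing machine. First I would fix a closed $\lambdaone\in\Lambdaterms$ with $\Time{\lambdaone}<\omega$ (otherwise there is nothing to prove). By definition $\Time{\lambdaone}$ is the number of $\rewrlambdav$-steps taking $\lambdaone$ to normal form; by Theorem~\ref{theo:termreducible} it is also the number of $\rewrTRS$-steps taking $\LambdatoTRS{\lambdaone}$ to a $\TRS$-normal form $\termone$; and by Theorem~\ref{theo:graphreducible} applied to the orthogonal constructor rewrite system $\TRS$, it is the number of graph rewrite steps taking the term-graph image $\TRStoGRS{\LambdatoTRS{\lambdaone}}$ to a normal term graph $\tgone$ with $\CGtoC{\tgone}=\termone$. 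Since $\TRS$ is orthogonal, the length of a reduction to normal form does not depend on the chosen strategy, so \emph{whatever} sequence of redex firings we perform on the graph, it has length exactly $\Time{\lambdaone}$; call it $\TRStoGRS{\LambdatoTRS{\lambdaone}}=\tgone_0\rewrgraph\tgone_1\rewrgraph\cdots\rewrgraph\tgone_N$ with $N=\Time{\lambdaone}$. It then remains to bound the time needed to produce it.

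The crucial estimate is that all the graphs $\tgone_i$ stay small, and this is exactly where Proposition~\ref{prop:constred} enters. As $\CGtoC{\tgone_i}$ is a $\TRS$-reduct of $\LambdatoTRS{\lambdaone}$, every constructor $\constr{\varone}{\lambdatwo}$ labelling a vertex of $\tgone_i$ has $\lambdatwo$ a subterm of $\lambdaone$; hence the only rules of $\Rules{\TRS}$ that can ever fire along this reduction are the finitely many rules of the form $\appTRS(\constr{\varone}{\lambdatwo}(\varone_1,\ldots,\varone_n),x)\rewrTRS\LambdatoTRS{\lambdatwo}$ with $\lambda\varone.\lambdatwo$ a subterm of $\lambdaone$, each of which has a right-hand side of size $O(\length{\lambdaone})$. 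Inspecting one graph rewrite step: the build phase creates a fresh copy of only the $\appTRS$- and constructor-labelled ``spine'' of the right-hand side, while its variable-labelled vertices are identified with the argument subgraphs already present in $\tgone_i$; therefore $\length{\tgone_{i+1}}-\length{\tgone_i}\le c\,\length{\lambdaone}$ for a constant $c$ depending only on the shapes of the rules, and inductively $\length{\tgone_i}\le(i+1)\,c\,\length{\lambdaone}\le(\Time{\lambdaone}+1)\,c\,\length{\lambdaone}$ for all $i$. That the argument subgraphs are \emph{shared}, rather than duplicated as plain term rewriting would force, is essential: it is precisely what rules out the familiar multiplicative blow-up in size.

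Next I would check that performing one step $\tgone_i\rewrgraph\tgone_{i+1}$ takes time polynomial in $\length{\tgone_i}$. Locating a redex is a graph traversal looking for an $\appTRS$-labelled vertex whose first argument is constructor-labelled and whose second argument heads a constructor path; reading off the head constructor $\constr{\varone}{\lambdatwo}$ yields $\lambdatwo$, a subterm of $\lambdaone$, so the right-hand side $\LambdatoTRS{\lambdatwo}$ is computable in time $O(\length{\lambdaone})$; and the build, redirection, and garbage-collection phases are each at most quadratic in $\length{\tgone_i}$. Hence step $i$ costs at most $q(\length{\tgone_i})\le q((\Time{\lambdaone}+1)\,c\,\length{\lambdaone})$ for a fixed polynomial $q$. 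Summing over the $\Time{\lambdaone}$ steps, and adding the $O(\length{\lambdaone})$ cost of building $\TRStoGRS{\LambdatoTRS{\lambdaone}}$ from $\lambdaone$ and of detecting that a normal form has been reached, the total running time is at most $\Time{\lambdaone}\cdot q((\Time{\lambdaone}+1)\,c\,\length{\lambdaone})+O(\length{\lambdaone})$, which is $p(\length{\lambdaone},\Time{\lambdaone})$ for a suitable polynomial $p:\N^2\to\N$.

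The hard part is the middle step --- the linear bound $\length{\tgone_i}\le(\Time{\lambdaone}+1)\,c\,\length{\lambdaone}$. It rests on combining Proposition~\ref{prop:constred} (which keeps the right-hand sides of all applicable rules of size $O(\length{\lambdaone})$) with a careful analysis of what graph rewriting in the CGRS associated with $\TRS$ actually duplicates: one must verify that constructor-sharedness is preserved, so that no shared subgraph is ever copied, and that each step adds exactly the right-hand-side spine and nothing more. Once this invariant is secured, the rest is the routine ``polynomially many steps, each polynomial time'' bookkeeping.
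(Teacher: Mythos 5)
Your proposal is correct and follows essentially the same route as the paper: compose Theorem~\ref{theo:termreducible} with Theorem~\ref{theo:graphreducible} to identify the number of graph rewrite steps with $\Time{\lambdaone}$, then use Proposition~\ref{prop:constred} to bound each right-hand side by $O(\length{\lambdaone})$, so that sharing keeps every intermediate graph of size $O((\Time{\lambdaone}+1)\length{\lambdaone})$ and each step costs polynomial time. The extra detail you supply on redex location and the per-phase costs merely fleshes out the ``polynomial time per step'' claim that the paper asserts without proof.
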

This cannot be achieved when using explicit representations
of $\lambda$-terms. Moreover, reading back a $\lambda$-term from a term graph may take exponential
time.

\begin{example}
Consider the $\lambda$-terms 
$$
\lambdaone_n\equiv\underbrace{\overline{2}(\overline{2}(\overline{2}(\ldots (\overline{2}}_{\mbox{$n$ times}}(\lambda x.x))\ldots))),
$$
where $\overline{2}$ is the Church numeral for $2$. Their size is proportional to $n$, while the size of their normal
form is exponential in $n$. However, the number of reduction steps leading to the normal form is \emph{linear} in $n$. Altogether, this implies
that the number of reduction steps to normal form cannot be a cost model \emph{if we represent $\lambda$-terms explicitly}, i.e., if we require
the output to be a $\lambda$-term expressed as a string. Observe, on the other hand, that the normal
form $\TRStoGRS{\LambdatoTRS{\lambdaone_n}}$ is a graph whose size is again linear in $n$.
\end{example}

We can complement Corollary~\ref{cor:main} with a completeness statement --- any universal computational model 
with an invariant cost model can be embedded in the $\lambda$-calculus with a polynomial
overhead. We can exploit for this the analogous result we proved in~\cite{CIE2006} (Theorem 1) --- 
the unitary cost model is easily proved to be more parsimonious than 
the difference cost model considered in~\cite{CIE2006}.

\begin{theorem}
Let $f:\Sigma^*\rightarrow\Sigma^*$ be computed by a Turing machine
$\TMone$ in time $g$. Then, there are a $\lambda$-term $\lambdatwo_\TMone$
and a suitable encoding $\cod{\cdot}:\Sigma^*\rightarrow\Lambdaterms$ 
such that $\lambdatwo_\TMone\cod{v}$ normalizes to $\cod{f(v)}$ in 
$O(g(|v|))$ beta steps.
\end{theorem}
}

\condinc{
\section{Variations: Call-by-Name Reduction}
\label{sect:HeadReduction}
Our purpose in this last section is showing that similar techniques can be applied to
call-by-name evaluation of $\lambda$-terms.

In the previous sections, $\lambda$-calculus was endowed with weak call-by-value reduction.
The same technique, however, can be applied to weak call-by-name reduction, as we will sketch in
this section. $\Lambdaterms$ is now endowed with a relation $\rewrlambdah$ defined
as follows:
$$
\begin{array}{ccccc}
  \infer{(\lambda\varone.\lambdaone)\lambdatwo\rewrlambdah\lambdaone\{\lambdatwo/\varone\}}{}
  & & & &
  \infer{\lambdaone\lambdathree\rewrlambdah\lambdatwo\lambdathree}{\lambdaone\rewrlambdah\lambdatwo}
\end{array}
$$
Similarly to the call-by-value case, $\Timew{\lambdaone}$ stands for the number of
reduction steps to the normal form of $\lambdaone$ (if any). Since the relation
$\rewrlambdah$ is deterministic (i.e., functional), $\Timew{\lambdaone}$ is well-defined.

We need another CRS, called $\TRSW$, which is similar to $\TRS$ but designed
to simulate weak call-by-name reduction:
\begin{varitemize}
\item
  The signature $\Functions{\TRSW}$ includes the binary function symbol $\appTRS$ and
  constructor symbols $\constr{\varone}{M}$ for every $\lambdaone\in\Lambdaterms$ and
  every $\varone\in\Variables$, exactly as $\Functions{\TRS}$. Moreover, there is another
  binary constructor symbol $\cappTRSW$. 
  To every term $\lambdaone\in\Lambdaterms$ we can associate
  terms $\LambdatoTRSWaux{\lambdaone},\LambdatoTRSW{\lambdaone}\in\TRSWvarterms$ as follows:
  \begin{eqnarray*}
    \LambdatoTRSWaux{\varone}&=&\varone\\
    \LambdatoTRSWaux{\lambda\varone.\lambdaone}&=&\constr{\varone}{\lambdaone}(\varone_1,\ldots,\varone_{n}),
      \mbox{ where $\FV{\lambda\varone.\lambdaone}=\varone_1,\ldots,\varone_n$}\\
    \LambdatoTRSWaux{\lambdaone\lambdatwo}&=&\cappTRSW(\LambdatoTRSWaux{\lambdaone},\LambdatoTRSWaux{\lambdatwo})\\
    \LambdatoTRSW{\varone}&=&\varone\\
    \LambdatoTRSW{\lambda\varone.\lambdaone}&=&\constr{\varone}{\lambdaone}(\varone_1,\ldots,\varone_{n}),
      \mbox{ where $\FV{\lambda\varone.\lambdaone}=\varone_1,\ldots,\varone_n$}\\
    \LambdatoTRSW{\lambdaone\lambdatwo}&=&\appTRS(\LambdatoTRSW{\lambdaone},\LambdatoTRSWaux{\lambdatwo})
  \end{eqnarray*}
  Notice that $\LambdatoTRSWaux{\cdot}$ maps lambda terms to \emph{constructor} terms, while
  terms obtained via $\LambdatoTRSW{\cdot}$ can contain function symbols.
\item
    The rewrite rules in $\Rules{\TRSW}$ are all the rules in the following form:
    \begin{eqnarray*}
      \appTRS(\constr{\varthree}{\varthree},\cappTRSW(\varfour,\varfive))&\rewrTRSW&\appTRS(\varfour,\varfive)\\
      \appTRS(\constr{\varthree}{\varthree},\constr{\varone}{\lambdaone}(\varone_1,\ldots,\varone_n))&\rewrTRSW&
        \constr{\varone}{\lambdaone}(\varone_1,\ldots,\varone_n)\\
      \appTRS(\constr{\varthree}{\varfour}(\cappTRSW(\varfive,\varsix)),\varseven)&\rewrTRSW&\appTRS(\varfive,\varsix)\\
      \appTRS(\constr{\varthree}{\varfour}(\constr{\varone}{\lambdaone}(\varone_1,\ldots,\varone_n)),\varseven)&\rewrTRSW&
        \constr{\varone}{\lambdaone}(\varone_1,\ldots,\varone_n)\\
     \appTRS(\constr{\vartwo}{\lambdatwo}(\vartwo_1,\ldots,\vartwo_{m}),\vartwo)&\rewrTRSW&\LambdatoTRSW{\lambdatwo}
    \end{eqnarray*}
    where $\lambdaone$ ranges over $\lambda$-terms, $\lambdatwo$ ranges over abstractions and
    applications,
    $\FV{\lambda\varone.\lambdaone}=\varone_1,\ldots,\varone_n$
    and $\FV{\lambda\vartwo.\lambdatwo}=\vartwo_1,\ldots,\vartwo_m$. These rewrite rules
    are said to be \emph{ordinary rules}. We also need the following \emph{administrative} rule:
    $$
    \appTRS(\cappTRSW(\varone,\vartwo),\varthree)\rewrTRSW\appTRS(\appTRS(\varone,\vartwo),\varthree)
    $$
\end{varitemize}
The CTRS $\TRSW$ is slightly more complicated than $\TRS$: some additional
overhead is needed to force reduction to happen only in head position.
As usual, to every term $\termone\in\TRSWvarterms$ we can associate a term $\TRSWtolambda{\termone}$:
\begin{eqnarray*}
  \TRSWtolambda{\varone}&=&\varone\\
  \TRSWtolambda{\appTRS(\termtwo,\termthree)}=\TRSWtolambda{\cappTRSW(\termtwo,\termthree)}
     &=&\TRSWtolambda{\termtwo}\TRSWtolambda{\termthree}\\
  \TRSWtolambda{\constr{\varone}{\lambdaone}(\termone_1,\ldots\termone_n)}&=&
  (\lambda\varone.\lambdaone)\{\TRSWtolambda{\termone_1}/\varone_1,\ldots,\TRSWtolambda{\termone_n}/\varone_n\}
\end{eqnarray*}
where $\FV{\lambda\varone.\lambdaone}=\varone_1,\ldots,\varone_n$.
A term $\termone\in\TRSWterms$ is \emph{canonical} if either $\termone=\constr{\varone}{\lambdaone}(\termone_1\,\ldots,\termone_n)\in\TRSWconterms$ or
$\termone=\appTRS(\termtwo,\termthree)$ where $\termtwo$ is canonical and $\termthree\in\TRSWconterms$.
\condinc{
\begin{lemma}\label{lemma:closedcanonical}
For every closed $\lambdaone\in\Lambdaterms$, $\LambdatoTRSW{\lambdaone}$ is canonical.
\end{lemma}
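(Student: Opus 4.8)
The plan is to prove, by a single induction on the structure of the closed term $\lambdaone$, the \emph{strengthened} statement that both $\LambdatoTRSWaux{\lambdaone}\in\TRSWconterms$ and $\LambdatoTRSW{\lambdaone}$ is canonical. Strengthening the claim in this way is what makes the induction go through: the clause of the definition of ``canonical'' dealing with terms of the form $\appTRS(\termone,\termone')$ demands that the second argument be a constructor term, and in the application case that second argument is exactly something produced by $\LambdatoTRSWaux{\cdot}$, not by $\LambdatoTRSW{\cdot}$ — so without the first conjunct the induction stalls.

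Now I would go through the three cases. The variable case is vacuous, since a closed term is never a variable. If $\lambdaone=\lambda\varone.\lambdatwo$ is closed, then $\FV{\lambda\varone.\lambdatwo}=\varepsilon$, hence both $\LambdatoTRSWaux{\lambdaone}$ and $\LambdatoTRSW{\lambdaone}$ equal $\constr{\varone}{\lambdatwo}$ applied to the empty sequence of arguments; since $\constr{\varone}{\lambdatwo}$ is a constructor (of arity $0$), this term lies in $\TRSWconterms$, and it has exactly the shape $\constr{\varone}{\lambdatwo}(\ldots)$ required by the first clause of canonicity. If $\lambdaone=\lambdatwo\lambdathree$ is closed, then $\lambdatwo$ and $\lambdathree$ are both closed (closedness of an application is inherited by its immediate subterms), so the induction hypothesis gives $\LambdatoTRSWaux{\lambdatwo},\LambdatoTRSWaux{\lambdathree}\in\TRSWconterms$ and that $\LambdatoTRSW{\lambdatwo}$ is canonical. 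Then $\LambdatoTRSWaux{\lambdaone}=\cappTRSW(\LambdatoTRSWaux{\lambdatwo},\LambdatoTRSWaux{\lambdathree})$ is again in $\TRSWconterms$, because $\cappTRSW$ is a constructor and both its arguments are constructor terms; and $\LambdatoTRSW{\lambdaone}=\appTRS(\LambdatoTRSW{\lambdatwo},\LambdatoTRSWaux{\lambdathree})$ is canonical by the second clause, since $\LambdatoTRSW{\lambdatwo}$ is canonical and $\LambdatoTRSWaux{\lambdathree}\in\TRSWconterms$. This closes the induction and yields the lemma.

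I do not expect a genuine obstacle here; the only point that needs care is the bookkeeping of free variables — one must notice that closedness of $\lambdaone$ forces the free-variable list attached to every top-level $\constr{\varone}{\lambdatwo}$ symbol emitted by $\LambdatoTRSW{\cdot}$ to be empty, and that closedness propagates to subterms in the application case. As with Lemma~\ref{lemma:canonicity}, no substitution lemma is needed in this direction, because $\LambdatoTRSWaux{\cdot}$ installs the relevant subterms directly rather than via a substitution into a pattern; the real work involving the administrative rule and the $\cappTRSW$ machinery will only surface later, when one proves that canonicity is preserved by $\rewrTRSW$ and that $\rewrlambdah$ is simulated step for step.
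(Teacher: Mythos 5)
Your proof is correct and follows the same route as the paper, which simply states that the lemma holds ``by a straightforward induction on $\lambdaone$''; your strengthening of the induction hypothesis to include $\LambdatoTRSWaux{\lambdaone}\in\TRSWconterms$ is exactly the bookkeeping needed to make that straightforward induction go through in the application case. No gaps.
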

\begin{proof}
By a straightforward induction on $\lambdaone$.
\end{proof}
The obvious variation on Equation~\ref{equat:commute} holds here:
\begin{equation}\label{equat:commutew}
\TRSWtolambda{\LambdatoTRSW{\lambdaone}\{\termone_1/\varone_1,\ldots,\termone_n/\varone_n\}}=
\lambdaone\{\TRSWtolambda{\termone_1}/\varone_1,\ldots,\TRSWtolambda{\termone_n}/\varone_n\}.
\end{equation}}{}
$\TRSW$ mimics call-by-name reduction in much the same way $\TRS$ mimics call-by-value
reduction. However, one reduction step in the $\lambda$-calculus corresponds to $n\geq 1$ steps
in $\TRSW$, although $n$ is kept under control:
\begin{lemma}\label{lemma:simul}
Suppose that $\termone\in\TRSWterms$ is canonical and that $\termone\rewrTRSW\termtwo$. Then
there is a natural number $n$ such that:
\begin{varenumerate}
\item
  $\TRSWtolambda{\termone}\rewrlambdah\TRSWtolambda{\termtwo}$;
\item
  There is a canonical term $\termthree\in\TRSWterms$ such that $\termtwo\rewrTRSW^n\termthree$;
\item
  $\plength{\termfour}{\appTRS}=\plength{\termtwo}{\appTRS}+m$ whenever $\termtwo\rewrTRSW^m\termfour$
  and $m\leq n$;
\item
  $\TRSWtolambda{\termfour}=\TRSWtolambda{\termtwo}$ whenever $\termtwo\rewrTRSW^m\termfour$
  and $m\leq n$.
\end{varenumerate}
\end{lemma}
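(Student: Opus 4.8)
The plan is to exploit the rigid shape of canonical terms. First I would unfold the definition to see that a canonical $\termone\in\TRSWterms$ is a left spine $\appTRS(\ldots\appTRS(\appTRS(\conone,\termtwo_1),\termtwo_2)\ldots,\termtwo_k)$ whose head $\conone$ is a constructor term rooted at some $\constr{\varone}{\lambdaone}$ and whose $\termtwo_i$ are constructor terms; since $\termone$ is closed, all these subterms are closed, so every constructor term occurring is rooted either at a $\constr{\cdot}{\cdot}$ or at $\cappTRSW$. Because $\appTRS$ is the only function symbol and constructor terms contain no $\appTRS$, the redexes of $\termone$ all lie on the spine, and a spine node whose left child is an $\appTRS$-node matches no left-hand side; hence $\termone$ has a \emph{unique} redex, its bottom node $\appTRS(\conone,\termtwo_1)$, and this is an instance of exactly one \emph{ordinary} rule, selected by the head constructor $\constr{\varone}{\lambdaone}$ (rule~5 when $\lambdaone$ is an application or abstraction; the two identity rules when $\lambdaone\equiv\varone$; the two constant-function rules when $\lambdaone$ is a variable $\neq\varone$) and by the root symbol of the relevant argument; the administrative rule never fires on a canonical term since $\conone$ is never $\cappTRSW$-rooted.

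For item~1 I would, in each of these five cases, compute $\TRSWtolambda{\cdot}$ of both sides of the rule instance — using Equation~(\ref{equat:commutew}) for rule~5 and closedness of $\conone$'s arguments throughout — and check that the result is exactly one head-reduction step. Concretely $\TRSWtolambda{\termone}=(\lambda\varone.\lambdatwo)\lambdathree_1\ldots\lambdathree_k$ with $\lambdathree_i=\TRSWtolambda{\termtwo_i}$ and $\lambdatwo$ read off the head constructor, its leftmost-outermost redex is $(\lambda\varone.\lambdatwo)\lambdathree_1$, which is precisely the $\TRSWtolambda{\cdot}$-image of the bottom node, and each of the five rules realises its contraction; so $\TRSWtolambda{\termone}\rewrlambdah\TRSWtolambda{\termtwo}$.

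For items~2--4 I would look at the reduct $\termtwo$. If the ordinary rule used has a $\constr{\cdot}{\cdot}$-headed right-hand side (two of the identity/constant rules), then $\termtwo$ is again a $\constr{\cdot}{\cdot}$-headed spine, hence canonical, and $n=0$. Otherwise the right-hand side is an application, and because in the definition of $\LambdatoTRSW{\cdot}$ only the left spine of a body uses $\LambdatoTRSW{\cdot}$ while every argument goes through $\LambdatoTRSWaux{\cdot}$ (which always returns a constructor term), $\termtwo$ is still a spine of $\appTRS$-nodes whose arguments are constructor terms and whose bottom atom is a \emph{single} closed constructor term $\conone'$ — the only way non-canonicity can arise is that the leftmost atom of that spine was a variable that the substitution of the fired rule instantiated with a $\cappTRSW$-rooted term. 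If $\conone'$ is $\constr{\cdot}{\cdot}$-rooted then $\termtwo$ is canonical and $n=0$; if $\conone'$ is $\cappTRSW$-rooted then, by the same spine analysis as before, $\termtwo$'s unique redex is its bottom node, now an instance of the administrative rule, and contracting it peels one $\cappTRSW$-layer off the leftmost path. Taking $n$ to be the number of $\cappTRSW$-nodes on that leftmost path down to the first $\constr{\cdot}{\cdot}$-node (clearly finite), the first $n$ steps out of $\termtwo$ are therefore forced — each the unique administrative redex at the current bottom — and the resulting $\termthree$ has a $\constr{\cdot}{\cdot}$-rooted bottom atom, hence is canonical; this is item~2. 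Items~3 and~4 then follow by observing that each instance of the administrative rule $\appTRS(\cappTRSW(\varone,\vartwo),\varthree)\rewrTRSW\appTRS(\appTRS(\varone,\vartwo),\varthree)$ turns one $\cappTRSW$ and one $\appTRS$ into two $\appTRS$, so it increases $\plength{\cdot}{\appTRS}$ by exactly $1$, and that $\TRSWtolambda{\appTRS(\cappTRSW(\varone,\vartwo),\varthree)}=(\TRSWtolambda{\varone}\TRSWtolambda{\vartwo})\TRSWtolambda{\varthree}=\TRSWtolambda{\appTRS(\appTRS(\varone,\vartwo),\varthree)}$, so it leaves $\TRSWtolambda{\cdot}$ unchanged; since for $m\leq n$ the reduct $\termfour$ is uniquely determined and obtained by $m$ such steps, $\plength{\termfour}{\appTRS}=\plength{\termtwo}{\appTRS}+m$ and $\TRSWtolambda{\termfour}=\TRSWtolambda{\termtwo}$.

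The step I expect to be the main obstacle is the structural claim in the third paragraph: that after contracting the ordinary redex, non-canonicity is confined to a single $\cappTRSW$-rooted atom in head position, so that exactly one bounded, forced ``un-thunking'' cascade via the administrative rule restores canonicity. Getting this right requires carefully tracking the interplay between $\LambdatoTRSW{\cdot}$ and $\LambdatoTRSWaux{\cdot}$ and between the rule's substitution and the applicative spine of its right-hand side; once that is pinned down, the remaining verifications — the case analysis for item~1 and the bookkeeping for items~3--4 — are routine.
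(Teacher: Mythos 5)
Your proof is correct and follows essentially the same route as the paper: the paper packages your ``spine whose bottom atom is a closed constructor term, possibly $\cappTRSW$-rooted'' as a notion of \emph{semi-canonical} term and shows by induction on term length that any such term reaches a canonical one through a forced sequence of administrative steps, each adding exactly one $\appTRS$ and leaving $\TRSWtolambda{\cdot}$ unchanged, while item~1 is handled by the same case analysis as in the call-by-value simulation (Lemma~\ref{lemma:TRStolam}). The structural claim you flag as the main obstacle --- that an ordinary step sends a canonical term to a semi-canonical one --- is exactly the hinge of the paper's argument as well, and your spine analysis establishes it.
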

\condinc{
\begin{proof}
A term $\termone$ is said to be \emph{semi-canonical} iff $\termone=\appTRS(\termtwo,\termthree)$, where
$\termthree\in\TRSWconterms$ and $\termtwo$ is either semi-canonical or is itself an element
of $\TRSWconterms$. We now prove that if $\termone$ is semi-canonical, there there are a natural
number $n$ and a canonical term $\termtwo$ such that:
\begin{varitemize}
\item
  $\termone\rewrTRSW^n\termtwo$;
\item
  $\plength{\termthree}{\appTRS}=\plength{\termone}{\appTRS}+m$ whenever $\termone\rewrTRSW^m\termthree$
  and $m\leq n$;
\item
  $\TRSWtolambda{\termthree}=\TRSWtolambda{\termone}$ whenever $\termone\rewrTRSW^m\termthree$
  and $m\leq n$.
\end{varitemize}
We can proceed by induction on $\length{\termone}$. By definition $\termone$ is always
in the form $\appTRS(\termfour,\termfive)$. We distinguish three cases:
\begin{varitemize}
\item
  $\termfour$ is semi-canonical. Then, we get what we want by induction hypothesis.
\item
  $\termfour$ is in $\TRSWconterms$ and has the form $\constr{\varone}{\lambdaone}(\termone_1,\ldots,\termone_m)$.
  Then, $n=0$ and $\termone$ is itself canonical.
\item
  $\termfour$ is in $\TRSWconterms$ and has the form $\cappTRSW(\termsix,\termseven)$. Then
  $$
  \termone=\appTRS(\cappTRSW(\termsix,\termseven),\termfive)\rewrTRSW\appTRS(\appTRS(\termsix,\termseven),\termfive)
  $$
  We can apply the induction hypothesis to $\appTRS(\termsix,\termseven)$ (since its length is
  strictly smaller than $\length{\termone}$).
\end{varitemize}
We can now proceed as in Lemma~\ref{lemma:TRStolam}, since whenever
$\termone$ rewrites to $\termtwo$ by one of the ordinary rules, $\termtwo$ is 
semi-canonical.
\end{proof}
\begin{lemma}\label{lemma:TRSWnormalform}
A canonical term $\termone\in\TRSWterms$ is in normal form iff $\TRSWtolambda{\termone}$ is in
normal form.
\end{lemma}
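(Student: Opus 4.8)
Since $\TRSWterms$ consists of closed terms, $\termone$ is closed, and a straightforward induction (just as for $\LambdatoTRSW{\cdot}$) shows $\TRSWtolambda{\termone}$ is closed as well; a closed $\lambda$-term is a $\rewrlambdah$-normal form exactly when it is an abstraction. So it suffices to prove that, for a closed canonical $\termone$, the three conditions (i) $\termone$ is a normal form, (ii) $\termone\in\TRSWconterms$, and (iii) $\TRSWtolambda{\termone}$ is an abstraction, are equivalent. The plan is to transcribe the proof of Lemma~\ref{lemma:NFcanonical}, the only genuinely new ingredient being the redex analysis for the larger rule set of $\TRSW$.

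The implications between (ii) and (iii) are immediate from the definitions: if $\termone = \constr{\varone}{\lambdaone}(\termone_1,\ldots,\termone_n)$ then $\TRSWtolambda{\termone} = (\lambda\varone.\lambdaone)\{\TRSWtolambda{\termone_1}/\varone_1,\ldots,\TRSWtolambda{\termone_n}/\varone_n\}$ is an abstraction (no capture, as the substituted terms are closed), while a canonical term lying outside $\TRSWconterms$ must have the shape $\appTRS(\termtwo,\termthree)$, whose image under $\TRSWtolambda{\cdot}$ is an application. The implication (ii) $\Rightarrow$ (i) is equally trivial: a constructor term contains no function symbol, hence no $\appTRS$, hence no instance of any left-hand side of $\TRSW$.

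The substantive step is (i) $\Rightarrow$ (ii), equivalently: every canonical term mentioning $\appTRS$ contains a redex. By canonicity such a term is an applicative spine $\appTRS(\cdots\appTRS(\termtwo_0, c_1)\cdots, c_k)$ with $k\geq 1$, each $c_j\in\TRSWconterms$, and $\termtwo_0$ a non-application canonical term, hence necessarily $\termtwo_0 = \constr{\vartwo}{\lambdatwo}(\termtwo_1,\ldots,\termtwo_m)$ with all $\termtwo_i\in\TRSWconterms$; it is enough to exhibit one redex, and the innermost application $\appTRS(\termtwo_0, c_1)$ will always be one. I would argue this by cases on the body $\lambdatwo$: (a) $\lambdatwo = \vartwo$, so $\termtwo_0$ is the arity-$0$ identity constructor $\constr{\vartwo}{\vartwo}$; (b) $\lambdatwo$ is a variable $\varthree\neq\vartwo$, so $\termtwo_0$ has arity $1$, say $\termtwo_0 = \constr{\vartwo}{\varthree}(\termtwo_1)$; (c) $\lambdatwo$ is an abstraction or an application. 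In case (c), $\appTRS(\termtwo_0, c_1)$ matches the last ordinary rule of $\TRSW$ (the one with right-hand side $\LambdatoTRSW{\lambdatwo}$). In cases (a) and (b) one further distinguishes whether the relevant closed constructor argument — $c_1$ in case (a), $\termtwo_1$ in case (b) — is headed by $\cappTRSW$ or by some $\constr{}{}$ (being a closed constructor term it must be one of the two): in case (a) this selects the rule with left-hand side $\appTRS(\constr{\varthree}{\varthree},\cappTRSW(\varfour,\varfive))$ or the one with $\appTRS(\constr{\varthree}{\varthree},\constr{\varone}{\lambdaone}(\varone_1,\ldots,\varone_n))$, and in case (b) the rule with $\appTRS(\constr{\varthree}{\varfour}(\cappTRSW(\varfive,\varsix)),\varseven)$ or the one with $\appTRS(\constr{\varthree}{\varfour}(\constr{\varone}{\lambdaone}(\varone_1,\ldots,\varone_n)),\varseven)$. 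In every case some rule fires, so $\termone$ is not a normal form.

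The main obstacle, such as it is, is precisely this last bookkeeping: one has to keep track of the arity of each $\constr{}{}$-symbol (recall $\arity{\constr{\varone}{\lambdaone}} = \length{\FV{\lambda\varone.\lambdaone}}$, which is $0$ for the identity and $1$ for $\lambda\vartwo.\varthree$) and check that the argument-side patterns of the five ordinary rules jointly cover the two possible head symbols, $\constr{}{}$ or $\cappTRSW$, of a closed constructor term. Note that the administrative rule plays no role: in a canonical term no $\appTRS$ is ever applied to a $\cappTRSW$-headed term, so it can be ignored outright. Everything else is a routine adaptation of the $\TRS$ argument.
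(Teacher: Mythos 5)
Your proof is correct and takes essentially the same route as the paper's: both arguments hinge on showing that a canonical normal form must be a constructor term (equivalently, that any canonical term containing $\appTRS$ has a redex at the innermost application), and then read off the result from the fact that $\TRSWtolambda{\cdot}$ sends constructor terms to abstractions and $\appTRS$-terms to closed applications. The only difference is that you spell out the case analysis verifying that the five ordinary rules cover every possible shape of $\appTRS(\constr{\vartwo}{\lambdatwo}(\ldots),c)$, a detail the paper's proof leaves implicit; your bookkeeping there is accurate.
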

\begin{proof}
We first prove that any canonical normal form $\termone$ can be written
as $\constr{\varone}{\lambdaone}(\termone_1,\ldots,\termone_n)$, where
$\termone_1,\ldots,\termone_n\in\TRSWconterms$. We proceed by induction
on $\termone$:
\begin{varitemize}
\item
  If $\termone=\constr{\varone}{\lambdaone}(\termone_1,\ldots,\termone_n)$,
  then the thesis holds.
\item
  If $\termone=\appTRS(\termtwo,\termthree)$, then $\termtwo$ is canonical
  and in normal form, hence in the form $\constr{\varone}{\lambdaone}(\termone_1,\ldots,\termone_n)$
  by induction hypothesis. As a consequence, $\termone$ is not a normal
  form, which is a contraddiction.
\end{varitemize}
We can now prove the statement of the lemma, by distinguishing two cases:
\begin{varitemize}
\item
  If $\termone=\constr{\varone}{\lambdaone}(\termone_1,\ldots,\termone_n)$,
  where $\termone_1,\ldots,\termone_n\in\TRSWconterms$, then
  $\termone$ is in normal form and $\TRSWtolambda{\termone}$ is an
  abstraction, hence a normal form.
\item
  If $\termone=\appTRS(\termtwo,\termthree)$, then $\termone$ cannot be a normal form,
  since $\termtwo$ is canonical and in normal form and, as a consequence,
  it can be written as $\constr{\varone}{\lambdaone}(\termone_1,\ldots,\termone_n)$.
\end{varitemize}
This concludes the proof.
\end{proof}
Observe that this property holds only if $\termone$ is canonical: a non-canonical term
can reduce to another one (canonical or not) even if the underlying $\lambda$-term
is a normal form.
\begin{lemma}\label{lemma:lamtoTRSW}
If $\lambdaone\rewrlambdah\lambdatwo$, $\termone$ is canonical and $\TRSWtolambda{\termone}=\lambdaone$,
then $\termone\rewrTRSW\termtwo$, where $\TRSWtolambda{\termtwo}=\lambdatwo$ and 
$\plength{\termtwo}{\appTRS}+1\geq\plength{\termone}{\appTRS}$.
\end{lemma}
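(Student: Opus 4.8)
The plan is to adapt the proof of Lemma~\ref{lemma:lamtoTRS} to the call-by-name setting; the two new ingredients are a case analysis on the body of the contracted abstraction (needed to select the right ordinary rule of $\TRSW$) and a bookkeeping of the occurrences of $\appTRS$. First I would locate the contracted redex: since $\rewrlambdah$ is weak and call-by-name, the step $\lambdaone\rewrlambdah\lambdatwo$ fires a redex $(\lambda\varone.\lambdathree)\lambdafour$ sitting at the head of the left spine of applications of $\lambdaone$. Unfolding the definition of $\TRSWtolambda{\cdot}$ and using that $\termone$ is canonical, this redex corresponds to a subterm $\termthree$ of $\termone$ at the head of its $\appTRS$-spine with $\TRSWtolambda{\termthree}=(\lambda\varone.\lambdathree)\lambdafour$; since the $\TRSWtolambda{\cdot}$-image of any $\appTRS$-term is an application while here the function-side image is the abstraction $\lambda\varone.\lambdathree$, canonicity forces $\termthree=\appTRS(\constr{\varone}{\lambdafive}(\termone_1,\ldots,\termone_n),\termfour)$ with $\termfour,\termone_1,\ldots,\termone_n\in\TRSWconterms$, $\TRSWtolambda{\termfour}=\lambdafour$, $\FV{\lambda\varone.\lambdafive}=\varone_1,\ldots,\varone_n$ and $(\lambda\varone.\lambdafive)\{\TRSWtolambda{\termone_1}/\varone_1,\ldots,\TRSWtolambda{\termone_n}/\varone_n\}=\lambda\varone.\lambdathree$.

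Next I would distinguish cases on the body $\lambdafive$. If $\lambdafive$ is an abstraction or an application, the last ordinary rule of $\TRSW$ applies at $\termthree$, with contractum $\LambdatoTRSW{\lambdafive}\{\termone_1/\varone_1,\ldots,\termone_n/\varone_n,\termfour/\varone\}$; let $\termtwo$ be $\termone$ with $\termthree$ replaced by this contractum. Because $\TRSWtolambda{\cdot}$ is compositional along the $\appTRS$-spine, it suffices to check that the image of the contractum is $\lambdathree\{\lambdafour/\varone\}$, which is exactly the commutation identity~(\ref{equat:commutew}): the image is $\lambdafive\{\TRSWtolambda{\termone_1}/\varone_1,\ldots,\TRSWtolambda{\termone_n}/\varone_n,\TRSWtolambda{\termfour}/\varone\}=(\lambda\varone.\lambdathree)\{\lambdafour/\varone\}$. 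If instead $\lambdafive$ is a variable, then either $\lambdafive=\varone$ (so $n=0$, the head is $\constr{\varone}{\varone}$ and the redex contracts to $\lambdafour$), and the first or second ordinary rule applies according to whether $\termfour$ is $\cappTRSW$-headed or $\constr{}{}$-headed; or $\lambdafive$ is a variable distinct from $\varone$, so $n=1$, $\lambdafive=\varone_1$, the head is $\constr{\varone}{\varone_1}(\termone_1)$, $\varone\notin\FV{\lambdathree}$ and the redex contracts to $\lambdathree$, and the third or fourth ordinary rule applies according to whether $\termone_1$ is $\cappTRSW$-headed or $\constr{}{}$-headed. In each of these subcases a one-line computation using $\TRSWtolambda{\cappTRSW(\termtwo,\termthree)}=\TRSWtolambda{\termtwo}\TRSWtolambda{\termthree}$ shows that the image of the contractum is again $\lambdatwo$. (In the intended use $\termone$ is closed, so $\termfour$ and the $\termone_j$ are closed constructor terms, hence $\cappTRSW$- or $\constr{}{}$-headed, and one of the four rules always fires; no variable capture arises either.)

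Finally I would bound $\plength{\cdot}{\appTRS}$. Constructor terms contain no $\appTRS$ — applications nested inside them are coded with the constructor $\cappTRSW$ — so $\termfour$ and the $\termone_j$ contribute none and $\plength{\termthree}{\appTRS}=1$, namely the head $\appTRS$ of $\termthree$. Since $\termtwo$ is $\termone$ with the single subterm $\termthree$ replaced by its contractum, $\plength{\termtwo}{\appTRS}=\plength{\termone}{\appTRS}-1+c$, where $c\geq 0$ is the number of $\appTRS$ in the contractum (substituting the $\appTRS$-free terms $\termone_j,\termfour$ into $\LambdatoTRSW{\lambdafive}$ leaves that number unchanged); hence $\plength{\termtwo}{\appTRS}+1=\plength{\termone}{\appTRS}+c\geq\plength{\termone}{\appTRS}$. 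I expect the only non-routine point to be the exhaustiveness of the case analysis — that \emph{some} ordinary rule always fires, which reduces to the remark above on the shape of the constructor-term arguments of $\termone$ — together with matching $\alpha$-names so that the bound variable of the matched constructor coincides with that of the contracted redex; everything else is unfolding of definitions and of~(\ref{equat:commutew}). Note that, unlike in Lemma~\ref{lemma:lamtoTRS}, the resulting $\termtwo$ need not be canonical (it may contain an $\appTRS$ applied to a $\cappTRSW$-node), which is precisely why the step-by-step simulation of call-by-name reduction also requires the administrative clean-up reductions supplied by Lemma~\ref{lemma:simul}.
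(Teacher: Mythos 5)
Your proposal is correct and follows exactly the route the paper intends: the paper's own proof is just the remark that the argument is ``similar to the one of Lemma~\ref{lemma:lamtoTRS}'', and your write-up is a faithful, detailed working-out of that adaptation (locating the head redex via canonicity of the $\appTRS$-spine, case analysis on the body of the matched constructor to select the appropriate ordinary rule, equation~(\ref{equat:commutew}) for the image of the contractum, and the observation that the contracted $\appTRS$-node is the only one lost, which gives the bound $\plength{\termtwo}{\appTRS}+1\geq\plength{\termone}{\appTRS}$). Your closing remark that $\termtwo$ need not be canonical, so that Lemma~\ref{lemma:simul} must supply the administrative steps, is also exactly how the paper organizes the simulation.
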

\begin{proof}
Similar to the one of Lemma~\ref{lemma:lamtoTRSW}.
\end{proof}
}{}
The slight mismatch between call-by-name reduction in $\Lambdaterms$ and
reduction in $\TRSW$ is anyway harmless globally: the total number of reduction
step in $\TRSW$ is at most two times as large as the total number of call-by-name reduction
steps in $\Lambdaterms$.
\begin{theorem}[Term Reducibility]
Let $\lambdaone\in\Lambdaterms$ be a closed term. The following
two conditions are equivalent:
\begin{varenumerate}
\item
  $\lambdaone\rewrlambdah^n\lambdatwo$ where $\lambdatwo$ is in normal form;
\item
  $\LambdatoTRSW{\lambdaone}\rewrTRSW^m\termone$ where
  $\TRSonetolambda{\termone}=\lambdatwo$ and $\termone$ is in normal form.
\end{varenumerate}
Moreover $n\leq m\leq 2n$.
\end{theorem}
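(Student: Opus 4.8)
The plan is to transcribe, almost line for line, the proof of Theorem~\ref{theo:termreducible}, the only substantially new point being the bound $n\leq m\leq 2n$. The ingredients are: Lemma~\ref{lemma:closedcanonical} (for closed $\lambdaone$, $\LambdatoTRSW{\lambdaone}$ is canonical); the inversion identity $\TRSWtolambda{\LambdatoTRSW{\lambdaone}}=\lambdaone$, proved by an easy induction on $\lambdaone$ exactly as Lemma~\ref{lemma:invert}, together with the commutation property (the obvious analogue of Equation~\ref{equat:commute}, namely Equation~\ref{equat:commutew}); Lemma~\ref{lemma:TRSWnormalform} (a canonical term is a $\TRSW$-normal form iff its $\TRSWtolambda{\cdot}$-image is a $\lambda$-normal form); Lemma~\ref{lemma:lamtoTRSW} (one $\rewrlambdah$ step out of a canonical term is matched by one $\TRSW$-step); and Lemma~\ref{lemma:simul} (one $\TRSW$-step out of a canonical term mirrors one $\rewrlambdah$ step, and is followed by a block of \emph{administrative} $\TRSW$-steps which restore canonicity, leave the $\TRSWtolambda{\cdot}$-image unchanged, and raise $\plength{\cdot}{\appTRS}$ by exactly one each).

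For the equivalence I would argue exactly as for Theorem~\ref{theo:termreducible}. Given $(1)$, run the (unique, since $\rewrlambdah$ is deterministic) reduction $\lambdaone=\lambdaone_0\rewrlambdah\cdots\rewrlambdah\lambdaone_n=\lambdatwo$ and, by induction on $n$, build canonical terms $\LambdatoTRSW{\lambdaone}=\cltermone_0\rewrTRSW^{1+a_0}\cltermone_1\rewrTRSW^{1+a_1}\cdots\rewrTRSW^{1+a_{n-1}}\cltermone_n$ with $\TRSWtolambda{\cltermone_j}=\lambdaone_j$: Lemma~\ref{lemma:lamtoTRSW} supplies the one ($\beta$-matching, ``ordinary'') step $\cltermone_j\rewrTRSW\cltermtwo_{j+1}$ with $\TRSWtolambda{\cltermtwo_{j+1}}=\lambdaone_{j+1}$ and $\cltermtwo_{j+1}$ semi-canonical, and Lemma~\ref{lemma:simul} supplies $a_j\geq 0$ administrative steps $\cltermtwo_{j+1}\rewrTRSW^{a_j}\cltermone_{j+1}$ with $\cltermone_{j+1}$ canonical and $\TRSWtolambda{\cltermone_{j+1}}=\lambdaone_{j+1}$. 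Since $\TRSWtolambda{\cltermone_n}=\lambdatwo$ is a normal form and $\cltermone_n$ is canonical, $\cltermone_n$ is a $\TRSW$-normal form by Lemma~\ref{lemma:TRSWnormalform}, so $\termone=\cltermone_n$ witnesses $(2)$. Conversely, given $(2)$, replay $\LambdatoTRSW{\lambdaone}\rewrTRSW^m\termone$: canonicity or semi-canonicity is preserved along the reduction (as in the proof of Lemma~\ref{lemma:simul}), and a semi-canonical non-canonical term always has an administrative redex, hence the normal form $\termone$ is canonical; applying Lemma~\ref{lemma:simul} to the steps out of canonical terms and using that an administrative step preserves $\TRSWtolambda{\cdot}$ (immediate from $\TRSWtolambda{\cappTRSW(\termone,\termtwo)}=\TRSWtolambda{\appTRS(\termone,\termtwo)}$) gives $\lambdaone=\TRSWtolambda{\LambdatoTRSW{\lambdaone}}\rewrlambdah^{n}\TRSWtolambda{\termone}=\lambdatwo$ with $\lambdatwo$ a normal form (Lemma~\ref{lemma:TRSWnormalform}), where $n$ is the number of non-administrative steps; determinism of $\rewrlambdah$ forces $n$ to be the length of the call-by-name reduction of $\lambdaone$.

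It remains to prove $n\leq m\leq 2n$. Split the $m$ steps into the $n$ ordinary steps (fired from canonical terms, each mirroring one $\rewrlambdah$ step) and the remaining $a$ administrative steps (applications of $\appTRS(\cappTRSW(\varone,\vartwo),\varthree)\rewrTRSW\appTRS(\appTRS(\varone,\vartwo),\varthree)$); thus $m=n+a$, which already gives $m\geq n$. For $m\leq 2n$ I would use the potential $\nu(\termone)=\plength{\termone}{\appTRS}$. Because reduction is call-by-value constructor rewriting, every subterm matched to a pattern variable in a redex is a constructor term, hence $\appTRS$-free; a one-line inspection of the six ordinary rules and the administrative rule then shows that $\nu$ drops by at most one at every step, and that the administrative rule raises $\nu$ by exactly one (for the rule $\appTRS(\constr{\vartwo}{\lambdatwo}(\ldots),\vartwo)\rewrTRSW\LambdatoTRSW{\lambdatwo}$ one uses $\plength{\LambdatoTRSW{\lambdatwo}}{\appTRS}\geq 0$). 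Since $\nu\geq 0$ throughout, $\nu(\LambdatoTRSW{\lambdaone})\geq 0$, and $\nu(\termone)=0$ because a $\TRSW$-normal form is a constructor term, telescoping the increments along $\LambdatoTRSW{\lambdaone}\rewrTRSW^m\termone$ gives $0=\nu(\termone)=\nu(\LambdatoTRSW{\lambdaone})+a+\Sigma$, where $\Sigma$ is the sum of the $\nu$-increments at the $n$ ordinary steps and satisfies $\Sigma\geq -n$; hence $a=-\nu(\LambdatoTRSW{\lambdaone})-\Sigma\leq -\Sigma\leq n$, so $m=n+a\leq 2n$.

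The one delicate point is this last accounting, and it rests entirely on the call-by-value discipline (so that no $\appTRS$ can hide inside a matched constructor-term argument of a redex) and on the structural facts about administrative phases already isolated in Lemma~\ref{lemma:simul}; once those are in place the estimate is purely combinatorial. Everything else is a routine adaptation of the call-by-value development of Section~\ref{sect:CRS}.
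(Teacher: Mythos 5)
Your proposal is correct and follows essentially the same route as the paper: the equivalence is obtained by iterating Lemma~\ref{lemma:lamtoTRSW} and Lemma~\ref{lemma:simul} from the canonical term $\LambdatoTRSW{\lambdaone}$, and the bound $m\leq 2n$ comes from tracking $\plength{\cdot}{\appTRS}$, which each administrative step raises by exactly one and each ordinary step lowers by at most one. Your telescoping identity is just an unfolded form of the paper's inequality $\plength{\termone}{\appTRS}-\plength{\LambdatoTRSW{\lambdaone}}{\appTRS}\geq(m-n)-n$, so no further comment is needed.
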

\condinc{
\begin{proof}
Suppose $\lambdaone\rewrlambdah^n\lambdatwo$, where $\lambdatwo$ is in normal form.
$\lambdaone$ is closed and, by Lemma~\ref{lemma:closedcanonical}, $\LambdatoTRSW{\lambdaone}$
is canonical. By iterating over Lemma~\ref{lemma:simul} and Lemma~\ref{lemma:lamtoTRSW},
we obtain the existence of a term $\termone$ such that $\TRSWtolambda{\termone}=\termtwo$,
$\termone$ is in normal form and $\LambdatoTRSW{\lambdaone}\rewrTRSW^m\termone$, where $m\geq n$ and
$$
\plength{\termone}{\appTRS}-\plength{\LambdatoTRSW{\lambdaone}}{\appTRS}\geq (m-n)-n.
$$
Since $\plength{\termone}{\appTRS}=0$ ($\termone$ is in normal form), $m\leq 2n$.
If $\LambdatoTRSW{\lambdaone}\rewrTRSW^m\termone$ where
$\TRSonetolambda{\termone}=\lambdatwo$ and $\termone$ is in normal form, then
by iterating over Lemma~\ref{lemma:simul} we obtain that $\lambdaone\rewrlambdah^n\lambdatwo$
where $n\leq m\leq 2n$ and $\lambdatwo$ is in normal form.
\end{proof}}{}
$\GRSW$ is the graph rewrite system corresponding to $\TRSW$, in the sense of
\condinc{Section~\ref{Sect:GraphRep}}{Section~\ref{Sect:Byproduct}}. Exactly as for the call-by-value case, computing the normal
form of (the graph representation of) any term takes time polynomial in the
number of reduction steps to normal form:
\begin{theorem}\label{thm:mainw}
There is a polynomial $p:\N^2\rightarrow\N$ such that for every $\lambda$-term $\lambdaone$,
the normal form of $\TRStoGRSW{\LambdatoTRSW{\lambdaone}}$ can be computed in time at most 
$p(|\lambdaone|,\Timew{\lambdaone})$.
\end{theorem}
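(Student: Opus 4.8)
The plan is to transcribe, almost verbatim, the argument that yields Theorem~\ref{thm:main} in the call-by-value case, replacing $\TRS$ by $\TRSW$, Theorem~\ref{theo:termreducible} by the call-by-name Term Reducibility theorem just proved, and Proposition~\ref{prop:constred} by its $\TRSW$-analogue. First I would establish the latter: for every $\lambdaone\in\Lambdaterms$, every $\termone$ with $\LambdatoTRSW{\lambdaone}\rewrTRSW^*\termone$, and every constructor $\constr{\varone}{\lambdatwo}$ occurring in $\termone$, the body $\lambdatwo$ is a subterm of $\lambdaone$. This goes by induction on the length of the reduction: the only rewrite rule of $\TRSW$ that introduces fresh $\constr{}{}$-constructors is $\appTRS(\constr{\vartwo}{\lambdatwo}(\vartwo_1,\ldots,\vartwo_m),\vartwo)\rewrTRSW\LambdatoTRSW{\lambdatwo}$, and its right-hand side mentions only constructors $\constr{\varthree}{\lambdathree}$ with $\lambdathree$ a subterm of $\lambdatwo$, hence of $\lambdaone$ by the induction hypothesis; the remaining ordinary rules, the auxiliary map $\LambdatoTRSWaux{\cdot}$, and the administrative rule $\appTRS(\cappTRSW(\varone,\vartwo),\varthree)\rewrTRSW\appTRS(\appTRS(\varone,\vartwo),\varthree)$ only delete or rearrange already-present $\appTRS$- and $\cappTRSW$-nodes. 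Consequently the symbols that can label any reduct of $\TRStoGRSW{\LambdatoTRSW{\lambdaone}}$ form a finite set of cardinality $O(\length{\lambdaone})$, and the instantiated right-hand side of every $\TRSW$-rule that can fire has size $O(\length{\lambdaone})$.

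Next I would invoke Graph Reducibility (Theorem~\ref{theo:graphreducible}), which applies since $\TRSW$ is an orthogonal constructor rewrite system (and, by the previous paragraph, only a finite fragment of it is relevant to a fixed $\lambdaone$): $\LambdatoTRSW{\lambdaone}\rewrTRSW^m\termone$ with $\termone$ in normal form iff $\TRStoGRSW{\LambdatoTRSW{\lambdaone}}\rewrgraph^m\tgone$ with $\tgone$ in normal form and $\CGtoC{\tgone}=\termone$; and by the call-by-name Term Reducibility theorem $m\leq 2\Timew{\lambdaone}$, so the number of graph-rewrite steps is linear in $\Timew{\lambdaone}$. The key combinatorial point, exactly as in the discussion preceding Theorem~\ref{thm:main}, is bounded growth: a single graph-rewrite step enlarges the current term graph by at most the size of the instantiated right-hand side minus its shared pattern part — the pattern variables are shared, not copied — hence by $O(\length{\lambdaone})$ for ordinary rules and by exactly one $\appTRS$-node for the administrative rule. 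Thus whenever $\TRStoGRSW{\LambdatoTRSW{\lambdaone}}\rewrgraph^k\tgtwo$ with $k\leq m$, we get $\length{\tgtwo}\leq\length{\TRStoGRSW{\LambdatoTRSW{\lambdaone}}}+k\cdot O(\length{\lambdaone})$, which is $O(\length{\lambdaone}\cdot(1+\Timew{\lambdaone}))$, since $\length{\TRStoGRSW{\LambdatoTRSW{\lambdaone}}}$ is itself polynomial in $\length{\lambdaone}$.

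Finally I would observe that each step of $\GRSW$ — locating the head-position redex, then the build, redirection and garbage-collection phases — is computable in time polynomial in the size of the current graph, hence polynomial in $\length{\lambdaone}$ and $\Timew{\lambdaone}$ throughout; summing over the $m=O(\Timew{\lambdaone})$ steps gives the desired polynomial $p(\length{\lambdaone},\Timew{\lambdaone})$. The main obstacle I anticipate is the first paragraph: carefully checking that the extra apparatus of $\TRSW$ (the constructor $\cappTRSW$, the administrative rule, and the two maps $\LambdatoTRSWaux{\cdot}$ and $\LambdatoTRSW{\cdot}$) does not break the subterm-of-$\lambdaone$ invariant on constructors, and hence the bounded-growth property. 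Once that is in place, everything else is a routine replay of the call-by-value development, the harmless factor $2$ from call-by-name Term Reducibility being absorbed into $p$.
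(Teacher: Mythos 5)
Your proposal is correct and matches the paper's intended argument: the paper justifies Theorem~\ref{thm:mainw} only by the remark that it proceeds ``exactly as for the call-by-value case'', i.e., via the $\TRSW$-analogue of Proposition~\ref{prop:constred}, the resulting $O(\length{\lambdaone})$ bound on per-step growth of the term graph, and a polynomial per-step cost summed over the $O(\Timew{\lambdaone})$ graph-rewrite steps given by Graph Reducibility and call-by-name Term Reducibility --- precisely the route you take. Your explicit check that the $\cappTRSW$ constructor, the administrative rule, and the map $\LambdatoTRSWaux{\cdot}$ preserve the subterm-of-$\lambdaone$ invariant on constructors supplies the one detail the paper leaves implicit.
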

On the other hand, we cannot hope to \emph{directly} reuse the results in Section~\ref{Sect:CTR2L}
when proving the existence of an embedding of CRSs into weak call-by-name 
$\lambda$-calculus: the same $\lambda$-term can have distinct normal forms in the two cases. 
It is widely known, however, that a continuation-passing translation
can be used to simulate call-by-value reduction by call-by-name reduction~\cite{Plotkin75tcs}.
The only missing tale is about the relative performances: do terms obtained via the CPS 
translation reduce (in call-by-name) to their normal forms in a number of
steps which is \emph{comparable} to the number of (call-by-value) steps to normal form
for the original terms? We conjecture the answer is ``yes'', but we leave the task of proving that
to a future work.}{}
\section{Conclusions}
We have shown that the most na\"\i{}ve cost models for weak call-by-value 
and call-by-name
$\lambda$-calculus
(each beta-reduction step has unitary cost)
and orthogonal constructor term rewriting (each rule application has unitary cost)
are linearly related. Since, in turn, this cost model for
$\lambda$-calculus is polynomially related to the actual cost of reducing
a $\lambda$-term on a Turing machine, 
the two machine models we considered are both \emph{reasonable} machines, when
endowed with their natural, intrinsic cost models (see also Gurevich's opus on Abstract State Machine
simulation ``at the same level of abstraction'', e.g.~\cite{Gurevich}).
This strong (the embeddings we consider are compositional), complexity-preserving
equivalence between a first-order and a higher-order model is the most important technical
result of the paper. 

Ongoing and future work includes the investigation of how much of this simulation could be
recovered either in a typed setting (see~\cite{SplawskiU99} for some of the difficulties),
or in the case of $\lambda$-calculus with strong reduction, where we reduce under an abstraction. 
Novel techniques have to be developed, since the analysis we performed in the present 
paper cannot be easily extended to these cases.

\subsection*{Acknowledgments}
The authors wish to thank Kazushige Terui for stimulating
discussions on the topics of this paper.

\condinc{\bibliographystyle{plain}}{\bibliographystyle{plain}}
\bibliography{wie}
\end{document}